\documentclass[lettersize,journal]{IEEEtran}
\usepackage{amsmath,amsfonts, amssymb, amsthm, mathtools}
\usepackage{xcolor}
\usepackage[ruled,vlined]{algorithm2e}
\usepackage{mathrsfs}
\usepackage{array}
\usepackage[caption=false,font=normalsize,labelfont=sf,textfont=sf]{subfig}
\usepackage{textcomp}
\usepackage{stfloats}
\usepackage{url}
\usepackage{placeins}
\usepackage{verbatim}
\usepackage{graphicx}
\usepackage{cite}
\newcommand{\adj}{\a}
\renewcommand{\deg}{\boldsymbol{D}}
\newcommand{\lap}{\boldsymbol{L}}

\renewcommand{\u}{\boldsymbol{u}}
\newcommand{\E}{\boldsymbol{E}}

\renewcommand{\v}{\boldsymbol{v}}
\newcommand{\err}[1]{\mathrm{err}(#1, S)}
\newcommand{\norm}[1]{\left\lVert#1\right\rVert}
\newcommand{\ones}[1]{\boldsymbol{1}_{#1}}
\newcommand{\F}{\mathcal{F}}
\newcommand{\ident}[1]{\boldsymbol{I}_{#1}}
\newcommand{\allones}[1]{\boldsymbol{J}_{#1}}
\newcommand{\allzeros}[1]{\boldsymbol{0}_{#1}}
\newcommand{\vopt}{\boldsymbol{v}_{\mathrm{opt}}}
\newcommand{\GSG}{\boldsymbol{E}^{\mathrm{SG}}}
\newcommand{\GEP}{\boldsymbol{E}^{\mathrm{EP}}}
\newcommand{\GPB}{\boldsymbol{E}^{\mathrm{soft}}}
\newcommand{\GBC}{\boldsymbol{E}^{\mathrm{BGC}}}

\newcommand{\GB}{\boldsymbol{E}^{\mathrm{B}}}
\newcommand{\GF}{\boldsymbol{E}^{\mathrm{FRC}}}
\newcommand{\BE}{\boldsymbol{E}}
\newcommand{\gaussian}[2]{\mathcal{N}(#1, #2)}
\newcommand{\muu}{\boldsymbol{\mu}}
\newcommand{\Sigmaa}{\boldsymbol{\Sigma}}
\newcommand{\floor}[1]{\left\lfloor #1 \right\rfloor}
\newcommand{\W}[1]{\mathcal{W}_{#1}}
\newcommand{\D}[1]{\mathcal{D}_{#1}}
\newcommand{\dd}[1]{\boldsymbol{d}_{#1}}
\newcommand{\xx}[1]{\boldsymbol{x}_{#1}}
\newcommand{\ww}[1]{\boldsymbol{w}_{#1}}
\newcommand{\DD}{\boldsymbol{D}}

\newcommand{\RR}{\mathbb{R}}
\newcommand{\NN}{\mathbb{N}}
\newcommand{\supp}[1]{\mathrm{supp}(#1)}

\newcommand{\prob}[1]{\mathbb{P}\left[#1\right]} 
\newcommand{\ex}[1]{\mathbb{E}\left[#1\right]}

\newcommand{\Cpsitwo}{C_{\psi_2}}
\newcommand{\psione}[1]{\|#1\|_{\psi_1} }
\newcommand{\psitwo}[1]{\|#1\|_{\psi_2} }
\newcommand{\sigmamax}{\sigma_{\max}}
\newcommand{\betamax}{\beta_{\max}}

\newcommand{\errc}{\mathrm{err}_{\rho,\F}(\GSG)}

\renewcommand{\a}{\boldsymbol{a}}
\renewcommand{\adj}{\boldsymbol{A}}
\newtheorem{definition}{Definition}
\newtheorem{theorem}{Theorem}
\newtheorem{lemma}{Lemma}
\newtheorem{corollary}{Corollary}

\newtheorem{remark}{Remark}

\begin{document}
\allowdisplaybreaks
\title{Probabilistic Gradient Coding via Structure-Preserving Sparsification}

\author{Yuxin Jiang, Wenqin Zhang and Lele Wang.~\IEEEmembership{}
        % <-this % stops a space

\thanks{Yuxin Jiang and Lele Wang are with Department of Electrical and Computer Engineering, University of British Columbia, Vancouver, BC, Canada, (email:\{yjiang28, lelewang\}@ece.ubc.ca).
Wenqin Zhang is with School of Cyber Science and Technology, Hubei Univeristy, Wuhan, Hubei, China, (email: wenqin\_zhang@hubu.edu.cn).
}
\thanks{A preliminary version of this work was presented at the 2024 IEEE International Symposium on Information Theory (ISIT), Athens, Greece, and accepted to the 2026 ISIT Workshop on Coding for New Applications.}}

% The paper headers
\markboth{
% Journal of \LaTeX\ Class Files,~Vol.~14, No.~8, August~2021
}%
{Shell \MakeLowercase{\textit{et al.}}: A Sample Article Using IEEEtran.cls for IEEE Journals}

\IEEEpubid{
% \TODO{0000--0000/00\$00.00~\copyright~2021 IEEE}
}
% Remember, if you use this you must call \IEEEpubidadjcol in the second
% column for its text to clear the IEEEpubid mark.

\maketitle

\begin{abstract}
Gradient coding is a distributed computing technique aiming to provide robustness against slow or non-responsive computing nodes, known as stragglers, while balancing the computational load for responsive computing nodes. Among existing gradient codes, a construction based on combinatorial designs, called BIBD gradient code, achieves the best trade-off between robustness and computational load in the worst-case adversarial straggler setting. However, the range of system parameters for which BIBD gradient codes exist is limited. In this paper, we overcome these limitations by proposing two new probabilistic gradient codes, termed the \emph{Sparse Gaussian} (SG) gradient code and the \emph{Expansion-Preserving} (EP) gradient code. 
Through probabilistic constructions, the former preserves the combinatorial structure of BIBDs, while the latter preserves key spectral properties. Both codes are based on a common two-step framework: first generating a random matrix and then applying distinct sparsification procedures. The SG gradient code constructs its encoding matrix from a correlated multivariate Gaussian distribution masked by Bernoulli random variables, while the EP gradient code derives its encoding matrix from sparsified expander-like graph structures that preserve key spectral properties. Experimentally, both codes achieve worst-case error performance comparable to that of the BIBD gradient code (when such a code with the same parameters exists). Moreover, they substantially extend the feasible range of system parameters beyond BIBD and soft BIBD gradient codes, offering practical and theoretically grounded solutions for large-scale distributed computing tasks.
\end{abstract}

\begin{IEEEkeywords}
Gradient codes, distributed computing, coding theory, graph theory, machine learning.
\end{IEEEkeywords}

\section{Introduction}
Due to the ever-increasing computing demands in various applications such as machine learning and cloud computing, it becomes essential for large systems to perform their computations in a distributed manner. In theory, distributed algorithms can substantially speed up computations compared to centralized algorithms. However, these speed-up gains may not be realized in practice due to the presence of underperforming or non-responsive computing nodes, termed as stragglers\cite{stragglers}. Thus, how to distribute computations that are robust against stragglers is a major challenge in the design of distributed computation systems. In pursuit of this objective, numerous strategies have been developed to mitigate the impact of stragglers. These include duplicating tasks across computing nodes~\cite{aktas2017effective} and, where feasible, excluding stragglers from the computation process~\cite{ananthanarayanan2013effective}. The primary challenge in these approaches lies in how to balance communication efficiency against straggler tolerance~\cite{zaharia2010spark}.

To address this challenge, methodologies grounded in coding theory have been explored. \textit{Gradient Coding} (GC) has emerged as an important tool. It involves dividing each training data into partitions and distributing them to certain computing nodes known as workers. These workers compute and send a linear combination of gradients to the master node, aggregating these to recover the gradient sum accurately. The utmost goal of GC is to recover the gradient sum exactly. Several gradient codes have been proposed under the exact recovery criterion, such as \textit{Fractional Repetition} gradient codes. However, a significant drawback of exact recovery is its potentially high computing workload as it increases with the number of stragglers.

Though exact recovery is desired, approximate recovery may be sufficient in many practical applications, such as stochastic gradient descent algorithms \cite{mania2017perturbed}. Instead of exactly recovering the gradient sum, approximate recovery finds an approximation of the gradient sum \cite{approx-gc}. The 
$\ell_2$-norm of the difference between the exact gradient sum and the approximate gradient sum (i.e. squared error) is a common performance metric. There are two common assumptions for the stragglers: (1) the stochastic stragglers, where computing nodes are straggled independently with some probability, and (2) adversarial stragglers, where stragglers can deliberately disrupt any set of $S$ computing nodes. Throughout this paper, we assume the adversarial straggler setting, which is more realistic for practical systems where obtaining a statistical estimation of the stragglers in real time is difficult. Under this assumption, we maximize the squared error over all possible straggling patterns and refer to it as \textit{the worst-case squared error}.

In the literature, many approximate gradient codes with small worst-case squared errors are introduced, for example, \cite{aktas2017effective, expander-GC, approx-gc, bibd, soft-bibd}.
Among these, one research direction in approximate gradient codes involves the utilization of the biadjacency matrix from expander graphs to develop gradient codes\cite{expander-GC}. Expander graphs are favourable for gradient coding due to their strong connectivity, which enhances the robustness against stragglers. It is shown that the upper bound of the squared error is proportional to the second largest eigenvalue of the normalized adjacency matrix for the expander graph~\cite{expander-GC}. This makes Ramanujan graphs a good candidate as they provide small second largest eigenvalues. As a special type of Ramanujan graphs, \textit{Balanced Incomplete Block Designs} (BIBDs) achieves the best error performance with a closed-form expression for the worst-case error. Moreover, the error depends on the straggling pattern only through the number of stragglers, not the specifics of the straggling pattern~\cite{bibd}.

Unfortunately, most existing constructions for expander graphs require extensive computer search \cite{marcus2013interlacing} and the set of system parameters for which expander graphs exist is limited~\cite{expander-limits}. In particular, the allowable parameters of BIBDs are notably limited~\cite{bibd-para}. 
% Consequently, alternative gradient codes have been developed to emulate BIBD gradient codes by generating probabilistic gradient codes, rather than deterministic ones. This is termed soft BIBD gradient codes in~\cite{soft-bibd}. It is important to note that while these designs expand the range of parameters of BIBDs, they are still restricted to binary gradient codes, that is, the encoding matrix has binary entries. Moreover, the construction of the encoding matrix involves sampling from a complicated high-dimensional distribution, which may be difficult to implement.

This motivates the following question: \emph{by allowing a real-valued encoding matrix instead of a binary one, can one construct, in polynomial time, a gradient code that broadens the feasible range of system parameters while retaining error performance comparable to that of BIBD gradient codes?} We answer this question affirmatively by proposing two probabilistic gradient codes: the \emph{Sparse Gaussian gradient code} (SG-GC) and the \emph{Expansion-Preserving gradient code} (EP-GC). Both constructions follow a common two-step principle: a random matrix is first generated and then sparsified via a code-specific procedure, with each step admitting a polynomial-time algorithm. Moreover, both codes support a wide range of system parameters, as illustrated in Fig.~\ref{fig:para-region} and Fig.~\ref{fig:EP-parameter-regime}. Along the construction process, SG-GC inherits key combinatorial properties of BIBD constructions and achieves error performance comparable to BIBD gradient codes with high probability, as established in Theorem~\ref{thm:SG-GC-error-performance}. For EP-GC, the error can be bounded explicitly, as shown in Theorem~\ref{thm:EP-GC-error-performance}. The performance of both codes is further validated experimentally, as shown by Fig.~\ref{fig:gradient-code-err-performance-comparison}.

\section{Problem Formulation}

Consider a computing machine that includes a master node, 
 $N$ workers, and a training set $\mathcal{D} = \{(\xx{i}, y_i)\}_{i=1}^M$,  where $N, M\in \NN^+$, $\xx{i}\in\RR^d$, and $y_i\in\RR$. The master node intends to train a model using gradient descent by distributing the gradient updates amongst the workers. Specifically, we wish to find the $\ww{}\in\RR^d$ that minimizes the empirical risk, i.e., $\min_{\ww{}\in \RR^d}\frac{1}{M}\sum^{M}_{i=1}\ell(\xx{i}, y_i; \ww{})$, where loss function $\ell(\xx{i}, y_i; \ww{})$ measures the distortion of prediction made by $\ww{}$ on $(\xx{i},y_i)$.

The optimization problem can be solved using the gradient descent algorithm. Specifically, the gradient of the loss at the current model $\ww{}^{(t)}$ is computed as $\sum^M_{i=1}\nabla\ell(\xx{i}, y_i; \ww{}^{(t)})$, and the model is updated iteratively based on this gradient. The gradient sum computation will be a bottleneck when dealing with large training data size $M$. To address this challenge, we distribute gradient computations across multiple workers, where each worker computes some subsets of the $M$ gradients whose sum is returned to the master to update the model.
 
 In a distributed  gradient computation setting, the training set $\mathcal{D}$ can be partitioned to $K$ disjoint partial data set 
 $\mathcal{D}_i$ for $i\in[K]$, where $|\mathcal{D}_i| =M/K$.  Define a partial gradient vector corresponding to $\mathcal{D}_i$ as $\boldsymbol{d}_i \triangleq \sum_{(\xx{i},y_i)\in\mathcal{D}_i}\nabla\ell(\xx{i},y_i;\ww{})$. 
 
 In the context of $S$ stragglers, a gradient coding scheme contains two steps: distributing data to workers based on an encoding matrix and approximating the gradient sum from the received $N-S$ partial gradients.

A gradient code can be characterized by an encoding matrix $\BE\in \RR^{K\times N}$, where row $i\in [K]$ corresponds to data partition $\D{i}$ and column $j\in [N]$ corresponds to worker $j$. Worker $j$ will return $\sum_{i=1}^K \BE_{ij} \boldsymbol{d}_i$ where $\BE_{ij}$ is the linear combination coefficient of the partial gradient corresponding to $(\boldsymbol{x}_i, y_i)$.

For a gradient code, each data piece is distributed to an equal number of $R$ workers 
% (the number of non-zero entries in each row is $R$)
and each worker computes an equal number of $L$ partial gradients 
% (the number of non-zero entries in each column is $L$)
. The goal of the master node is to recover the gradient sum exactly
\begin{equation*}
     \sum_{i = 1}^K \dd{i} =\DD\ones{K},
\end{equation*}
where $\DD = [\dd{1}, \dd{2}, \dots, \dd{K}]$. 
 % Let $\F \subset [N]$ be the set of non-straggling workers. We have $|\F| = N-S$. If $\F\not = \emptyset$, $\BE$ is reduced to a sub-matrix $\boldsymbol{E}_{\F} \in\mathbb{R}^{K\times (N-S)}$ with columns indexed by $\F$.
 % Therefore, the master node will receive $\DD \boldsymbol{E}_{\F}$ where each entry is a partial gradient sum by a non-straggling worker. To approximate the gradient sum, the master node finds a linear combination $\v \in \mathbb{R}^{N-S}$ of the received gradient vector $\DD \boldsymbol{E}_{\F} \v$, which is also referred as a decoding vector of $\boldsymbol{E}_{\F}$.
% \lele{[There is a gap in the flow from the previous paragraph to this paragraph.]}
Though an exact recovery of the full gradient sum is ideal, it is often unrealistic in practice. Fortunately, many practical applications only require an approximate recovery. This shifts the goal toward producing an estimate of the gradient sum that minimizes the least-squares error relative to the exact value.

Let $\F \subset [N]$ denote the set of non-straggling workers, with $|\F| = N-S$.
Rather than resizing the encoding matrix $\BE \in \mathbb{R}^{K\times N}$ for each $\F$, 
we keep the full matrix $\BE$ and encode the non-straggler pattern in the decoding vector.
% Specifically, we identify $\mathbb{R}^{|\F|}$ with the subspace $\{x \in \mathbb{R}^N : \supp{x} \subseteq \F\}$ via zero-padding on $\F^c$. For any $\v \in \mathbb{R}^{|\F|}$, let $\v_\F \in \mathbb{R}^N$ denote its zero-padded lift, i.e., $(\v_\F)_i = \v_i$ for $i \in \F$ and $(\v_\F)_i = 0$ for $i \notin \F$.} 
% \lele{[The index for $(\v_\F)_i$ and the index for $\v_i$ are different.]}
% Thus, the difference between the target sum and the approximate sum can be expressed as
% \begin{equation*}
%     \DD \boldsymbol{E}_{\F} \v - \DD\ones{K} = \DD(\boldsymbol{E}_{\F}\v -\ones{K}).
% \end{equation*}
% Since $\DD$ is unknown to us, the goal is to find a decoding vector that minimizes the squared $\ell_2$-norm of the difference. So, given an encoding matrix $\BE\in\RR^{K\times N}$ and non-straggler pattern $\F \subset [N]$, the optimal decoding vector is
%     $$\vopt(\boldsymbol{E}_{\F}) \triangleq \arg \min_{\v \in\mathbb{R}^{|\F|}} \norm{\boldsymbol{E}_{\F}\v - \ones{K}}_2^2.$$
% The performance metric is defined as follows.
Hence, the master node receives the partial gradient sums $\DD \BE \v_\F$, 
where each nonzero coordinate corresponds to a non-straggling worker. 
To approximate the gradient sum, the master node finds a decoding vector $\v_\F \in \mathbb{R}^N$ 
supported on $\F$ that minimizes the squared $\ell_2$-norm of the reconstruction error:
\begin{equation*}
    \vopt(\F) \triangleq \arg\min_{\substack{\v_\F \in \mathbb{R}^N \\ \supp{\v_\F} \subseteq \F}} 
    \|\BE \v_\F - \ones{K}\|_2^2.
\end{equation*}
The difference between the target gradient sum and the approximate reconstruction can be expressed as
\begin{equation*}
    \DD \BE \v_\F - \DD\ones{K} = \DD(\BE\v_\F -\ones{K}).
\end{equation*}
Since $\DD$ is unknown, we evaluate the worst-case decoding error over all possible non-straggler sets.

% \begin{definition}
% \label{def:error}
%     For an encoding matrix  $\BE\in\RR^{K\times N}$, the normalized worst-case squared error (the error) with $S$ stragglers is defined as
% \begin{equation}
%  \err{\boldsymbol{E}} = \frac{1}{K}
% \max_{\substack{\F \subset [N]\\ |\F| = N-S}}
% \norm{\boldsymbol{E}_{\F}\vopt - \ones{K}}^2_2.
% \end{equation}
% \end{definition}
\begin{definition}
\label{def:error}
For an encoding matrix $\BE\in\mathbb{R}^{K\times N}$, 
the normalized worst-case squared error (the \emph{error}) with $S$ stragglers is defined as
\begin{equation}
\err{\BE}
= \frac{1}{K}\max_{\substack{\F \subset [N]\\ |\F| = N-S}}
\|\BE\vopt(\F) - \ones{K}\|_2^2.
\end{equation}
\end{definition}

When the encoding matrix $\BE$ is random, the above error is a random variable. We will study the high probability upper bound on the random error.

Throughout the paper, $\allones{n\times m}$ denotes the $n\times m$ all-one matrix, $\allzeros{n\times m}$ the $n\times m$ all-zero matrix, $\ident{n}$ denotes an $n\times n$ identity matrix and $\ones{n}$ the all-one column vector in $\mathbb{R}^n$. The $\ell_p$-norm is written as $\norm{\cdot}_p$. For $x>0$, $\ln(x)$ denotes the natural logarithm and $\log(x)$ the base-2 logarithm. For a matrix $\boldsymbol{A}\in \RR^{n\times m}$, its singular values are denoted as $\sigma_1(\boldsymbol{A}) \geq \dots \geq \sigma_{\max\{n, m\}}(\boldsymbol{A})$ and its matrix norm is defined as $\norm{\boldsymbol{A}}_2 = \sigma_1(\boldsymbol{A})$. If $\boldsymbol{A}\in \RR^{n\times n}$ is non-singular, its eigenvalues are denoted as $\lambda_1(\boldsymbol{\boldsymbol{A}}) \geq \dots \geq \lambda_n(\boldsymbol{A})$ and thus its matrix norm is defined as $\norm{\boldsymbol{A}}_2 = \lambda_1(\boldsymbol{A})$.

% \jossierevised{Throughout the paper, a graph is denoted as $G$.}
% Let $\BE$ be a matrix, $\BE^\top$ be its transpose and $\BE_{ij}$ be its $(i,j)$-th entry. 
% \jossierevised{
% % If $\BE$ is square and diagonalizable, 
% \lele{Denote by $\lambda_i(\BE)$ the eigenvalues of $\BE$ when they exist,}
% % its eigenvalues are denoted by $\lambda_i(\BE)$, 
% ordered such that $\lambda_1(\BE) \ge \lambda_2(\BE) \ge \cdots$. \lele{[A matrix does not need to be diagonalizable to have eigenvalues.]}
% If $\BE$ is not necessarily invertible, \lele{[Invertibility of a matrix is a different concept. A matrix can be not invertible, but has eigenvalues.]} $\sigma_i(\BE)$ denotes its $i$-th singular value in descending order, and the corresponding left and right singular vectors are written as $\u_i(\BE)$ and $\v_i(\BE)$, satisfying $\BE\v_i(\BE) = \sigma_i(\BE)\u_i(\BE)$ and $\BE^\top\u_i(\BE) = \sigma_i(\BE)\v_i(\BE)$.}
% For positive integers $n$ and $k$, we denote an $n\times n$ identity matrix by $\ident{n}$, an $n\times k$ all-one matrix by $\allones{n\times k}$ and an $n\times k$ all-zero matrix by $\allzeros{n\times k}$. Let $\v \in \RR^{n}$ be a column vector and 
% $\ones{n}$ be an all-one column vector. The $\ell_p$-norm is represented by $\norm{\cdot}_p$. \lele{[This paragraph for notation should go after the problem formulation.]}

\section{Existing Gradient Codes}
This section summarizes the constructions of four existing gradient codes.
\subsection{Fractional Repetition Code}
The \textit{Fractional Repetition Code} (FRC) is the first gradient code proposed in~\cite{frc}. The encoding matrix of an $(N, K, L, R)$-FRC gradient code is given by
\begin{equation*}
\GF = \begin{bmatrix}
        \allones{L\times R} & \allzeros{L\times R} & \cdots & \allzeros{L\times R} \\
        \allzeros{L\times R} & \allones{L\times R} & \cdots & \allzeros{L\times R} \\
        \vdots & \vdots & \ddots & \vdots \\
        \allzeros{L\times R} & \allzeros{L\times R} & \cdots & \allones{L\times R}
        \end{bmatrix},
\end{equation*}

 The empirical analysis demonstrates that the average error of the FRC is commendable. Nevertheless, it is vulnerable to adversarial stragglers, which gives the exact error:
 \begin{equation*}
    \err{\GF} = \frac{L}{K}\floor{\frac{S}{R}}.
\end{equation*}

\subsection{Bernoulli Gradient Code and its regularized variant}
To address the drawback inherent in the FRC, a probabilistic gradient code called the \textit{Bernoulli Gradient Code} (BGC) is proposed in~\cite{BGC}. Each entry in the encoding matrix $\GBC$ of an $(N,K,L)$-BGC is generated i.i.d. according to a Bernoulli distribution $\boldsymbol{E}_{ij}^\mathrm{BGC}\sim$ Bern$(\frac{L}{K})$.

To restrict the computation load per worker to a maximum of $2L$, \textit{regularized Bernoulli Gradient Code} (rBGC) is proposed as a variant of BGC. An $(N, K, L)$-rBGC begins with the same initial structure as an $(N, K, L)$-BGC and then sets entries to zero in columns that exceed $2L$ non-zero entries until only $L$ non-zero entries remain.

\subsection{Bipartite-expander-based Gradient Code\cite{expander-GC}}
The \textit{Bipartite-expander-graph-based Gradient Code} (BEG-GC) leverages the adjacency matrix or biadjacency matrix of biregular bipartite expander graphs. In this paper, we will primarily focus on the one adopting the biadjacency matrix.

An $(N, d)$-BEG-GC corresponds to a $d$-regular bipartite expander graph $G = (\W{}\cup \D{},\mathcal{E})$, where $|\W{}| = |\D{}| = N$. Let $\E^{\mathrm{BEG}}$ be the biadjacency matrix of $G$. The normalized biadjacency matrix $\bar\E^{\mathrm{BEG}} \triangleq \tfrac{1}{d} \E^{\mathrm{BEG}}$ is the encoding matrix for the BEG-GC. Fix a non-straggler set $\F \subset [N]$ and $|\F| = N-S$. Its decoding vector takes the form
\begin{equation}
    \v_\F^{\mathrm{BEG}} = \ones{N}+\a_{\F}
    , \text{ where}\quad
    \a_\F = 
    \begin{cases}
        -1, \quad i\not \in \F\\
        \frac{S}{N-S} \quad i\in \F.
    \end{cases}
\label{eq:decoding_vec}
\end{equation} Its error is upper bounded as
\begin{equation*}
%\label{eq:expander-code-error}
    \err{\bar\E^{\mathrm{BEG}}} \leq \tfrac{1}{N}\Big(\tfrac{\sigma_2(\bar\E^{\mathrm{BEG}})}{d}\Big)^2\tfrac{NS}{N-S}.
\end{equation*}%

\subsection{BIBD Gradient Code}
A $(v, b, k, r, \lambda)$-BIBD is characterized by a set $\mathcal{V}$ of $v$ points and a collection $\mathcal{B}$ of $b$ blocks, each containing exactly $k$ points, such that each point appears in exactly $r$ blocks and every pair of distinct points is included in exactly $\lambda$ blocks~\cite{bibd-design}. The incidence matrix $\boldsymbol{M}$ for this design is a $ v \times b$ binary matrix, where $\boldsymbol{M}_{ij} = 1$ if point $i$ is in block $j$, and $\boldsymbol{M}_{ij} =0$ otherwise. A BIBD is \textit{symmetric} when $\boldsymbol{M} = \boldsymbol{M}^\top$. Given a BIBD with incidence matrix $\boldsymbol{M}$, its dual-design is defined by a design with incidence matrix $\boldsymbol{M}^\top$. Throughout the paper, we only consider symmetric BIBD design or dual BIBD design.

Given the incidence matrix $\boldsymbol{M}$ of a BIBD design, a gradient code $(N = b, K = v, L = k, R = r)$-GC is constructed by setting the encoding matrix $\GB = \boldsymbol{M}$. A sufficient condition for the existence of such code are: for $N, K, L, R, \lambda \in \mathbb{N}^+$, $NL = KR$ and $R(L-1) = \lambda(K-1).$ This code is known to be robust against adversarial stragglers. The corresponding worst-case error when there are $S$ stragglers is given by
\begin{equation}
\label{eq:bibd-error}
\err{\GB} = 1 - \frac{1}{K}\frac{L^2 (N-S)}{L+\lambda(N-S-1)}.
\end{equation}
Moreover, it is known that for any straggling pattern $\F$ with $|\F| = N-S$, the corresponding optimal decoding vector $\vopt(\GB_\F)$ is the same constant vector and the corresponding squared error is identical.

\subsection{Soft BIBD Gradient Code}
A Soft BIBD gradient code arises from the aim to extend the range of system parameters inherent in BIBD gradient codes, while preserving its superior error performance~\cite{soft-bibd}. An $(N, K, L, R)$-Soft BIBD employs a probabilistic approach to generate its encoding matrix, denoted by $\GPB$, which approximates the encoding matrix of an $(N, K, L, R, \lambda)$-BIBD gradient code.

It is shown in~\cite{soft-bibd} that the average error of the soft BIBD gradient code can be as good as that of BIBD with the same parameters. However, the construction of soft BIBD requires the encoding matrix to have binary entries. This potentially limits the set of system parameters the gradient code can take because the encoding matrix can take any real-valued entries in the general gradient code framework. So, in the remaining part of this paper, we demonstrate how the set of system parameters can be enlarged by exploiting real-valued encoding matrices.

\section{Sparse Gaussian Gradient Code}
\label{sec:SG-GC}
% \lele{In this section, we propose the sparse Gaussian gradient code, which ... [We need a transition paragraph here. Please briefly summarize its properties.]}
In this section, we propose the sparse Gaussian gradient code, which extends the parameter regime inherited from the BIBD gradient code while preserves comparably good theoretical performance. BIBD gradient codes offer strong guarantees but exist only for limited parameter regimes, while random constructions are more flexible yet difficult to control. A real-valued framework bridges this divide. By combining prescribed sparsity with Gaussian weights, the sparse Gaussian gradient code broadens the achievable parameter range while preserving analytical tractability and strong approximation performance.

\subsection{Construction}

\label{sec:SG-GC-construction}
The encoding matrix $\GSG$ is constructed by the element-wise product of two $K\times N$ matrices $\boldsymbol{X}$ and $\boldsymbol{B}$, i.e., 
\begin{equation}
    \GSG_{ij} = \boldsymbol{X}_{ij} \boldsymbol{B}_{ij}.
\end{equation}
% where $\odot$ denotes the element-wise multiplication, i.e., $(\boldsymbol{X}~\odot~\boldsymbol{B})_{ij}= ~\boldsymbol{X}_{ij}\boldsymbol{B}_{ij}$.

To generate $\boldsymbol{X}$, we choose a correlated multivariate normal distribution $\gaussian{\muu}{\Sigmaa}$ as the joint distribution for each row, where $\muu \in \RR^{N}$ and $\Sigmaa \in \RR^{N\times N}$ defined as 
\begin{equation}
     \muu = a\ones{N}
\end{equation}
and 
\begin{equation}
\label{def:cov-matrix}
    \Sigmaa = \begin{bmatrix}
        b & c & \cdots & c \\
        c & b & \cdots & c \\
        \vdots & \vdots & \ddots & \vdots \\
        c & c & \cdots & b
        \end{bmatrix} .
\end{equation}
The $K$ row vectors
\[
\left[\boldsymbol{X}_{i1}, \boldsymbol{X}_{i2}, \dots, \boldsymbol{X}_{iN}\right], \quad i \in [K],
\]
are generated i.i.d. according  $\gaussian{\muu}{\Sigmaa}$. Note that different rows of $\boldsymbol{X}$ are independent, while different columns of $\boldsymbol{X}$ are correlated due to the correlation in $\Sigmaa$.

Now, to generate matrix $\boldsymbol{B}$, each entry $~\boldsymbol{B}_{ij}\overset{\mathrm{i.i.d.}}{\sim}\mathrm{Bern}~(\gamma)$. This way, we ensure that a $\gamma$ fraction of the encoding matrix is non-zero in expectation. 

For the construction of an $(N, K, L, R, \lambda, \gamma)$-SG gradient code which simulates an $(N, K, L, R, \lambda)$-BIBD gradient code, three criteria have to be satisfied.
\begin{enumerate}
    \item To mimic the BIBD property that each column has $L$ ones, we require for any $i \in [K]$ and $j \in [N]$,
    \begin{equation}
    \label{row-exp}
        \ex{\GSG_{ij}} = L/K;
    \end{equation}
    \item To mimic the BIBD property that the intersection between any two distinct columns is $\lambda$, we require for any $i\in [K]$ and $j,k \in [N]$, $ j \neq k,$
    \begin{equation}
    \label{diff-intsec-exp}
        \ex{\GSG_{ij}\GSG_{ik}} = \lambda/K;
    \end{equation}
    \item 
    To ensure the intersections between any column and itself is $L$, we require for any $i\in [K]$ and $j \in [N]$,
    \begin{equation}
    \label{same-intsec-exp}
         \ex{\GSG_{ij}\GSG_{ij}} = L/K.
    \end{equation}
\end{enumerate}

Subsequently, for $i\in [K]$ and $ j,k \in [N]  $, $j\not=k$: $~\ex{\boldsymbol{X}_{ij}\boldsymbol{B}_{ij}}=L/K$, $~\ex{\boldsymbol{X}_{ij}\boldsymbol{X}_{ik}\boldsymbol{B}_{ij}\boldsymbol{B}_{ik}}=\lambda/K$ and $\ex{\boldsymbol{X}_{ij}^2\boldsymbol{B}_{ij}^2} = L/K$ can be obtained from Equations~\eqref{row-exp},~\eqref{diff-intsec-exp} and~\eqref{same-intsec-exp}, respectively.
Since the entries of the encoding matrix are independent of the i.i.d. Bernoulli random variables, a system of equations is shown as follows:
\begin{equation}
\label{eq:sys-para-mid}
    \begin{cases}
        a\gamma \overset{(a)}{=} L/K\\
        (c+a^2) \gamma^2 \overset{(b)}{=} \lambda/K\\
        (b+a^2)\gamma \overset{(c)}{=} L/K
    \end{cases},
\end{equation}
where (a), (b) and (c) follow from $\ex{\boldsymbol{X}_{ij}} =a$, $\ex{\boldsymbol{X}_{ij}\boldsymbol{X}_{ik}} =\mathrm{Cov}(\boldsymbol{X}_{ij}\boldsymbol{X}_{ik})+ \ex{\boldsymbol{X}_{ij}}\ex{\boldsymbol{X}_{ik}} =c+a^2$ and $\ex{\boldsymbol{X}_{ij}^2} = \mathrm{Var}(\boldsymbol{X}_{ij}) + \ex{\boldsymbol{X}_{ij}}^2 = b + a^2$, respectively.
Hence, we have
\begin{equation}
\label{eq:sys-para}
    \begin{cases}
        a = L/(K\gamma)\\
        b = (\frac{L\gamma}{K} - \frac{L^2}{K^2})/\gamma^2\\
        c = (\frac{\lambda}{K} - \frac{L^2}{K^2})/\gamma^2
    \end{cases},
\end{equation}
where $\gamma$ is restricted to some range, which will be discussed in the following subsection.

\subsection{Feasible Parameter Regime}
In the following, we provide a sufficient condition for the existence of the above Sparse Gaussian gradient code.
\label{sec:existence}
\begin{theorem}
\label{thm:para-region}
    An $(N, K, L, R, \lambda, \gamma)$ Sparse Gaussian gradient code exists if $L \leq K$ and
    \begin{align*}
        % L&\leq K,\\
        \max\left\{\frac{\lambda}{L},  \frac{(N-1)(L^2-K\lambda) + L^2}{KL}\right\} &\leq \gamma \leq 1.
    \end{align*}
\end{theorem}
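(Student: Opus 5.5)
The plan is to show that the conditions of the theorem are exactly what make the construction of Section~\ref{sec:SG-GC-construction} well defined. That construction needs two ingredients. The Bernoulli masks $\boldsymbol{B}_{ij}\sim\mathrm{Bern}(\gamma)$ require $\gamma\in(0,1]$. The rows of $\boldsymbol{X}$ drawn from $\gaussian{\muu}{\Sigmaa}$ require $\Sigmaa$ to be a valid covariance matrix, i.e., symmetric positive semidefinite; a degenerate Gaussian is allowed. Once both hold, we take $a,b,c$ from \eqref{eq:sys-para}. By the derivation of \eqref{eq:sys-para-mid}, this choice gives \eqref{row-exp}, \eqref{diff-intsec-exp} and \eqref{same-intsec-exp} automatically, so an $(N,K,L,R,\lambda,\gamma)$ Sparse Gaussian gradient code exists. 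The mean $\muu=a\ones{N}$ imposes no constraint.

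The key step is the spectrum of $\Sigmaa$. We write $\Sigmaa=(b-c)\ident{N}+c\,\allones{N\times N}$. Since $\allones{N\times N}$ has eigenvalue $N$ on $\ones{N}$ and eigenvalue $0$ on $\ones{N}^\perp$, the eigenvalues of $\Sigmaa$ are $b+(N-1)c$ (simple) and $b-c$ (multiplicity $N-1$). Hence $\Sigmaa\succeq 0$ if and only if both $b-c\ge 0$ and $b+(N-1)c\ge 0$. We then substitute \eqref{eq:sys-para} and multiply through by the positive quantities $\gamma^2$ and $K^2$.

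\begin{itemize}
\item For the first condition, $b-c=\big(\tfrac{L\gamma}{K}-\tfrac{\lambda}{K}\big)/\gamma^2\ge 0$ is equivalent to $\gamma\ge \lambda/L$.
\item For the second condition, $b+(N-1)c\ge0$ is equivalent to $KL\gamma-L^2+(N-1)(K\lambda-L^2)\ge 0$. This rearranges to $\gamma\ge \frac{(N-1)(L^2-K\lambda)+L^2}{KL}$.
\end{itemize}

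Together with $\gamma\le 1$ for the Bernoulli parameter, these are precisely the hypotheses of the theorem. Positivity $\gamma>0$ follows from $\gamma\ge\lambda/L>0$ for $\lambda,L\in\NN^+$, so $a=L/(K\gamma)$ is well defined.

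It remains to check consistency with the assumption $L\le K$, and this bookkeeping is the part I expect to need care, more than the spectral computation. The diagonal entry $b$ must be nonnegative, i.e., $\gamma\ge L/K$. This is implied by positive semidefiniteness, since $b=\tfrac1N\mathrm{tr}(\Sigmaa)$ is the average of the eigenvalues. The hypothesis $L\le K$ guarantees that the target $\ex{\GSG_{ij}}=L/K\le 1$ is compatible with a column having at most $K$ nonzero entries, and that the bounds above do not conflict with $\gamma\le1$ through the implied requirement $L/K\le\gamma$. With all of these verified, sampling $\boldsymbol{X}$ and $\boldsymbol{B}$ as prescribed produces the code, which proves Theorem~\ref{thm:para-region}.
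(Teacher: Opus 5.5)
Your proof is correct and is the natural argument. The paper does not write out a proof of this theorem, but its two lower bounds on $\gamma$ are exactly $b-c\ge 0$ and $b+(N-1)c\ge 0$ after substituting Eq.~\eqref{eq:sys-para}, i.e., positive semidefiniteness of $\Sigmaa$; the paper itself invokes $b-c\ge 0$ for this reason in the proof of Lemma~\ref{lemma:psi-two-upper-bound}. Together with $\gamma\le 1$ for the Bernoulli mask, this is precisely your argument.

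One correction to your last paragraph: nothing remains to check for $L\le K$. Your trace observation gives $\gamma\ge L/K$, and combined with $\gamma\le 1$ this already forces $L\le K$. So that hypothesis is redundant, not something the argument relies on.
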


Guided by Theorem~\ref{thm:para-region}, Fig. \ref{fig:para-region} provides a visualization of a feasible parameter space for $L$ and $R$ in sparse Gaussian gradient codes with specified values of $N, L/K$ and $R/N$. In comparison, we plot several BIBDs with similar density $L/K$ and the parameter region for soft BIBD gradient codes with the same settings of $N, L/K$ and $R/N$. As illustrated, the parameter region for the proposed Sparse Gaussian gradient code expands the choices of feasible parameters compared to those of the soft BIBD and the BIBD gradient codes. 

\begin{figure}[!tbp]
    \centering
    \includegraphics[width=1\linewidth]{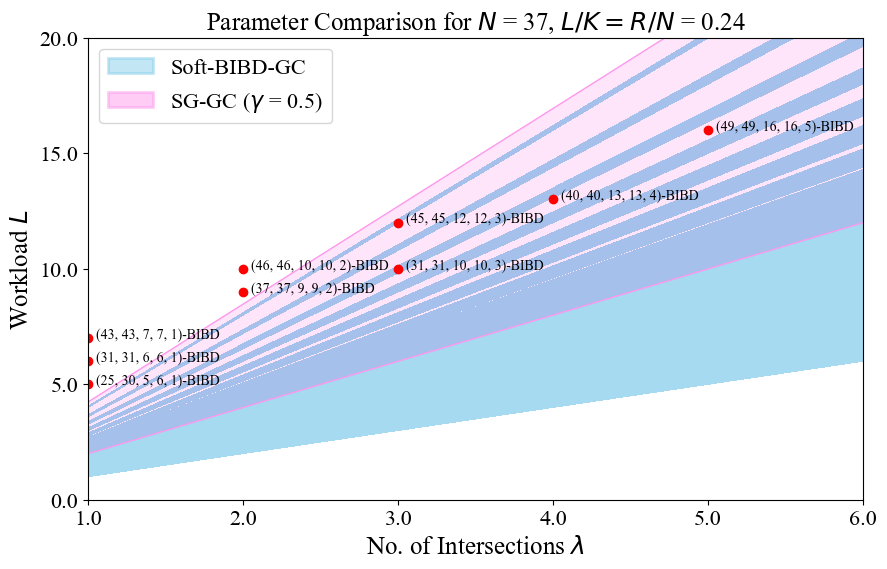}\caption{The region in pink is the parameter region for our proposed SG gradient codes. In contrast, the region in blue is the parameter region for soft-BIBD gradient codes with the probabilistic method. The red dots are combinatorial BIBDs with a similar density, i.e., $0.24\pm 0.05$.}
    \label{fig:para-region}
\end{figure}%

\subsection{Theoretical Performance Evaluation}
\label{sec:error-comparison}
In this section, we provide a high probability upper bound on the \emph{random} worst-case error of the sparse Gaussian gradient code. In the regime of parameters specified in Theorem~\ref{thm:SG-GC-error-performance}, the proposed sparse Gaussian gradient code can have an error performance comparable to that of the symmetric or dual BIBD gradient code with the same parameters.
\begin{theorem}
\label{thm:SG-GC-error-performance}
Consider $(N, K, L, R, \lambda, \gamma)$ Sparse Gaussian gradient codes with encoding matrix $\GSG$ and $K = \Theta(N), L = \Theta(N), \lambda = \Theta(N), \gamma = \Theta(1)$. For $0<\delta<1/2$ and $0<\alpha<2\delta$, let the number of stragglers $S = O(N^{\alpha})$ and $\rho = \frac{L}{L+\lambda(N-S-1)}$. If $\rho^2(N-~S)(b+(N-S-1)c)<2$ and $c \ge 0$, then
\begin{align*}
    &\prob{\err{\GSG}\le \err{\GB}+\tfrac{1}{K^{1/2-\delta}}}
    \ge 1-e^{-\Theta(N^{2\delta})},
\end{align*}
where $\err{\GB}$ is the worst-case error of symmetric or dual $(N, K, L, R, \lambda)$-BIBD gradient code given in Equation~\eqref{eq:bibd-error}.
\end{theorem}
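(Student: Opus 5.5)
Since the error is a minimum over decoding vectors, we upper bound it by plugging in one fixed, non-optimal decoder: the constant vector on $\F$ that is optimal for the BIBD code. Concretely, for each $\F$ with $|\F|=N-S$ let $\v_\F=\rho\ones{\F}$, the zero-padded vector equal to $\rho=\frac{L}{L+\lambda(N-S-1)}$ on $\F$. Then
\[
\|\GSG\v_\F-\ones{K}\|_2^2=\rho^2\ones{\F}^\top(\GSG)^\top\GSG\ones{\F}-2\rho\,\ones{K}^\top\GSG\ones{\F}+K .
\]
By the design equations \eqref{row-exp}--\eqref{same-intsec-exp}, the Gram matrix $G=(\GSG)^\top\GSG$ has $\ex{G_{jj}}=L$ and $\ex{G_{jk}}=\lambda$. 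Hence the expectation of the right-hand side is
\[
\rho^2\big((N-S)L+(N-S)(N-S-1)\lambda\big)-2\rho(N-S)L+K=K-\rho L(N-S),
\]
and dividing by $K$ gives exactly $\err{\GB}$ from \eqref{eq:bibd-error}. It therefore suffices to show that, uniformly over all $\binom{N}{S}$ sets $\F$, the centered quantity
\[
Z_\F=\tfrac1K\Big(\|\GSG\v_\F-\ones{K}\|_2^2-\ex{\|\GSG\v_\F-\ones{K}\|_2^2}\Big)
\]
is at most $K^{-(1/2-\delta)}$ with probability $1-e^{-\Theta(N^{2\delta})}$.

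The key structural observation is that the rows of $\GSG$ are i.i.d. Writing $Y_i=\sum_{j\in\F}\GSG_{ij}$, we have
\[
\|\GSG\v_\F-\ones{K}\|_2^2=\sum_{i=1}^K(\rho Y_i-1)^2,
\]
a sum of $K$ i.i.d. terms. Each $\rho Y_i$ is a Gaussian mixture: conditional on the mask row $\boldsymbol B_{i\cdot}$ with $m_i$ ones in $\F$, it is Gaussian with mean $\rho a m_i$ and variance $\rho^2(m_i b+m_i(m_i-1)c)$. Since $c\ge0$, this variance is at most $\rho^2\big((N-S)b+(N-S)(N-S-1)c\big)$. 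The hypothesis $\rho^2(N-S)(b+(N-S-1)c)<2$ gives a uniform $O(1)$ bound on this variance, and $\rho a m_i\le \rho a N=O(1)$. So $\rho Y_i-1$ is sub-Gaussian with $\psitwo{\rho Y_i-1}=O(1)$, uniformly in $\F$. Consequently $(\rho Y_i-1)^2$ is sub-exponential with $\psione{\cdot}=O(1)$. I would use the $\psi_1,\psi_2$ norm facts, which the macros indicate the paper sets up.

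Bernstein's inequality for sums of i.i.d. centered sub-exponential variables then gives
\[
\prob{Z_\F>t}\le\exp\!\big(-cK\min(t^2,t)\big).
\]
With $t=K^{-(1/2-\delta)}$ this becomes $\exp(-\Theta(K^{2\delta}))=\exp(-\Theta(N^{2\delta}))$, using $K=\Theta(N)$. A union bound over the $\binom{N}{S}\le e^{S\ln N}=e^{O(N^{\alpha}\ln N)}$ straggler sets costs only a lower-order factor, because $\alpha<2\delta$. Finally, $\err{\GSG}\le\max_\F\frac1K\|\GSG\v_\F-\ones{K}\|_2^2$, so on the good event the claimed bound holds.

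The main obstacle is the uniform sub-Gaussian bound, specifically making the constants independent of $N$ so that Bernstein gives the exponent $\Theta(K^{2\delta})$ and not something smaller. The mean term uses $\rho=\Theta(1/N)$, from $\lambda=\Theta(N)$, together with $a=\Theta(1)$. The mixture over the Bernoulli mask must be handled with care. I would first condition on $\boldsymbol B$, bounding the conditional $\psi_2$ norm of the Gaussian part using the variance hypothesis. I would then note that the mean part $\rho a(m_i-\ex{m_i})$ is a bounded, Hoeffding-type sub-Gaussian term with parameter $O(\rho\sqrt N)=o(1)$, and combine the two pieces via the triangle inequality for $\psitwo{\cdot}$.
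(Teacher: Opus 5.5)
Your proposal is correct and matches the paper's proof in every essential step. Both arguments use the fixed decoder $\rho\ones{\F}$, whose expected squared error equals $K\cdot\err{\GB}$, then an $O(1)$ bound on $\psione{X_i^2}$ for $X_i=\rho\sum_{j\in\F}\GSG_{ij}-1$, then Bernstein's inequality at deviation $K^{1/2+\delta}$ (your normalized $K^{-(1/2-\delta)}$), and finally a union bound over the $\binom{N}{S}$ straggler sets, which is absorbed because $\alpha<2\delta$. The only difference is that you get the sub-Gaussian bound from a conditional Gaussian-mixture plus triangle-inequality argument, whereas the paper computes the moment generating function in closed form; your route reaches the same $O(1)$ constant more cleanly.
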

\begin{remark}
    Note that $\frac{1}{K^{1/2-\delta}} =o(1)$ and $e^{-\Theta(N^{2\delta})} = o(1)$. This means in the specified parameter regime, with high probability, the error associated with the sparse Gaussian gradient code concentrates at or below the error of the BIBD Gradient Code with the same parameters.
\end{remark}
The proof of Theorem \ref{thm:SG-GC-error-performance} is provided in Appendix \ref{proof of thm:SG-GC-error-performance}.

\section{Expansion-Preserving Gradient Code Construction}
% \lele{[A transition paragraph here. Briefly explain the limitation of the first construction, and motivate the second construction.]}
In this section, we introduce a new gradient coding scheme called \textit{expansion-preserving gradient code} (EP-GC). 
We notice that the error of an BEG-GC can be upper bounded by a function of the second largest eigenvalue of its encoding matrix when all other parameters are fixed. The EP-GC builds on this insight by first generating an initial encoding matrix with a small second-largest eigenvalue (e.g., via repeated sampling), and then appending a row and column to enforce fixed row and column sums. This structure allows the use of \textsc{DegreePreservingSparsify}, which both sparsifies the matrix and tightly controls changes to the second largest eigenvalue. As a result, we can track and bound the corresponding change in error throughout the sparsification process.

\subsection{Construction}
\label{sec:sparse-graph-gradient-code-encoding-matrix-construction}
Formally, the encoding matrix of an $(N,c, d,\varepsilon)$-EP-GC is constructed in the following three steps:
\begin{enumerate}
    \item \textbf{Initial Matrix Generation:} 
    % Construct a symmetric $(N-1)\times (N-1)$ random matrix
    % \[
    %     \GEP_0 = [|\GEP_{ij}|]_{1\le i,j\le N-1}, \quad 
    %     a_{ij} = a_{ji},
    % \]
    % where $\GEP_{ij} \overset{\text{i.i.d.}}{\sim} \mathcal{N}(0,1) \text{ for } i\le j$
    % and $\GEP_{ij} = \GEP_{ji}$ for $i > j$.
    % This ensures $\GEP_0$ is symmetric with independently sampled half-normal distributed
    % entries in its upper triangular part.
    Construct a symmetric $(N-1)\times(N-1)$ random matrix $\GEP_0$ whose upper-triangular entries are i.i.d. standard half-normal random variables. Specifically, for $1\le i\le j\le N-1$,
\[
(\GEP_0)_{ij} = |X_{ij}|+c,\qquad X_{ij}\overset{\mathrm{i.i.d.}}{\sim}\mathcal{N}(0,1).
\]
% and $(\GEP_0)_{ji}=(\GEP_0)_{ij}$ for $i<j$.
    \item \textbf{Row and Column Sum Adjustment:}
Extend $\GEP_0$ to an $N\times N$ symmetric matrix $\GEP$ by appending a final row and column as
\begin{equation}
\label{eq:extended-sparse-graph-scheduling-matrix}
\GEP =
\begin{bmatrix}
\GEP_0 & \boldsymbol{m} \\
\boldsymbol{m}^\top & \alpha
\end{bmatrix},
\end{equation}
where $\boldsymbol{m}\in\mathbb{R}^{N-1}$ and $\alpha\in\mathbb{R}$ are chosen such that row and column sums equal to $d$.
% \[
% \sum_{j=1}^N \GEP_{ij} = \sum_{i=1}^N \GEP_{ij} = d, \quad \forall i.
% \]
Specifically, let
\begin{equation}
\label{eq:append-col}
    \textstyle
\boldsymbol{m}_i = d - \sum_{j\in[N-1]} (\GEP_0)_{i j},
\quad \forall i
\end{equation}
and
\begin{equation}
\label{eq:alpha}
    \textstyle
\alpha = (2-N)d + \sum_{i, j\in[N-1]} (\GEP_0)_{ij}.
\end{equation}
% Then $\GEP$ is symmetric with all row and column sums equal to $d$.
    % ensures that $\GEP$ is symmetric and has constant row sum and column sum $d$. 
    \item \textbf{Degree-Preserving Sparsification:} 
    Consider the matrix $\GEP$ as the biadjacency matrix of the graph $G$.
    Then, the adjacency matrix of the bipartite graph $G$ is
    \begin{equation*}
    %\label{eq:adj-before-sparsification}
    \adj = \begin{bmatrix}
    \allzeros{} &  \GEP\\
        (\GEP)^\top & \allzeros{}
    \end{bmatrix}.
    \end{equation*}
    Note that the degree matrix of $G$ is $\deg~= ~d\ident{2N}$.
    % where $\ident{2N}$ is a $2N\times 2N$ identity matrix.
    To reduce density while preserving expansion properties, we apply \textsc{DegreePreservingSparsify}~\cite{degree-preserving-sparsifier} to obtain $G_\varepsilon = \textsc{DegreePreservingSparsify}(G,\varepsilon)$. The corresponding biadjacency matrix $\GEP_\varepsilon$ is used as the final encoding matrix.
    \end{enumerate}
     
We note that $G_\varepsilon$ is \emph{degree-preserving $\varepsilon$-sparsifier} of $G$ with $\varepsilon\in(0,1]$ satisfying the following properties:
    \begin{enumerate}
    \item $G_\varepsilon$ is a subgraph of $G$, and thus is also a bipartite graph. For $G_\varepsilon$, we denote its adjacency matrix as $\adj_\varepsilon$ and biadjacency matrix as $\GEP_\varepsilon$.
    \item 
    The weighted degree of each vertex remains unchanged (degree-preserving), i.e. for the degree matrix of $G_\varepsilon$:
    \begin{align}
    \label{eq:degree-preserved}
        \deg_\varepsilon = \deg.
    \end{align}
    \item For the Laplacians of $G$ and $G_\varepsilon$, denoted as $\lap\triangleq \deg - \adj$ and $\lap_{\varepsilon}\triangleq \deg_{\varepsilon} - \adj_\varepsilon$, respectively, we have
    \begin{align}
    \label{eq:lap-diff-bound}
        \norm{\lap - \lap_\varepsilon}_2 \le (e^{\varepsilon}-1) \norm{\lap}_2.
    \end{align}
    \end{enumerate}

\subsection{Feasible Parameter Regime}
\label{sec:sparse-graph-feasible-parameter-regime}

\begin{figure}[!tb]
    \centering
    \includegraphics[width=\linewidth]{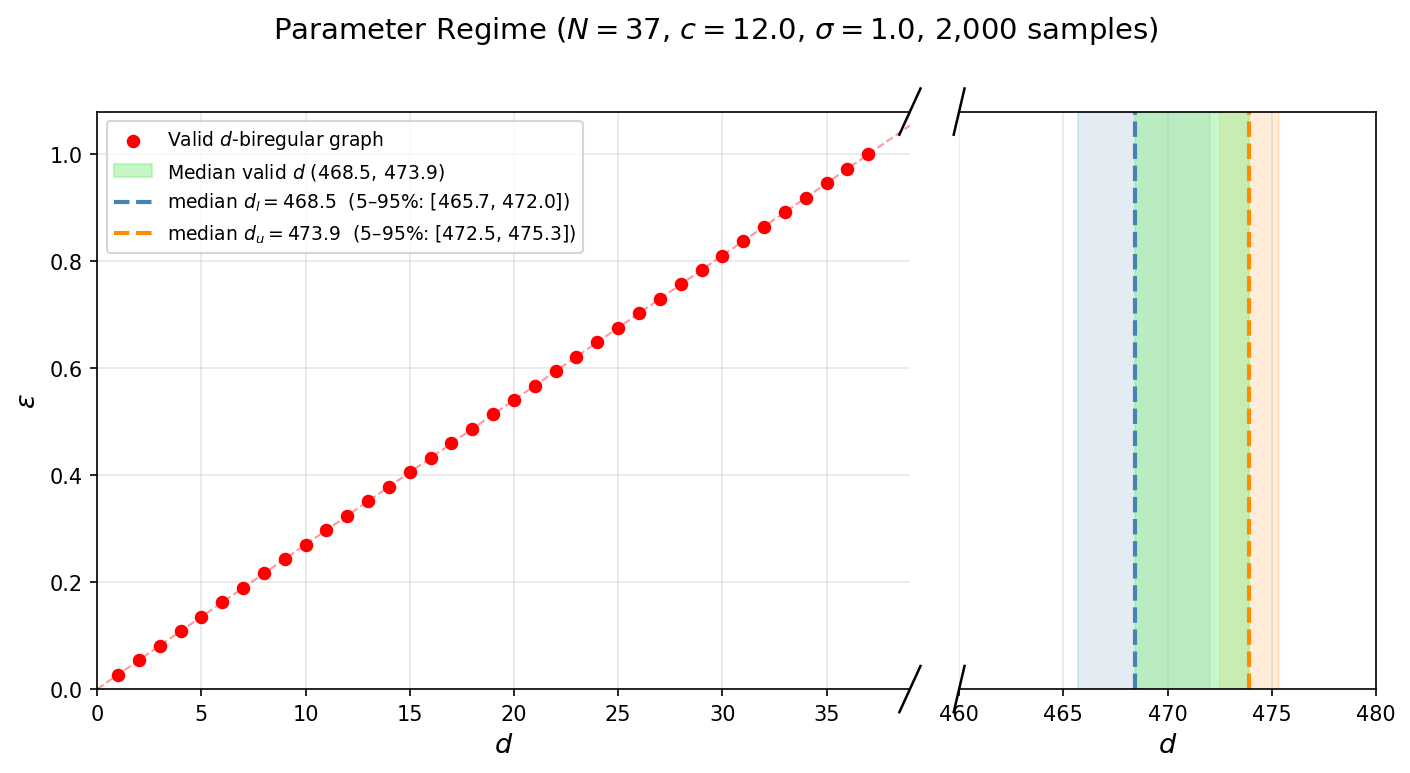}
    \caption{Comparison of parameter regimes for $(37, d)$-BEG-GC (red dots) and $(37, 12, d, \varepsilon)$-EP-GC (green region). For EP-GC to be valid, $d$ must
   satisfy Eq.\eqref{eq:d-bound}. Based on 2,000 independent samples of $\mathbf{E}_0$, the green band spans the median feasible interval; the shaded bands indicate the 5th–95th percentile range of each bound ($d_{\mathrm{l}}$ in blue, $d_{\mathrm{u}}$ in orange).}
    \label{fig:EP-parameter-regime}
\end{figure}
Both $(N,d)$-BEG-GC and $(N,c,d,\varepsilon)$-EP-GC can be viewed as undirected weighted graphs: BEG-GC uses binary weights, while EP-GC uses real-valued weights. Although $d$ denotes the node degree in both, it plays different roles. In BEG-GC, $d$ determines sparsity and thus fixes the workload and replication levels. In EP-GC, $d$ specifies only the graph degree, while sparsity is independently controlled by $\varepsilon$.

Consequently, BEG-GC tightly couples degree, redundancy, and computational load, whereas EP-GC decouples structural regularity from sparsity. This decoupling enables a continuous tradeoff between accuracy and efficiency through $\varepsilon$, without altering $d$, and leads to substantially greater parameter flexibility. In particular, BEG-GC and BIBD-GC constructions impose rigid combinatorial constraints on $(N,d,L,R)$, while EP-GC admits a significantly broader feasible parameter regime, as characterized below.
\begin{theorem}
    An $(N, c, d, \varepsilon)$ EP-GC exists, if and only if 
    \begin{equation}
    \label{eq:d-bound} \underbrace{{\max_{i\in[N-1]}\mathsmaller\sum\limits_{j\in[N-1]}} (\GEP_0)_{i j}}_{d_{\mathrm{l}}}< d<\underbrace{\tfrac{\sum_{i, j\in[N-1]} (\GEP_0)_{ij}}{N-2}}_{d_{\mathrm{u}}}.
    \end{equation}
\end{theorem}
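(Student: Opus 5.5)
The plan is to read ``exists'' as the requirement that Step~2 of the construction yields a legitimate input for Step~3. That is, $\GEP$ must be a symmetric matrix with constant row and column sums $d$ and \emph{strictly positive} entries. Then $G$ is a genuine weighted bipartite graph with positive edge weights and degree matrix $\deg = d\ident{2N}$, which is what \textsc{DegreePreservingSparsify}~\cite{degree-preserving-sparsifier} needs. Once this holds, Step~3 always succeeds and returns $\GEP_\varepsilon$ with properties \eqref{eq:degree-preserved} and \eqref{eq:lap-diff-bound}. So the proof reduces to characterizing when the appended column $\boldsymbol{m}$ and corner entry $\alpha$ from \eqref{eq:append-col} and \eqref{eq:alpha} are positive.

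First I check that the row and column sums come out right automatically, for any $d$.
\begin{itemize}
\item For $i\in[N-1]$, row $i$ sums to $\sum_j(\GEP_0)_{ij}+\boldsymbol{m}_i=d$ by \eqref{eq:append-col}.
\item The last row sums to
\[
\textstyle\sum_i \boldsymbol{m}_i+\alpha=(N-1)d-\sum_{i,j}(\GEP_0)_{ij}+(2-N)d+\sum_{i,j}(\GEP_0)_{ij}=d.
\]
\item Column sums follow by symmetry of $\GEP_0$, which makes $\GEP$ symmetric.
\end{itemize}
The choice of $(\boldsymbol{m},\alpha)$ is also forced: given $\GEP_0$ and the requirement of row sums $d$, \eqref{eq:append-col} and \eqref{eq:alpha} are the unique solution. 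Hence no other extension is available, which is what gives the ``only if'' direction.

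Next I handle positivity. The block $\GEP_0$ has entries $|X_{ij}|+c$, which are positive almost surely (for $c\ge 0$). So positivity of $\GEP$ is equivalent to two conditions:
\begin{itemize}
\item $\boldsymbol{m}_i>0$ for all $i$, i.e.\ $d>\sum_{j}(\GEP_0)_{ij}$ for every $i$, i.e.\ $d>d_{\mathrm{l}}$;
\item $\alpha>0$, i.e.\ $(N-2)d<\sum_{i,j}(\GEP_0)_{ij}$, i.e.\ $d<d_{\mathrm{u}}$ (for $N\ge 3$).
\end{itemize}
For ``if'': when both hold, $\GEP$ is a positive-weighted, $d$-regular, connected (all entries positive) bipartite biadjacency matrix, so the sparsifier applies and the EP-GC exists. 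For ``only if'': if $d\le d_{\mathrm{l}}$, some $\boldsymbol{m}_i\le 0$; if $d\ge d_{\mathrm{u}}$, then $\alpha\le 0$. In either case, by uniqueness of the extension, $\GEP$ has a nonpositive entry and $G$ is not a valid positively weighted graph for the sparsifier.

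The main obstacle is justifying the \emph{strict} inequalities in the ``only if'' direction, i.e.\ excluding zero entries. A zero $\boldsymbol{m}_i$ or $\alpha$ just means a missing edge, so I must argue from the hypotheses of \textsc{DegreePreservingSparsify} as used in the paper (positive weights on the given edge set, and the spectral guarantee \eqref{eq:lap-diff-bound} stated for that graph). Failing that, I would adopt strict positivity of all entries of $\GEP$ as part of the definition of an $(N,c,d,\varepsilon)$ EP-GC.

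I would also remark that the feasible set is nonempty exactly when $d_{\mathrm{l}}<d_{\mathrm{u}}$. This holds typically, since $d_{\mathrm{u}}\approx \frac{N-1}{N-2}\cdot(\text{average row sum})$ while $d_{\mathrm{l}}$ is the maximum row sum. This explains the band shown in Fig.~\ref{fig:EP-parameter-regime}.
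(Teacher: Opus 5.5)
Your proposal is correct and follows the same route as the paper. The paper's entire proof is that positivity of every entry of $\GEP$ forces $\boldsymbol{m}_i>0$ for all $i$ and $\alpha>0$ in \eqref{eq:append-col}--\eqref{eq:alpha}, which is exactly $d_{\mathrm{l}}<d<d_{\mathrm{u}}$; your row-sum check, the uniqueness of the extension used for the ``only if'' direction, and the explicit assumptions $c\ge 0$, $N\ge 3$, and strict positivity as part of the definition spell out what the paper's one-line argument leaves implicit.
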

\begin{proof}
   Positivity of all entries in $\GEP$ requires Eqs.~\eqref{eq:append-col} and \eqref{eq:alpha} to be greater than zero, yielding Eq.\eqref{eq:d-bound}.
\end{proof}

\begin{corollary}
% Write $S_i = \sum_{j=1}^{N-1}|X_{ij}|$, $S_{\max} = \max_{i\in[N-1]} S_i$, and
% $S_{\mathrm{tot}} = \sum_{i,j\in[N-1]}|X_{ij}|$.
% Then
% \begin{align}
%     d_{\mathrm{lo}} &= (N-1)C + \sigma S_{\max}, \\
%     d_{\mathrm{hi}} &= \frac{(N-1)^2 C}{N-2} + \frac{\sigma S_{\mathrm{tot}}}{N-2},
% \end{align}
% and the width of the valid interval satisfies
% \begin{equation}
%     d_{\mathrm{hi}} - d_{\mathrm{lo}}
%     = \frac{(N-1)C}{N-2}
%       + \frac{\sigma\!\left(S_{\mathrm{tot}} - (N-2)S_{\max}\right)}{N-2}.
% \end{equation}
The necessary conditions for Eq.\eqref{eq:d-bound} to hold are
\begin{equation*}
    c > \tfrac{\Big[(N-2)\max\limits_{i\in[N-1]} \sum\limits_{j\in[N-1]}|X_{ij}| - \sum\limits_{i,j\in[N-1]}|X_{ij}|\Big]}{N-1}\text{ and } N\geq3.
\end{equation*}
\end{corollary}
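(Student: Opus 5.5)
The corollary follows by substituting the definition of $\GEP_0$ into Eq.~\eqref{eq:d-bound} and asking when the open interval $(d_{\mathrm{l}},d_{\mathrm{u}})$ is nonempty. Since Eq.~\eqref{eq:d-bound} holds for some $d$ only if $d_{\mathrm{l}}<d_{\mathrm{u}}$, it suffices to show that $d_{\mathrm{l}}<d_{\mathrm{u}}$ is equivalent to the displayed inequality on $c$, together with the requirement $N\ge 3$.

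First, the requirement $N\ge3$. The upper bound $d_{\mathrm{u}}$ divides by $N-2$. For $N=2$ the original condition $\alpha>0$ in Eq.~\eqref{eq:alpha} reads $\sum_{i,j}(\GEP_0)_{ij}>0\cdot d$, which imposes no upper bound. I will also check the degenerate case $N-1=1$ directly: there $\boldsymbol{m}_1=d-(\GEP_0)_{11}>0$ and $\alpha=\sum_{i,j}(\GEP_0)_{ij}>0$. Positivity is automatic, so $N=2$ does not genuinely fail. The honest justification is therefore that the bound in Eq.~\eqref{eq:d-bound} is only stated (and well defined) for $N\ge3$, and I will phrase $N\ge3$ as needed for Eq.~\eqref{eq:d-bound} to make sense. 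For $N=1$ the construction of an $(N-1)\times(N-1)$ matrix is vacuous.

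Next, for $N\ge3$, I expand the row and total sums. Since $(\GEP_0)_{ij}=|X_{ij}|+c$, the $i$-th row sum is $\sum_j|X_{ij}|+(N-1)c$. Hence
\[
d_{\mathrm{l}}=\max_i\textstyle\sum_j|X_{ij}|+(N-1)c,\qquad d_{\mathrm{u}}=\frac{\sum_{i,j}|X_{ij}|+(N-1)^2c}{N-2}.
\]
Multiplying $d_{\mathrm{l}}<d_{\mathrm{u}}$ by $N-2>0$ gives
\[
(N-2)\max_i\textstyle\sum_j|X_{ij}|+(N-1)(N-2)c<\sum_{i,j}|X_{ij}|+(N-1)^2c.
\]
Collecting the $c$ terms, whose net coefficient is $(N-1)^2-(N-1)(N-2)=N-1$, and dividing by $N-1>0$ yields exactly the stated bound on $c$.

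The main obstacle is the bookkeeping around symmetry: the upper-triangular definition must be extended so that $X_{ji}=X_{ij}$, and the algebra above uses all $(N-1)^2$ entries with that convention. The $N\ge3$ clause also needs the justification described above. Everything else is linear algebra on the two bounds.
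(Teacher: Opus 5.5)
Your proposal is correct and follows essentially the same route as the paper. You substitute $(\GEP_0)_{ij}=|X_{ij}|+c$ into $d_{\mathrm{l}}$ and $d_{\mathrm{u}}$, require $d_{\mathrm{l}}<d_{\mathrm{u}}$, and solve for $c$ using the net coefficient $(N-1)^2-(N-1)(N-2)=N-1$, with $N\ge 3$ justified by the $(N-2)^{-1}$ factor that makes Eq.~\eqref{eq:d-bound} well defined. Your extra observation is accurate and slightly more careful than the paper's wording: for $N=2$ the positivity constraints (in particular $\alpha=(\GEP_0)_{11}>0$) impose no upper bound on $d$, so $N\ge 3$ is really a condition for Eq.~\eqref{eq:d-bound} to make sense rather than for the code to exist.
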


\begin{proof}[Proof]
Substituting $(\GEP_0)_{ij} = c + |X_{ij}|$ into the expressions for $d_{\mathrm{l}}$
and $d_{\mathrm{u}}$ from Theorem~1 gives
\begin{align*}
    d_{\mathrm{l}}
        % &= \max_{i\in[N-1]}\sum_{j=1}^{N-1}(C + |X_{ij}|)
         &= (N-1)c +  \max_{i\in[N-1]}\mathsmaller\sum\limits_{j\in[N-1]}|X_{ij}|, \\
    d_{\mathrm{u}}
        % &= \frac{\sum_{i,j\in[N-1]}(C + |X_{ij}|)}{N-2}
         &= \tfrac{(N-1)^2 c}{N-2} + \tfrac{ \sum_{i,j\in[N-1]}|X_{ij}|}{N-2}.
\end{align*}
Since $d_{\mathrm{u}}$ contains the factor $(N-2)^{-1}$, we must have $N \neq 2$. Moreover, $N=1$ is impossible, since in that case Eq.~\eqref{eq:d-bound} is not well defined. Hence, as $N$ is a positive integer, it follows that $N \ge 3$.
We then take the difference,
\begin{align}
    &d_{\mathrm{u}} - d_{\mathrm{l}}\notag\\
    &=\tfrac{ \sum\limits_{i,j\in[N-1]}|X_{ij}|}{N-2} +  \max_{i\in[N-1]}\mathsmaller\sum\limits_{j\in[N-1]}|X_{ij}|\notag\\
    &\qquad+c\left[\tfrac{(N-1)^2 }{N-2} - (N-1)\right]\notag\\
    &= \tfrac{\left(\sum\limits_{i,j\in[N-1]}|X_{ij}| - (N-2)\max\limits_{i\in[N-1]}\sum\limits_{j\in[N-1]}|X_{ij}|\right)}{N-2}+ \tfrac{c(N-1)}{N-2}.
    \label{eq:bound-diff}
\end{align}
Setting Eq.\eqref{eq:bound-diff} to be greater than zero and solving for $c$ complete the proof.
% Setting $d_{\mathrm{u}} - d_{\mathrm{l}} > 0$ and solving for $c$ complete the proof.
\end{proof}
% To ensure that \textsc{DegreePreservingSparsify} produces a valid degree-preserving $\varepsilon$-sparsifier, it suffices that:
% \begin{enumerate}
%     % \item $d$ exceeds the maximum row sum of $\GEP_0$, ensuring positivity of all entries\footnote{In practice, a small constant ($1\times 10^{-9}$) is added to avoid degeneracy.};
%     % \item $\alpha$ is positive;
%     \item every entry in $E_0$ and $\boldsymbol{m}$, and $\alpha$ are positive, i.e.,
    
%     \item $d$ grows at most polynomially in $N$, i.e., $|d| \le N^C$ for some fixed $C>0$, ensuring all entries remain polynomially bounded.
% \end{enumerate}
These conditions are mild and impose no combinatorial structure. Moreover, $\varepsilon \in (0,1]$ can be chosen to continuously tune sparsity. As a result, for fixed $N$, EP-GC admits a significantly broader feasible parameter regime than BEG-GC, as illustrated in Fig.~\ref{fig:EP-parameter-regime}.

\subsection{Theoretical Performance Evaluation}
\label{sec:performance-evalution}

We evaluate the performance of the proposed expansion-preserving gradient code by comparing it to BIBD gradient code and the sparse Gaussian gradient code. Through theoretical analysis, we derive the error performance as demonstrated in the accompanying proof (see supplementary material). The main results are summarized as follows:

\begin{theorem}
\label{thm:EP-GC-error-performance}
Given a weighted bipartite graph $G$, let $G_\varepsilon = $ \textsc{DegreePreservingSparsify}$(G, \varepsilon)$ and $\GEP_\varepsilon$ be the bi-adjacency matrix of $G_\varepsilon$, we have
\begin{align}
         \err{\GEP_\varepsilon} &\leq \tfrac{1}{N}\bigg[  \left(\tfrac{2(e^{\varepsilon}-1)N}{\sqrt{N-S}}\right)+\tfrac{\lambda_2(\GEP)}{d} \sqrt{\tfrac{NS}{N-S}}\bigg]^2
            \label{eq:error-upperbound-by-the-second-largest-lambda}.
\end{align}
 % \label{eq:error-upperbound-without-the-second-largest-lambda}
%  There exists $C_1, C_2>0$ such that for any $\delta >0$, we have
% \begin{align}
% \label{eq:high-prob-bound-for-second-largest-eigenvalue-of-initial-matrix}
%  \lambda_2(\GEP) \leq \big(\sqrt{\tfrac{2}{\pi}}+c\big)N + C_1(\sqrt{N}+\delta)
% \end{align}
% with probability at least $1-2\exp(-C_2\delta^2)$.
\end{theorem}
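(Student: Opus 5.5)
Since the error is a maximum over non-straggler sets and each term is a minimum over decoding vectors, it suffices to fix an arbitrary $\F$ with $|\F|=N-S$ and exhibit one decoding vector whose residual is small. Following the BEG-GC analysis, we use the normalized encoding matrix $\tfrac1d\GEP_\varepsilon$ and the decoding vector $\v_\F=\ones{N}+\a_\F$ from Eq.~\eqref{eq:decoding_vec}. The plan is to write
\[
\tfrac1d\GEP_\varepsilon\v_\F-\ones{N}=\tfrac1d(\GEP_\varepsilon-\GEP)\v_\F+\Big(\tfrac1d\GEP\v_\F-\ones{N}\Big)
\]
and bound the two terms separately by the triangle inequality. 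Squaring and dividing by $K=N$ then yields the bracketed form in Eq.~\eqref{eq:error-upperbound-by-the-second-largest-lambda}.

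\textbf{Unperturbed term.} Degree preservation, Eq.~\eqref{eq:degree-preserved}, gives $\GEP_\varepsilon\ones{N}=d\ones{N}$, and also $\GEP\ones{N}=d\ones{N}$. Hence $\tfrac1d\GEP\v_\F-\ones{N}=\tfrac1d\GEP\a_\F$.
\begin{itemize}
\item Since $\a_\F^\top\ones{N}=-S+(N-S)\tfrac{S}{N-S}=0$, the vector $\a_\F$ is orthogonal to the top eigenvector $\ones{N}$ of the symmetric matrix $\GEP$.
\item Therefore $\norm{\GEP\a_\F}_2\le\lambda_2(\GEP)\norm{\a_\F}_2$. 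More precisely the bound uses the second largest absolute eigenvalue; I will interpret $\lambda_2$ that way, or note that it is assumed to dominate.
\item A direct computation gives $\norm{\a_\F}_2^2=S+\tfrac{S^2}{N-S}=\tfrac{NS}{N-S}$.
\end{itemize}
Together these give the term $\tfrac{\lambda_2(\GEP)}{d}\sqrt{\tfrac{NS}{N-S}}$.

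\textbf{Perturbation term.} Because the degree matrices coincide, $\lap-\lap_\varepsilon=\adj_\varepsilon-\adj$. The matrix $\GEP_\varepsilon-\GEP$ is an off-diagonal block of this difference, so
\[
\norm{\GEP_\varepsilon-\GEP}_2\le\norm{\lap-\lap_\varepsilon}_2\le(e^\varepsilon-1)\norm{\lap}_2
\]
by Eq.~\eqref{eq:lap-diff-bound}. For the bipartite graph, $\norm{\lap}_2=d+\sigma_1(\GEP)=2d$, which uses nonnegativity of the entries (guaranteed by Eq.~\eqref{eq:d-bound}) and constant row sums. Hence
\[
\tfrac1d\norm{(\GEP_\varepsilon-\GEP)\v_\F}_2\le 2(e^\varepsilon-1)\norm{\v_\F}_2 .
\]
Now $\v_\F$ equals $\tfrac{N}{N-S}$ on $\F$ and $0$ elsewhere, so $\norm{\v_\F}_2=\tfrac{N}{\sqrt{N-S}}$. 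This produces exactly the first term $\tfrac{2(e^\varepsilon-1)N}{\sqrt{N-S}}$.

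\textbf{Main obstacle.} I expect the delicate step to be the spectral bookkeeping. First, $\norm{\lap}_2=2d$ must be justified: the largest Laplacian eigenvalue of a bipartite $d$-regular weighted graph is $2d$, attained by the vector $(\ones{},-\ones{})$. Second, the correct $\lambda_2$ must be used for a symmetric $\GEP$ that may have negative eigenvalues. The remaining steps are routine algebra.
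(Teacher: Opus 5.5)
Your proposal is correct and follows essentially the same route as the paper. It uses the same decoding vector $\ones{N}+\a_\F$ (up to the harmless $1/d$ rescaling, which leaves the minimum over decoding vectors unchanged) and the same triangle-inequality split into $\boldsymbol{\Delta}_1$ and $\boldsymbol{\Delta}_2$. For $\boldsymbol{\Delta}_1$ it uses the same bound $\norm{\GEP-\GEP_\varepsilon}_2=\norm{\adj-\adj_\varepsilon}_2\le\norm{\lap-\lap_\varepsilon}_2\le 2(e^\varepsilon-1)d$ together with $\norm{\ones{N}+\a_\F}_2=N/\sqrt{N-S}$, and for $\boldsymbol{\Delta}_2$ it uses $\a_\F\perp\ones{N}$ with $\norm{\a_\F}_2^2=\tfrac{NS}{N-S}$.

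The only difference is cosmetic: the paper expands $\a_\F$ in the right singular vectors of $\GEP$, so its argument actually yields $\sigma_2(\GEP)$ in place of $\lambda_2(\GEP)$. This agrees with your remark that the quantity genuinely needed is the second largest absolute eigenvalue, not $\lambda_2(\GEP)$ itself.
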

% \lele{[Did we say that we will add another upper bound in terms of $\frac{\lambda_2(\GEP)}{d}$ in Theorem 3?]}
Our proof of Theorem \ref{thm:EP-GC-error-performance} is as follows.
\begin{proof} We first choose the decoding vector defined in Eq.\eqref{eq:decoding_vec}.
% \begin{equation}
% \label{eq:decoding_vec}
%     \v_\F = \frac{1}{d}(\ones{N}+\a_{\F}),
% \end{equation}
% where
% \begin{equation}
% \label{eq:decoding-vec-component}
%     \a_\F = 
%     \begin{cases}
%         -1, \qquad i\not \in \F\\
%         \frac{S}{N-S} \qquad i\in \F.
%     \end{cases}
% \end{equation}
Therefore, we can bound $\err{\GEP}$ by
\begin{align}
% \label{eq:error-upperbound}
    &\err{\GEP_\varepsilon}
    \le \tfrac{1}{N}\max_{\substack{\F\subset [N]\\ |\F| = N-S}}\big\|\GEP_{\varepsilon}\v_\F - \ones{K}\big\|_2^2\notag\\
     &\quad= \tfrac{1}{N}\max_{\substack{\F\subset [N]\\ |\F| = N-S}}
       \big\|
           \underbrace{\GEP_{\varepsilon}\v_\F - \GEP\v_\F}_{\boldsymbol{\Delta}_1}
           + 
           \underbrace{\GEP\v_\F - \ones{N}}_{\boldsymbol{\Delta}_2}
       \big\|_2^2
     \notag\\
     &\quad\overset{(a)}{\le}
       \tfrac{1}{N}\max_{\substack{\F\subset [N]\\ |\F| = N-S}}
       \left(
           \norm{\boldsymbol{\Delta}_1}_2 + \norm{\boldsymbol{\Delta}_2}_2
       \right)^2,
     \label{eq:error-upperbound-decomposition}
\end{align}
where (a) follows by applying the submultiplicativity of the spectral
norm and the triangle inequality.

We now state the following lemma, with its proof deferred to Appendix \ref{sec:proof-of-lemma-diff-decoded-result-bound}.

\begin{lemma}
\label{lemma:diff-decoded-result-bound}
Given a weighted bipartite graph $G$, let $G_\varepsilon = $ \textsc{DegreePreservingSparsify}$(G, \varepsilon)$. For any nonempty set $\F\subset[n]$ of size $N-S$, with the choice of $\v_\F$ being defined as Eq.\eqref{eq:decoding_vec}, we have
\begin{align*}
        \norm{\boldsymbol{\Delta}_1}_2 \leq  \tfrac{2(e^\varepsilon-1)N}{\sqrt{N-S}}\text{ and } \norm{\boldsymbol{\Delta}_2}_2 &\leq \tfrac{\lambda_2(\GEP)}{d} \sqrt{\tfrac{NS}{N-S}}.
\end{align*}
\end{lemma}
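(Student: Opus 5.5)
The plan is to treat the two terms separately, working throughout with the unnormalized matrices $\GEP$ and $\GEP_\varepsilon$. Accordingly I read the decoding vector of Eq.~\eqref{eq:decoding_vec} with the $1/d$ normalization, $\v_\F=\tfrac1d(\ones{N}+\a_\F)$. This matches the BEG-GC convention, where the code uses $\bar\E=\tfrac1d\E$. Two elementary facts will be computed first:
\begin{itemize}
\item $\ones{N}+\a_\F$ vanishes off $\F$ and equals $\tfrac{N}{N-S}$ on $\F$, so $\norm{\ones{N}+\a_\F}_2=\tfrac{N}{\sqrt{N-S}}$.
\item $\a_\F^\top\ones{N}=-S+(N-S)\tfrac{S}{N-S}=0$, so $\a_\F\perp\ones{N}$, and $\norm{\a_\F}_2^2=S+\tfrac{S^2}{N-S}=\tfrac{NS}{N-S}$.
\end{itemize}

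\textbf{Bound on $\boldsymbol{\Delta}_1$.} Write $\boldsymbol{\Delta}_1=(\GEP_\varepsilon-\GEP)\v_\F$. By the degree-preserving property \eqref{eq:degree-preserved}, $\deg_\varepsilon=\deg$, hence $\lap-\lap_\varepsilon=\adj_\varepsilon-\adj$. This difference is the bipartite block matrix with off-diagonal blocks $\GEP_\varepsilon-\GEP$ and its transpose. The spectral norm of such a block matrix equals the spectral norm of its off-diagonal block, so
\[
\norm{\GEP_\varepsilon-\GEP}_2=\norm{\lap-\lap_\varepsilon}_2\le (e^\varepsilon-1)\norm{\lap}_2 ,
\]
using \eqref{eq:lap-diff-bound}. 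Since $\deg=d\ident{2N}$ and $\norm{\adj}_2\le d$ (nonnegative weights with all row sums equal to $d$), we get $\norm{\lap}_2\le 2d$. Combining the pieces,
\[
\norm{\boldsymbol{\Delta}_1}_2\le 2(e^\varepsilon-1)d\cdot\tfrac1d\cdot\tfrac{N}{\sqrt{N-S}}=\tfrac{2(e^\varepsilon-1)N}{\sqrt{N-S}} .
\]

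\textbf{Bound on $\boldsymbol{\Delta}_2$.} By construction $\GEP$ is symmetric with all row sums equal to $d$, so $\GEP\ones{N}=d\ones{N}$. Therefore
\[
\boldsymbol{\Delta}_2=\ones{N}+\tfrac1d\GEP\a_\F-\ones{N}=\tfrac1d\GEP\a_\F .
\]
Because $\GEP$ is symmetric with eigenvector $\ones{N}$ for eigenvalue $d$, it maps $\ones{N}^\perp$ into itself. Its norm on that subspace is the largest absolute value among the remaining eigenvalues. Since $\a_\F\perp\ones{N}$, this gives
\[
\norm{\boldsymbol{\Delta}_2}_2\le\tfrac{\lambda_2(\GEP)}{d}\norm{\a_\F}_2=\tfrac{\lambda_2(\GEP)}{d}\sqrt{\tfrac{NS}{N-S}} .
\]

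\textbf{Main obstacle.} The delicate point is this last step, which is only valid if $\lambda_2(\GEP)$ is read as the second largest eigenvalue in absolute value, equivalently $\sigma_2(\GEP)$ for this symmetric matrix, rather than the second largest signed eigenvalue. I will state this convention explicitly, consistent with the $\sigma_2$-based BEG-GC bound. A secondary check is $\norm{\adj}_2\le d$: this needs all entries of $\GEP$ nonnegative, which is exactly the feasibility condition \eqref{eq:d-bound}.
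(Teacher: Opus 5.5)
Your proposal is correct and follows essentially the paper's proof: the same $1/d$-normalized decoding vector, the same chain $\|\GEP_\varepsilon-\GEP\|_2=\|\adj-\adj_\varepsilon\|_2\le(e^\varepsilon-1)\|\lap\|_2\le 2(e^\varepsilon-1)d$ for $\boldsymbol{\Delta}_1$, and the same use of $\a_\F\perp\ones{N}$ for $\boldsymbol{\Delta}_2$. The only difference is that you restrict the symmetric $\GEP$ to the invariant subspace $\ones{N}^\perp$, where the paper expands $\a_\F$ in right singular vectors. Your caveat about $\lambda_2$ is well placed: the paper's own argument actually ends with $\sigma_2(\GEP)/d$, the second largest eigenvalue in absolute value, so the convention you state explicitly is exactly what the proof establishes.
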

According to Lemma~\ref{lemma:diff-decoded-result-bound}, the bounds of $\boldsymbol{\Delta}_1$ and $\boldsymbol{\Delta}_2$ are irrelevant to the non-straggling pattern $\F$ and thus right-hand-side term of Eq.\eqref{eq:error-upperbound-decomposition} is upper bounded by
% \begin{align*}
%        % &\max_{\substack{\F\subset [N]\\ |\F| = N-S}}
%        % \left(
%        %     \norm{\boldsymbol{\Delta}_1}_2 + \norm{\boldsymbol{\Delta}_2}_2
%        % \right)^2 \notag\\
%        % &\quad= \left(
%        %     \norm{\text{(a)}}_2 + \norm{\text{(b)}}_2
%        % \right)^2 \notag\\
%        &\tfrac{1}{N}\left[2\varepsilon \sqrt{N-S}\left(\tfrac{N}{N-S}\right) + \tfrac{\lambda_2(\GEP)}{d} \sqrt{\tfrac{NS}{N-S}}\right]^2,
%        % \label{eq:error-upperbound-by-the-second-largest-lambda-intermediate}
% \end{align*}
$\tfrac{1}{N}
\left(
   \norm{\boldsymbol{\Delta}_1}_2 + \norm{\boldsymbol{\Delta}_2}_2
\right)^2,$
which completes the proof of Eq.\eqref{eq:error-upperbound-by-the-second-largest-lambda}.
\end{proof}

Since $\GEP$ is random, $\lambda_2(\GEP)$ is random. We will now upper bound $\lambda_2(\GEP)$ by using design parameters $(N,c, d,\varepsilon)$.

We start analyzing $\|\GEP_0\|_2\triangleq \sigma_{1}(\GEP_{0})$, which can be written as
\begin{align}
\label{eq:spectrum-norm-of-intial-matrix}
    \|\GEP_0\|_2 &\overset{(b)}{=} \sqrt{\lambda_{1}({\GEP_{0}}^{\top}\GEP_{0})} \overset{(c)}{=} \sqrt{\lambda_{1}\big((\GEP_{0})^{2}\big)}\notag\\
    &= \sqrt{\big(\lambda_{1}(\GEP_{0})\big)^{2}}= |\lambda_1(\GEP_0)|
\end{align}
where (b) is because the square of each singular value of square matrix $\GEP_0$ equals a non-negative eigenvalues of ${\GEP_0}^\top \GEP_0$; (c) is because $\GEP_{0}$ is symmetric.
% By definition, the square of each singular value of square matrix $\GEP_0$ equals a non-negative eigenvalues of ${\GEP_0}^\top \GEP_0$. Therefore, the spectral norm can be written as
% \begin{align*}
%     \|\GEP_0\|_2 \triangleq \sigma_{1}(\GEP_{0}) = \sqrt{\lambda_{1}({\GEP_{0}}^{\top}\GEP_{0})}.
% \end{align*}
% Because \(\GEP_{0}\) is symmetric, we have ${\GEP_{0}}^{\top} = \GEP_{0}$, which means ${\GEP_{0}}^{\top}\GEP_{0} = (\GEP_{0})^{2}$. So, we have
% The eigenvalues of $(\GEP_{0})^{2}$ are precisely the squares of the eigenvalues of $\GEP_{0}$.  Thus,
% \begin{align}
%     \norm{\GEP_0}_2 &= \sqrt{\lambda_{1}\big((\GEP_{0})^{2}\big)}= \sqrt{\big(\lambda_{1}(\GEP_{0})\big)^{2}}= |\lambda_1(\GEP_0)|.\notag
%     % \label{eq:spectrum-norm-of-sparse-graph-scheduling-matrix}
% \end{align}
Following the construction given in Sec.\ref{sec:sparse-graph-gradient-code-encoding-matrix-construction}, $\GEP$ is random, and thus $\lambda_2(\GEP)$ in \eqref{eq:error-upperbound-by-the-second-largest-lambda} is a random variable. We therefore derive a high-probability upper bound on $\lambda_2(\GEP)$ in terms of the design parameters $(N,c, d,\varepsilon)$.
\begin{theorem}
For the matrix $\GEP$ constructed in Sec.~\ref{sec:sparse-graph-gradient-code-encoding-matrix-construction}, there exists $C_1>0$ such that for any $t >0$, we have
\begin{align}
% \label{eq:lambda2-tail-bound}
    \lambda_2(\GEP) \leq |\mu+c|(N-1) + C_1(\sqrt{N-1}+t)\notag
\end{align}
with probability at least $1-4e^{-t^2}$.
\end{theorem}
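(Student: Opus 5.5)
We interpret $\mu=\ex{|X_{ij}|}=\sqrt{2/\pi}$, the mean of the standard half-normal distribution. The plan is to reduce the bound on $\lambda_2(\GEP)$ to a bound on the spectrum of a centered Wigner-type matrix. We proceed in three steps.

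\textbf{Step 1 (interlacing).} Since $\GEP$ is symmetric and $\GEP_0$ is its leading $(N-1)\times(N-1)$ principal submatrix, Cauchy interlacing gives $\lambda_2(\GEP)\le\lambda_1(\GEP_0)$. By Eq.~\eqref{eq:spectrum-norm-of-intial-matrix},
\[
\lambda_1(\GEP_0)\le|\lambda_1(\GEP_0)|=\norm{\GEP_0}_2 .
\]
It therefore suffices to bound $\norm{\GEP_0}_2$ with the stated probability. The appended row and column, whose entries $\boldsymbol{m}$ and $\alpha$ depend on $d$, drop out entirely.

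\textbf{Step 2 (mean/fluctuation split).} Write
\[
\GEP_0=(\mu+c)\allones{(N-1)\times(N-1)}+\boldsymbol{W}, \qquad \boldsymbol{W}_{ij}=|X_{ij}|-\mu .
\]
Then $\boldsymbol{W}$ is symmetric with independent, mean-zero entries on and above the diagonal. These entries are sub-Gaussian, with $\psitwo{\boldsymbol{W}_{ij}}\le C$ for an absolute constant $C$, because $|X_{ij}|$ is a $1$-Lipschitz function of a Gaussian and centering preserves sub-Gaussianity up to constants. The triangle inequality and $\norm{\allones{(N-1)\times(N-1)}}_2=N-1$ give
\[
\norm{\GEP_0}_2\le|\mu+c|(N-1)+\norm{\boldsymbol{W}}_2 .
\]

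\textbf{Step 3 (random symmetric matrix bound).} We invoke the standard norm bound for symmetric random matrices with independent sub-Gaussian entries (e.g.\ Vershynin, Corollary 4.4.8): there is $C_1$, depending only on $\max_{ij}\psitwo{\boldsymbol{W}_{ij}}$, such that for all $t>0$,
\[
\norm{\boldsymbol{W}}_2\le C_1\big(\sqrt{N-1}+t\big)
\]
with probability at least $1-4e^{-t^2}$. To see where the constants come from, split $\boldsymbol{W}$ into its upper-triangular part (with diagonal) and its strictly lower-triangular part. Each part is a matrix with independent sub-Gaussian entries, so the $\varepsilon$-net bound applies to each with failure probability $2e^{-t^2}$. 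A union bound gives total failure probability $4e^{-t^2}$, which matches the stated probability. Combining Steps 1–3 yields the theorem.

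The main obstacle is Step 3: checking that the centered half-normal entries have a uniform $\psi_2$ bound, and tracking the constants so the failure probability comes out exactly $4e^{-t^2}$ rather than $2e^{-ct^2}$. If needed, the rescaling $t\mapsto t/\sqrt{c'}$ can be absorbed into $C_1$ to achieve this. We also note that $c$ may be negative; this is why the statement carries the absolute value $|\mu+c|$.
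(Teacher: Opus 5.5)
Your proof is correct and follows the same route as the paper: interlacing reduces the problem to $\norm{\GEP_0}_2$, then the split $\GEP_0=(\mu+c)\allones{(N-1)\times(N-1)}+\boldsymbol{W}$ with centered half-normal entries, then the symmetric sub-Gaussian norm bound (Vershynin, Cor.~4.4.8), which gives failure probability $4e^{-t^2}$. Your Step~1 states interlacing in the correct direction, $\lambda_2(\GEP)\le\lambda_1(\GEP_0)$; the paper's displayed chain instead passes through $\lambda_2(\GEP)\le\lambda_2(\GEP_0)$, which interlacing does not give (it gives the reverse inequality), although the final bound by $\norm{\GEP_0}_2$ is unaffected.
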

\begin{proof}
We leverage the following Lemma to bound $\lambda_2(\GEP)$.
\begin{lemma} [Cauchy's eigenvalue interlacing theorem{\cite[Theorem 4.3.15]{Horn_Johnson_1985}}]
\label{lem:eigenvalue-interlacing-theorem}
Suppose $\boldsymbol{A}\in \RR^{n\times n}$ is symmetric. Let $\boldsymbol{B}\in \RR^{{(n-1
)}\times {(n-1)}}$ be a principal submatrix (obtained by deleting both $i$-th row and $i$-th column for some values of $i$). Suppose $\boldsymbol{A}$ has eigenvalues $\lambda_1\geq \dots \geq \lambda_n$ and $\boldsymbol{B}$ has eigenvalues $\beta_1\geq \dots \geq \beta_m$. 
Then
% \[
% \lambda_k\geq \beta_k\geq \lambda_{k+n-m} \quad\text{for} \quad k=1, \dots, m
% \]
% And if $m = n-1$,
\[
    \lambda_1 \geq \beta_1 \geq \lambda_2 \geq \beta_2 \geq \dots \geq \beta_{n-1}\geq \lambda_n.
\]
\end{lemma}
Recall our construction of $\GEP$ as of Eq.\eqref{eq:extended-sparse-graph-scheduling-matrix}, we can rewrite $\GEP_0$ and apply Lemma~\ref{lem:eigenvalue-interlacing-theorem} and conclude $\lambda_2(\GEP) \geq \lambda_2(\GEP_0).$
Therefore, combined with Eq.\eqref{eq:spectrum-norm-of-intial-matrix}, we obtain
\begin{align}
 \label{eq:the-second-largest-eigenvalue-of-sparse-graph-scheudling-matrix}
    \lambda_2(\GEP) &\leq \lambda_2(\GEP_0) \leq |\lambda_1(\GEP_0)| =  \|\GEP_0\|_2
\end{align}

Then, we turn to bound $\|\GEP_0\|_2$ by using the following Lemma, whose proof is deferred to Appendix \ref{proof of lem:norm-of-matrices-with-standard-gaussian-entries}.
% \begin{lemma} [{Modified\cite[Theorem 4.4.5]{vershynin2020high}}]
% \label{lem:norm-of-matrices-with-standard-gaussian-entries}
% Let $\boldsymbol{M}$ be an $n\times n$ symmetric matrix, for $1\le i\le j\le N-1$, $\boldsymbol{M}_{ij} = |G_{ij}|$, where $G_{ij}~\overset{\mathrm{i.i.d.}}{\sim}~\gaussian{0}{1}$. Then, there exists absolute constants $C_1, C_2$ such that for any $\varepsilon >0$, we have
% \[
%     \norm{\boldsymbol{M}}_2\leq \sqrt{\tfrac{2}{\pi}}n +C_2(\sqrt{n}+\varepsilon)
% \]
% with probability at least $1-2\exp(-C_1\varepsilon^2)$.
% \end{lemma}
\begin{lemma}
\label{lem:norm-of-matrices-with-standard-gaussian-entries}
For the matrix $\GEP_0$ constructed in Sec.~\ref{sec:sparse-graph-gradient-code-encoding-matrix-construction}, there exist absolute constants
$C_1>0$ such that for any $t>0$,
\[
% \Pr\Big(
\|\GEP_0\|_2
\le
|\mu+c|(N-1) + C_1(\sqrt{N-1}+t)
% \Big)
% \ge
% 1-2e^{-C_2\delta^2 (N-1)}.
\]
with probability at least $1-4e^{-t^2}.$
\end{lemma}
% \begin{lemma}
% \label{lem:norm-of-matrices-with-standard-gaussian-entries}
% Let $\boldsymbol{M}$ be an $n\times n$ symmetric matrix such that
% % \[
% $\boldsymbol{M}_{ij} = |G_{ij}| + C,\quad G_{ij}\overset{\mathrm{i.i.d.}}{\sim}\gaussian{0}{1},$
% % \]
% for $1\le i\le j\le n$, and $\boldsymbol{M}_{ji}=\boldsymbol{M}_{ij}$ for $i>j$. Then there exist absolute constants $C_1,C_2>0$ such that for any $\varepsilon>0$,
% \[
% \Pr\Big(
% \|\boldsymbol M\|_2
% \le
% \big(\sqrt{\tfrac{2}{\pi}} + C\big)n + (C_1+\varepsilon)\sqrt n
% \Big)
% \ge
% 1-2e^{-C_2\varepsilon^2 n}.
% \]
% \end{lemma}
% The proof of Lemma \ref{lem:norm-of-matrices-with-standard-gaussian-entries} is deferred to Appendix \ref{proof of lem:norm-of-matrices-with-standard-gaussian-entries}.

% Since the entries of $\GEP_0$ follow i.i.d.\! half-normal distribution (cf.\ref{sec:sparse-graph-gradient-code-encoding-matrix-construction}), we apply Lemma \ref{lem:norm-of-matrices-with-standard-gaussian-entries} and thus we obtain that, there exists absolute constants $C_1>0$ such that for any $\delta>0$
% \begin{align*}
% \label{eq:concentration-spectrum-norm-of-sparse-graph-scheduling-matrix}
%     \prob{\norm{\GEP_0}_2\geq \big(\sqrt{\tfrac{2}{\pi}}+c\big)N + C_1\sqrt{N}} \leq 2\exp{(-C_2\delta^2N)}.
% \end{align*}
% Together with Eq.\eqref{eq:the-second-largest-eigenvalue-of-sparse-graph-scheudling-matrix}, we complete the proof.\qedhere
Combining Lemma~\ref{lem:norm-of-matrices-with-standard-gaussian-entries} with
Eq.\eqref{eq:the-second-largest-eigenvalue-of-sparse-graph-scheudling-matrix}
completes the proof.\qedhere
% , we conclude that there exists $C_0, C>0$ such that for any $\delta >0$, with probability at least $1-2\exp(-C_0\delta^2)$,
% \begin{align}
%     \lambda_2(\GEP)
%     &\leq  \sqrt{\tfrac{2}{\pi}}N + C(\sqrt{N}+\delta).
%     \label{eq:lambda2-tail-bound}
% \end{align}

% With the combination of 
% % Eq.\eqref{eq:error-upperbound}, 
% \eqref{eq:error-upperbound-decomposition}
% % , \eqref{eq:error-upperbound-by-the-second-largest-lambda-intermediate} 
% and \eqref{eq:lambda2-tail-bound}, we complete the proof.
\end{proof}

\subsection{Parameter Selection and Error Trade-offs}
As indicated by Eq.~\eqref{eq:error-upperbound-by-the-second-largest-lambda}, the error is governed by the competition between the sparsification and the spectral terms:
\[
T_{\mathrm{sp}} \triangleq \tfrac{2(e^{\varepsilon}-1)N}{\sqrt{N-S}},
\quad
T_{\mathrm{spec}} \triangleq \tfrac{\lambda_2(\GEP)}{d}\sqrt{\tfrac{NS}{N-S}}.
\]
The sparsification term dominates when 
\[
\tfrac{\lambda_2(\GEP)}{d} \ll 2(e^\varepsilon-1)\sqrt{\tfrac{N}{S}},
\]
and the spectral term dominates otherwise.
\begin{corollary}
If an $(N, c, d, \varepsilon)$ EP-GC exists, then $ d\geq\lambda_1(\GEP_0) \geq \lambda_2(\GEP)$.
\end{corollary}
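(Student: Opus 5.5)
The plan is to prove the two inequalities separately. Existence of the code is used only to guarantee that every entry of $\GEP$ is positive and that $d$ exceeds the maximal row sum of $\GEP_0$.

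\emph{Step 1: $d \ge \lambda_1(\GEP_0)$.}
\begin{itemize}
\item By the feasibility theorem, existence of an $(N,c,d,\varepsilon)$ EP-GC is equivalent to Eq.~\eqref{eq:d-bound}. In particular $d > d_{\mathrm{l}} = \max_{i\in[N-1]}\sum_{j\in[N-1]}(\GEP_0)_{ij}$.
\item Every entry $(\GEP_0)_{ij} = |X_{ij}| + c$ is positive in the feasible regime, so $|(\GEP_0)_{ij}| = (\GEP_0)_{ij}$ and the maximal absolute row sum $\|\GEP_0\|_\infty$ equals $d_{\mathrm{l}}$.
\item For any square matrix the spectral radius is at most any induced norm. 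Gershgorin's theorem gives the same conclusion: every eigenvalue lies in a disc centred at a diagonal entry with radius equal to the off-diagonal row sum.
\item Since $\GEP_0$ is symmetric, its eigenvalues are real, and therefore $\lambda_1(\GEP_0) \le |\lambda_1(\GEP_0)| \le \|\GEP_0\|_\infty = d_{\mathrm{l}} < d$.
\item As a sanity check, Perron--Frobenius gives the same bound directly: $\lambda_1(\GEP_0) = \rho(\GEP_0)$ for a positive matrix, and $\rho$ is bounded by the maximal row sum. Either route suffices.
\end{itemize}

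\emph{Step 2: $\lambda_1(\GEP_0) \ge \lambda_2(\GEP)$.}
\begin{itemize}
\item By Eq.~\eqref{eq:extended-sparse-graph-scheduling-matrix}, $\GEP_0$ is the principal submatrix of the symmetric matrix $\GEP$ obtained by deleting its last row and column.
\item Apply Lemma~\ref{lem:eigenvalue-interlacing-theorem} with $\boldsymbol{A} = \GEP$ and $\boldsymbol{B} = \GEP_0$. The chain $\lambda_1 \ge \beta_1 \ge \lambda_2$ yields $\lambda_2(\GEP) \le \beta_1 = \lambda_1(\GEP_0)$.
\item This is the correct direction of interlacing for the corollary. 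The display just before Lemma~\ref{lem:eigenvalue-interlacing-theorem} states the inequality the other way, so I would take the inequality directly from the lemma rather than from that display.
\end{itemize}

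\emph{Optional remark.} Because $\GEP$ is nonnegative, symmetric, and has all row sums equal to $d$, we have $\GEP\ones{N} = d\ones{N}$ and $\lambda_1(\GEP) = d$. The corollary is then consistent with interlacing $\lambda_1(\GEP) \ge \lambda_1(\GEP_0) \ge \lambda_2(\GEP)$, which gives an alternative proof of the first inequality without Gershgorin.

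\emph{Expected obstacle.} The only delicate point is the sign and direction bookkeeping. One must use positivity of the entries, which is exactly what existence provides, to identify the absolute row sum with $d_{\mathrm{l}}$. One must also read the interlacing inequality in the correct direction.
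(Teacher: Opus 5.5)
Your proof is correct. The second inequality is obtained exactly as in the paper, by Cauchy interlacing with $\GEP_0$ the principal submatrix of $\GEP$. You are also right that the first inequality in the display \eqref{eq:the-second-largest-eigenvalue-of-sparse-graph-scheudling-matrix}, which the paper's proof cites, points the wrong way. Reading $\lambda_2(\GEP)\le\beta_1=\lambda_1(\GEP_0)$ directly off Lemma~\ref{lem:eigenvalue-interlacing-theorem} is the right fix.

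For the first inequality your main route differs from the paper's. The paper uses Lemma~\ref{lem:biadj-matrix} to get $d=\sigma_1(\GEP)\ge\lambda_1(\GEP)$, and then, implicitly, interlacing $\lambda_1(\GEP)\ge\lambda_1(\GEP_0)$. The corollary thus becomes the single chain $d\ge\lambda_1(\GEP)\ge\lambda_1(\GEP_0)\ge\lambda_2(\GEP)$, which is essentially your optional remark. You instead combine the lower end $d>d_{\mathrm{l}}$ of \eqref{eq:d-bound} with the row-sum (induced $\infty$-norm or Gershgorin) bound on the spectral radius of $\GEP_0$.

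Your route is more elementary and works entirely with $\GEP_0$. It needs neither the singular-triplet lemma nor the Perron--Frobenius/connectivity argument behind it. It uses only $\GEP_0\ge 0$ and $d>d_{\mathrm{l}}$, not $\alpha>0$, and it yields the strict inequality $d>\lambda_1(\GEP_0)$. The paper's route encodes the same row-sum information in the extended matrix $\GEP$. It needs only that $\GEP$ is nonnegative with all row and column sums equal to $d$, and it makes the whole statement one application of interlacing.

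One point of precision concerns nonnegativity. Both arguments need it: of $\GEP_0$ in yours, of $\GEP$ in the paper's. It has to come from the requirement that \textsc{DegreePreservingSparsify} be run on strictly positive weights, as your opening sentence says. It does not follow from \eqref{eq:d-bound} alone, as your phrase ``in the feasible regime'' suggests. When $c<0$, \eqref{eq:d-bound} can hold while $\GEP_0$ has negative entries, and then the first inequality can fail.

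For example, take $N=3$, $|X_{11}|=|X_{22}|=2$, $|X_{12}|=0.1$ and $c=-1/2$. Then $\GEP_0$ has diagonal entries $1.5$ and off-diagonal entries $-0.4$, so $d_{\mathrm{l}}=1.1$ and $d_{\mathrm{u}}=2.2$. The choice $d=1.2$ satisfies \eqref{eq:d-bound}, with $\boldsymbol{m}$ and $\alpha$ positive, yet $\lambda_1(\GEP_0)=1.9>d$. So phrase the positivity of $\GEP_0$ as a consequence of existence (positive weights for the sparsifier), not of the interval for $d$.
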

\begin{proof}
By Lemma~\ref{lem:biadj-matrix}, we have $d = \sigma_1(\GEP) \ge \lambda_1(\GEP)$.
Combined with Eq.~\eqref{eq:the-second-largest-eigenvalue-of-sparse-graph-scheudling-matrix}, the statement holds.
\end{proof}
By the corollary, any feasible $(N,d,\varepsilon)$ EP-GC satisfies $\tfrac{\lambda_2(\GEP)}{d} \le 1$ and hence $T_{\mathrm{spec}} = O\left(\sqrt{\tfrac{NS}{N-S}}\right)$.

If $S=o(N)$, then $T_{\mathrm{spec}}=O(\sqrt{S})$ while $T_{\mathrm{sp}}=\Theta(\sqrt{N})$ for fixed $\varepsilon>0$, so the sparsification term dominates and the error is 
% primarily 
controlled by $e^\varepsilon-1$. If $S=\Theta(N)$, both terms are $\Theta(\sqrt{N})$, and the error depends jointly on $\tfrac{\lambda_2(\GEP)}{d}$ and $\varepsilon$.

Therefore, for parameter selection, $d$ should be chosen above $\lambda_1(\GEP_0)$ to ensure feasibility and suppress the spectral term. Beyond this threshold, further increasing $d$ yields diminishing returns, and the dominant tradeoff is governed by $\varepsilon$. Specifically, \textsc{DegreePreservingSparsify} returns a graph with $N^{1+o(1)}\varepsilon^{-2}$ edges~\cite{degree-preserving-sparsifier}: smaller $\varepsilon$ retains more edges and leads to lower error, whereas larger $\varepsilon$ yields a sparser encoding matrix at the cost of increased approximation error.

\section{Experimental Performance Evaluation}
\begin{figure}[!tb]
    \centering
    \includegraphics[width=\linewidth]{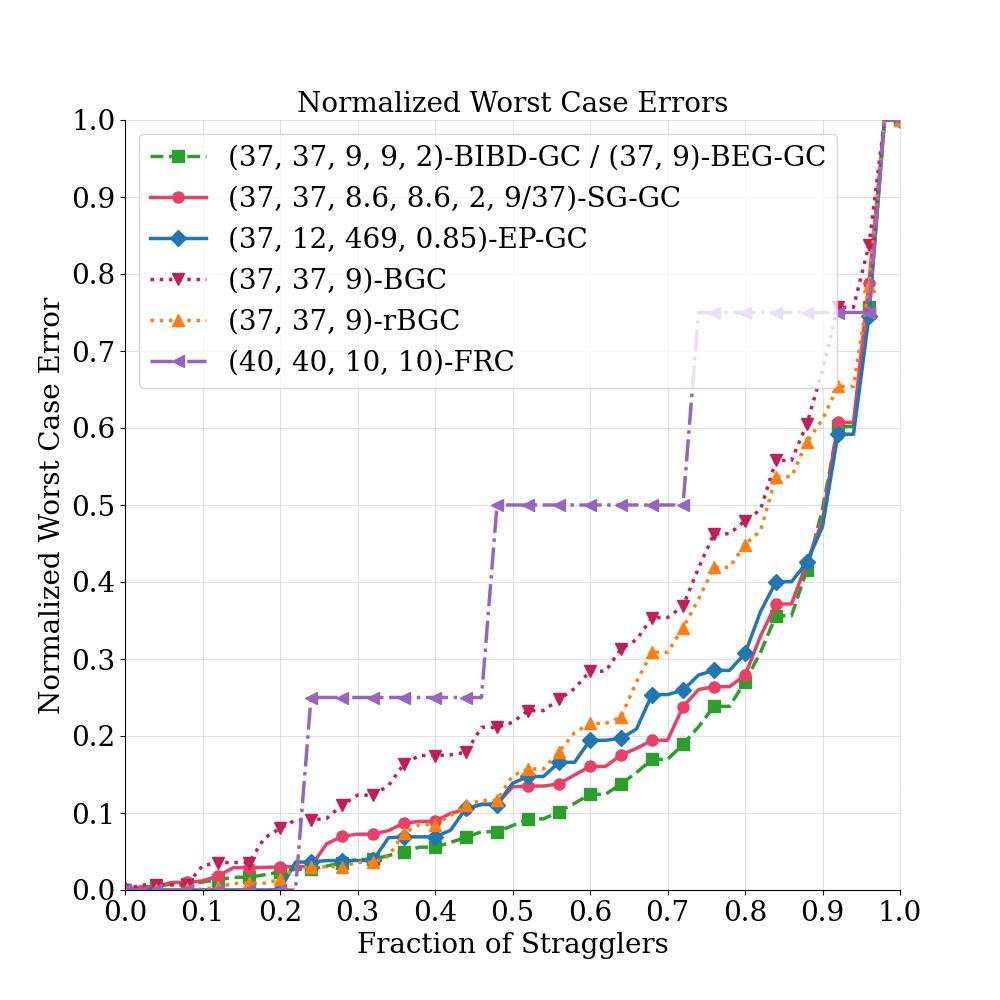}
    \caption{A comparison of worst-case normalized errors between our proposed sparse Gaussian gradient code (SG-GC), expansion-preserving gradient code (EP-GC), BIBD gradient code (BIBD-GC) and other exisiting gradient codes, with a specific focus on similar column densities, being 0.25 for EP-GC, and $0.24$ for SG-GC, BIBD-GC and the other gradient codes.}
    \label{fig:gradient-code-err-performance-comparison}
\end{figure}
In this section, we complement our theoretical analyses with a simulation comparing the worst-case error among six gradient codes with similar workload densities. We select a valid BIBD-GC and choose parameters for other gradient codes to match its column density as closely as possible. To obtain the EP-GC and SG-GC configurations whose performance is closest to that of BIBD-GC, we evaluated thousands of random seeds and selected the best-performing instances for plotting.

As shown in Fig.~\ref{fig:gradient-code-err-performance-comparison}, the BIBD gradient code serves as the baseline due to its near-optimal worst-case reconstruction error across the entire range of straggler fractions. It consistently achieves the lowest error and provides a reference for evaluating other constructions. 

The proposed SG-GC and EP-GC closely track the performance of BIBD-GC over a wide range of straggler fractions. In the low-straggler regime, all three codes exhibit very small errors, indicating that the redundancy structure is sufficient to ensure accurate recovery. As the straggler fraction increases to the mid regime, a modest gap emerges between the proposed methods and BIBD-GC. However, this gap remains small, and both SG-GC and EP-GC preserve a similar growth trend as BIBD-GC.

In the high-straggler regime, the performance of all codes degrades rapidly due to the severe loss of information. Notably, EP-GC becomes nearly indistinguishable from BIBD-GC in this region, suggesting that the expansion-preserving structure effectively captures the dominant spectral properties governing worst-case error. SG-GC also remains competitive, though with slightly larger deviations.

In contrast, other existing gradient codes such as BGC and rBGC exhibit consistently higher errors across all regimes, indicating weaker robustness to stragglers. The FRC construction shows the poorest performance, with pronounced stepwise behavior due to its coarse combinatorial structure, leading to significant error jumps as the number of stragglers increases.

Overall, while BIBD-GC remains the optimal benchmark, the proposed SG-GC and EP-GC achieve near-optimal performance with substantially greater flexibility in parameter selection. This makes them practical surrogates in settings where BIBD constructions are unavailable or infeasible.

\section{Conclusion}
In this paper, we proposed two gradient codes, termed sparse Gaussian gradient code and expansion-preserving gradient code. Both codes are formulated as probabilistic models that employ distinct sparsification schemes. Their error performance is analyzed theoretically and validated through numerical experiments. From both perspectives, they achieve performance comparable to the BIBD gradient code with an appropriate choice of system parameters.

Future work may explore refined sparsification mechanisms that account for intrinsic worker characteristics, such as computation speed and recovery behavior following failures. Extending these constructions to adaptive or dynamically evolving graph structures could further illuminate their effectiveness and scalability in large-scale distributed learning systems.

{\appendices

\section{Proof for Sparse Gaussian Gradient Code}
In this section, we prove Theorem~\ref{thm:SG-GC-error-performance}. We begin with a definition and the required lemmas.
\begin{definition}[{\cite[Definition 2.7.5]{vershynin2020high}}]
\label{def:sub-exponential}
    The sub-exponential norm of a random variable $Z$ is defined as
    \[
        \psione{Z} = \inf\{t>0:\ex{\exp(|Z|/t)\le 2}\}.
    \]
    If $\psione{Z}$ is finite, we say that $Z$ is sub-exponential.
\end{definition}

\begin{lemma}
\label{lemma:psi-two-upper-bound}
 Let ${\sigmamax} = \rho(N-S)^{1/2}(b+(N-S-1)c)^{1/2}$ and $\betamax=~\rho(N-S)a-1$. If $\sigma_{\max} < \sqrt{2}, c\ge0$ and $\rho \ge0$, then for any $i \in [K]$, $Z^2$ is a sub-exponential random variable with sub-exponential norm
    \[\psione{Z^2} \le \max\left\{2\sigmamax^2 + \tfrac{\betamax^2}{\ln{(\sqrt{2}/\sigmamax)}}, 4\sigmamax^2 \right\}.\]
     % where $\betamax=~\rho(N-S)a-1$.
\end{lemma}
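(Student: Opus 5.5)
We read $Z$ as the scaled reconstruction residual of row $i$ under the constant BIBD-type decoder:
\[
Z=\rho\sum_{j\in\F}\GSG_{ij}-1 .
\]
The plan is to condition on the Bernoulli mask $\boldsymbol{B}$, which makes $Z$ exactly Gaussian. We then bound the moment generating quantity $\ex{\exp(Z^2/t)}$ in closed form and take the expectation over $\boldsymbol{B}$ at the end by the tower property. Because the final bound will be uniform in the realization of $\boldsymbol{B}$, the averaging step is immediate.

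\textbf{Step 1 (conditional law).} Let $m=|\{j\in\F:\boldsymbol{B}_{ij}=1\}|\le N-S$. Row $i$ of $\boldsymbol{X}$ is $\gaussian{a\ones{N}}{\Sigmaa}$, so conditionally on $\boldsymbol{B}$ we have $Z\sim\gaussian{\beta_B}{\sigma_B^2}$ with
\[
\beta_B=\rho m a-1,\qquad \sigma_B^2=\rho^2\big(mb+m(m-1)c\big).
\]
Since $c\ge 0$ and $\rho\ge0$, $\sigma_B^2$ is nondecreasing in $m$, so $\sigma_B\le\sigmamax$. The mean is affine in $m$ and ranges over $[-1,\betamax]$. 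To control $|\beta_B|$ by $\betamax$ we need $\betamax\ge 1$, or else we must treat the endpoint $-1$ separately. The rough size is $\rho(N-S)a\approx L^2(N-S)/(K\gamma(L+\lambda(N-S-1)))$, which should make $\betamax\ge 1$ in the regime of interest; I will check this, and otherwise bound $\beta_B^2\le\max\{1,\betamax^2\}$.

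\textbf{Step 2 (Gaussian computation).} For $W\sim\gaussian{\beta}{\sigma^2}$ and $t>2\sigma^2$, completing the square gives
\[
\ex{e^{W^2/t}}=\Big(1-\tfrac{2\sigma^2}{t}\Big)^{-1/2}\exp\!\Big(\tfrac{\beta^2}{t-2\sigma^2}\Big).
\]
Both factors are increasing in $\sigma$ and $|\beta|$, so it suffices to plug in $\sigmamax$ and $\betamax$. We then need to find $t$ making the product at most $2$.

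\textbf{Step 3 (choice of $t$; the main obstacle).} The first factor is at most $\sqrt2$ once $t\ge4\sigmamax^2$. The delicate part is the second factor, where the constants must match the stated $\ln(\sqrt2/\sigmamax)$, and the hypothesis $\sigmamax<\sqrt2$ is exactly what makes that logarithm positive. I would take
\[
t\ge 2\sigmamax^2+\tfrac{\betamax^2}{\ln(\sqrt2/\sigmamax)},
\]
which bounds the exponential factor by $\sqrt2/\sigmamax$. I would then pair this with a sharper bound on the first factor, such as $(1-2\sigma^2/t)^{-1/2}\le\sqrt{t/(t-2\sigma^2)}$, so that the product is at most $2$. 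If this pairing does not close, the fallback is to split the budget as $\sqrt2\cdot\sqrt2$: take $t\ge4\sigmamax^2$ for the first factor and $t\ge 2\sigmamax^2+\betamax^2/\ln\sqrt2$ for the second. This reproduces the claimed form up to the constant in the logarithm. Taking $t$ equal to the maximum of the two thresholds gives $\ex{e^{Z^2/t}\mid\boldsymbol{B}}\le2$ for every realization of $\boldsymbol{B}$. Averaging over $\boldsymbol{B}$ then yields $\psione{Z^2}\le t$, as stated.
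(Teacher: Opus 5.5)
Your skeleton matches the paper's proof: condition on the mask so that $Z\mid\boldsymbol{B}\sim\gaussian{\beta_B}{\sigma_B^2}$, evaluate the conditional expectation of $e^{Z^2/t}$ in closed form, cap the prefactor by $\sqrt{2}$ via $t\ge 4\sigmamax^2$, and push $m$ to $N-S$. The step that fails is your primary plan in Step~3, and it cannot be repaired, because the bound as stated is false whenever $\sigmamax<1/\sqrt{2}$. With your (correct) identity, at $t^\star=2\sigmamax^2+\betamax^2/\ln(\sqrt{2}/\sigmamax)$ the exponential factor equals exactly $\sqrt{2}/\sigmamax$. That exceeds $2$ when $\sigmamax<1/\sqrt{2}$, and the prefactor is at least $1$, so no sharper estimate of the prefactor can help. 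Taking $\gamma=1$ removes the mask, so $Z\sim\gaussian{\betamax}{\sigmamax^2}$ exactly and this is a genuine counterexample. For instance, $K=10$, $L=\lambda=1$, $\gamma=1$, $S=N-1$ give $a=0.1$, $b=c=0.09$, $\rho=1$, $\sigmamax=0.3$, $\betamax=-0.9$, and all hypotheses hold. The claimed bound is $t^\star\approx 0.70$, yet $\ex{e^{Z^2/t^\star}}\approx 5.5$; in fact $\psione{Z^2}\ge 2\sigmamax^2+\betamax^2/\ln 2\approx 1.35$. The paper reaches $\ln(\sqrt{2}/\sigmamax)$ only because its auxiliary lemma on $\ex{\exp(\alpha G^2+\theta G)}$ drops the $1/\sigma$ of the Gaussian density. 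It uses the prefactor $(1/\sigma^2-2\alpha)^{-1/2}$ instead of $(1-2\alpha\sigma^2)^{-1/2}$, so its conditional expectation carries a spurious factor $\sigma$ and its budget condition becomes $\sqrt{2}\,\sigma\, e^{\beta^2/(\sigma^2(q-2))}\le 2$. The hypothesis $\sigmamax<\sqrt{2}$ is an artifact of the same slip.

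Your fallback is therefore the statement that is actually provable, not a weakening. Splitting the budget as $\sqrt{2}\cdot\sqrt{2}$ gives $\psione{Z^2}\le\max\{2\sigmamax^2+\betamax^2/\ln\sqrt{2},\,4\sigmamax^2\}$ with no restriction on $\sigmamax$ (after the endpoint fix below). It implies the stated bound when $\sigmamax\ge 1$, and it still gives $\psione{X_i^2}=\Theta(1)$, which is all the proof of Theorem~\ref{thm:SG-GC-error-performance} uses.

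Your concern about the endpoint $\beta_B=-1$ is also well founded, and the check you propose fails in general. $c\ge 0$ means $\lambda\ge L^2/K$, and together with $L\le K$ this gives $\rho(N-S)a\le \frac{L(N-S)}{\gamma(K+L(N-S-1))}\le 1/\gamma$. Hence $|\betamax|<1$ as soon as $\gamma>1/2$. For $\gamma<1$ the realization $m=0$ has probability $(1-\gamma)^{N-S}>0$, and there $Z=-1$. A bound uniform in $\boldsymbol{B}$ must therefore accommodate $\beta_B^2=1$, e.g.\ by using $\max\{1,\betamax^2\}$ in place of $\betamax^2$, or you must treat small $m$ by averaging rather than uniformly. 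The paper's monotonicity-in-$p$ step has the same hole: it argues from monotonicity of $\rho pa-1$, whereas the bound depends on $(\rho pa-1)^2$.
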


\begin{lemma} If $Z$ is a sub-exponential random variable then $Z-\ex{Z}$ is a sub-exponential with
\label{lemma:centering-a-sub-exponential-random-variable}
\[
    \psione{Z-\ex{Z}} \leq C\psione{Z},
\]
where $C>0$ is an absolute constant.
\end{lemma}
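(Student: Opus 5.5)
The plan is to use only the definition of the sub-exponential norm (Definition~\ref{def:sub-exponential}) and to compare the moment generating function of $|Z-\ex{Z}|$ with that of $|Z|$. First handle $\ex{Z}$. Since $\psione{Z}<\infty$, Jensen's inequality gives $\exp(\ex{|Z|}/\psione{Z}) \le \ex{\exp(|Z|/\psione{Z})} \le 2$. Hence $|\ex{Z}| \le \ex{|Z|} \le \ln(2)\,\psione{Z}$. Consequently the constant random variable $\ex{Z}$ has finite sub-exponential norm. For a constant $m$, the condition $\exp(|m|/t)\le 2$ means exactly $t\ge |m|/\ln 2$, so $\psione{\ex{Z}} = |\ex{Z}|/\ln 2 \le \psione{Z}$.

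Next, show that $\psione{\cdot}$ satisfies the triangle inequality, so that
\[
\psione{Z-\ex{Z}} \le \psione{Z}+\psione{\ex{Z}} \le 2\psione{Z},
\]
which gives the claim with $C=2$. To prove the triangle inequality, set $t_1=\psione{Z_1}$ and $t_2=\psione{Z_2}$ (assume both are positive, otherwise the statement is trivial) and let $t=t_1+t_2$. Then
\[
\frac{|Z_1+Z_2|}{t} \le \frac{t_1}{t}\cdot\frac{|Z_1|}{t_1}+\frac{t_2}{t}\cdot\frac{|Z_2|}{t_2}.
\]
Since $\exp$ is convex,
\[
\ex{e^{|Z_1+Z_2|/t}} \le \frac{t_1}{t}\,\ex{e^{|Z_1|/t_1}}+\frac{t_2}{t}\,\ex{e^{|Z_2|/t_2}} \le 2.
\]
Here we need the infimum in the definition to be attained, i.e.\ $\ex{e^{|Z_i|/t_i}}\le 2$. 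This follows from monotone convergence applied to $t\downarrow t_i$.

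The only delicate step is this attainment of the infimum together with the convexity argument, which is essentially the proof that $\psione{\cdot}$ is a norm (an Orlicz norm). Everything else is a direct one-line estimate. A shortcut is to avoid the triangle inequality altogether and bound directly. Using $|Z-\ex{Z}|\le |Z|+\ln 2\,\psione{Z}$, we get for $t=C\psione{Z}$
\[
\ex{e^{|Z-\ex{Z}|/t}} \le 2^{1/C}\,\ex{e^{|Z|/(C\psione{Z})}} \le 2^{1/C}\cdot 2^{1/C}
\]
by Jensen (since $x\mapsto x^{1/C}$ is concave). Taking $C=2$ makes the right-hand side equal to $2$, which again gives the claim with $C=2$.
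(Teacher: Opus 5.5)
Your proof is correct and follows the same route as the paper: the triangle inequality for $\psione{\cdot}$ combined with $\psione{\ex{Z}}\lesssim|\ex{Z}|\le\ex{|Z|}\lesssim\psione{Z}$. Where you differ, you are more self-contained. You derive $\ex{|Z|}\le \ln 2\,\psione{Z}$ directly from the definition by Jensen rather than citing the moment bound of Vershynin's Proposition~2.7.1, and you prove the triangle inequality and the attainment of the infimum instead of assuming them, which gives the explicit constant $C=2$; your one-shot $2^{1/C}\cdot 2^{1/C}$ bound even avoids the triangle inequality. You also correctly write $+\psione{\ex{Z}}$, where the paper's displayed step has a sign typo ($-$ instead of $+$).
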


\begin{lemma}[Bernstein's inequality {\cite[Theorem 2.8.1]{vershynin2020high}}]
\label{lem:bernstein's-inequality}
    Let $Z_1, \dots, Z_K$ be independent, mean zero, sub-exponential random variables. Then, for every $t>0$, we have
    % \begin{align*}
    %     &\prob{\left|\textstyle{\textstyle\sum_{i=1}^K Z_i\ge t}\right|}\le 2\exp\bigg[
    %         -C\min\Big\{
    %             \tfrac{t^2}{\sum_{i=1}^K\psione{Z_i}^2},
    %             \tfrac{t}{\max_i\psione{Z_i}}
    %         \Big\}
    %     \bigg],
    % \end{align*}
    \begin{align*}
        &\prob{\left|\textstyle{\textstyle\sum_{i=1}^K Z_i\ge t}\right|}\le 2\exp\Big[
            -C\min\big\{
                \tfrac{t^2}{\sum_{i=1}^Kq^2},
                \tfrac{t}{\max_iq}
            \Big\}
        \Big],
    \end{align*}
    where $q = \psione{Z}$ and $C>0$ is an absolute constant.
\end{lemma}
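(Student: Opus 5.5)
The statement is the standard Bernstein inequality for sums of independent, mean-zero, sub-exponential variables. I read $q$ as $q_i=\psione{Z_i}$, or as a common upper bound on these norms, which is how the lemma will be applied. The plan is the usual Chernoff argument:

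\begin{enumerate}
\item bound the moment generating function of each $Z_i$ for small $|\lambda|$ using only Definition~\ref{def:sub-exponential};
\item multiply these bounds using independence;
\item optimize the exponential Markov bound over $\lambda$;
\item symmetrize to obtain the two-sided bound with the prefactor $2$.
\end{enumerate}

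\textbf{Step 1 (moments and MGF).} Fix $i$ and let $q_i=\psione{Z_i}$. Since $e^{x}\ge x^p/p!$ for $x\ge 0$, the defining condition $\ex{\exp(|Z_i|/q_i)}\le 2$ gives
\[
\ex{|Z_i|^p}\le 2\,p!\,q_i^p \qquad \text{for all } p\ge 1.
\]
Expand $\ex{e^{\lambda Z_i}}$ in a Taylor series; interchanging sum and expectation is justified by dominated convergence, since $e^{|\lambda||Z_i|}$ is integrable when $|\lambda|<1/q_i$. The linear term vanishes because $\ex{Z_i}=0$. Hence, for $|\lambda| q_i\le 1/2$,
\[
\ex{e^{\lambda Z_i}}\le 1+2\sum_{p\ge 2}(|\lambda| q_i)^p\le 1+4\lambda^2q_i^2\le \exp(4\lambda^2 q_i^2).
\]
This is the only step that uses more than routine algebra, so I expect it to be the main, though mild, obstacle. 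The points to watch are the constant in the moment bound and the geometric-series bound, which needs $|\lambda|q_i\le 1/2$.

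\textbf{Step 2 (Chernoff).} Take $0<\lambda\le 1/(2\max_i q_i)$. By independence and Markov's inequality,
\[
\prob{\textstyle\sum_i Z_i\ge t}\le \exp\Big(-\lambda t+4\lambda^2\textstyle\sum_i q_i^2\Big).
\]
Choose
\[
\lambda=\min\Big\{\tfrac{t}{8\sum_i q_i^2},\ \tfrac{1}{2\max_i q_i}\Big\}.
\]
If the first term is the minimum, the exponent is $-t^2/(16\sum_i q_i^2)$. If the second term is the minimum, then $4\lambda^2\sum_i q_i^2\le \lambda t/2$, so the exponent is at most $-\lambda t/2=-t/(4\max_i q_i)$. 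In both cases the tail is at most
\[
\exp\Big[-C\min\big\{t^2/\textstyle\sum_i q_i^2,\ t/\max_i q_i\big\}\Big]
\]
with $C=1/16$.

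\textbf{Step 3 (two sides).} The variables $-Z_i$ are also independent and mean-zero, with the same sub-exponential norms. Applying Step~2 to them bounds the lower tail $\prob{\sum_i Z_i\le -t}$ by the same quantity. A union bound over the two tails gives the factor $2$ and completes the proof; with a common bound $q\ge q_i$ for all $i$, this is exactly the form stated in Lemma~\ref{lem:bernstein's-inequality}.
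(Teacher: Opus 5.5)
Your proof is correct and is the standard one. The paper does not prove this lemma but cites Vershynin's Theorem~2.8.1, whose proof follows exactly your route: an MGF bound $\ex{e^{\lambda Z_i}}\le \exp(C'\lambda^2\psione{Z_i}^2)$ for $|\lambda|\lesssim 1/\max_i\psione{Z_i}$, a Chernoff bound with the constrained choice of $\lambda$, and the same argument applied to $-Z_i$; the only difference is that you derive the MGF bound directly from the $\psi_1$ definition with explicit constants, rather than invoking Vershynin's Proposition~2.7.1, and your reading of the garbled statement (the event $\bigl|\sum_i Z_i\bigr|\ge t$, with $q$ meaning $\psione{Z_i}$ or a common upper bound on these norms) matches how the lemma is used in the proof of Theorem~\ref{thm:SG-GC-error-performance}.
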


\begin{lemma}
\label{lemma:expectation-of-two-norm-of-X}
Let $\GSG$ be the encoding matrix for an $(N, K, L, R, \gamma)$-SG-GC and $\rho = \frac{L}{L+\lambda(N-S-1)}$. For any non-straggling set $\F$ with $|\F| =~N-~S$, the expected error using constant decoding vector $\rho\ones{\F}$ is given by
  % \[
  %   \ex{\norm{\rho\boldsymbol{E}\ones{\F} - \ones{K}}_2^2} =\textstyle{\ex{\sum_{i=1}^K X_i^2} }=K\cdot\err{\GB},
  %   \]
  \[
    \mathbb{E}\big[{\|\rho\GSG\ones{\F} - \ones{K}}\|_2^2\big]=K\cdot\err{\GB},
    \]
     where $\err{\GB}$ is the error of an $(N, K, L, R, \lambda)$-BIBD-GC.
\end{lemma}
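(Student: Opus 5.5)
The plan is a direct second-moment computation, row by row, using only the three moment criteria \eqref{row-exp}, \eqref{diff-intsec-exp} and \eqref{same-intsec-exp} imposed on the construction. No concentration argument is needed, because the statement concerns the expectation only.

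\textbf{Step 1: reduce to one row.} Fix $\F$ with $|\F|=N-S$. For each $i\in[K]$ write the $i$-th coordinate of the residual as
\[
Z_i=\rho\sum_{j\in\F}\GSG_{ij}-1 .
\]
Then $\|\rho\GSG\ones{\F}-\ones{K}\|_2^2=\sum_{i=1}^K Z_i^2$. By linearity of expectation, it suffices to compute $\ex{Z_i^2}$ for a single row. Expanding the square gives
\[
\ex{Z_i^2}=\rho^2\,\mathbb{E}\Big[\Big(\textstyle\sum_{j\in\F}\GSG_{ij}\Big)^2\Big]-2\rho\,\mathbb{E}\Big[\textstyle\sum_{j\in\F}\GSG_{ij}\Big]+1 .
\]

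\textbf{Step 2: evaluate the first and second moments.} The linear term follows from \eqref{row-exp}:
\[
\mathbb{E}\Big[\textstyle\sum_{j\in\F}\GSG_{ij}\Big]=(N-S)\tfrac{L}{K}.
\]
For the quadratic term, split the double sum over $j,k\in\F$ into two parts.
\begin{itemize}
\item The $N-S$ diagonal terms contribute $L/K$ each, by \eqref{same-intsec-exp}.
\item The $(N-S)(N-S-1)$ off-diagonal terms contribute $\lambda/K$ each, by \eqref{diff-intsec-exp}.
\end{itemize}
Hence
\[
\mathbb{E}\Big[\Big(\textstyle\sum_{j\in\F}\GSG_{ij}\Big)^2\Big]=\tfrac{N-S}{K}\big(L+\lambda(N-S-1)\big).
\]
These identities hold regardless of which $\F$ is chosen, since every entry and every pair of distinct columns in a row have the same moments.

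\textbf{Step 3: substitute $\rho$ and sum.} Plug in $\rho=L/(L+\lambda(N-S-1))$. One factor of $\rho$ cancels the bracket in the quadratic term, so that term becomes $\rho(N-S)L/K$. Combining the terms gives
\[
\ex{Z_i^2}=1-\rho\tfrac{(N-S)L}{K}=1-\tfrac{1}{K}\cdot\tfrac{L^2(N-S)}{L+\lambda(N-S-1)} .
\]
The right-hand side is exactly $\err{\GB}$ from \eqref{eq:bibd-error}. Summing over the $K$ independent rows yields $K\cdot\err{\GB}$, as claimed.

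The only point needing care is the bookkeeping of diagonal versus off-diagonal pairs together with the cancellation produced by the particular value of $\rho$. This value is precisely the minimizer of the quadratic in $\rho$, which explains why the BIBD error formula is recovered. Beyond that, I expect no real obstacle.
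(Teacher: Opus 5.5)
Your proposal is correct and follows essentially the same route as the paper. The paper carries out the identical second-moment expansion in aggregated matrix form, via $\mathbb{E}[{\GSG}^\top\ones{K}]=L\ones{N}$ and $\mathbb{E}[{\GSG}^\top\GSG]=(L-\lambda)\ident{N}+\lambda\allones{N\times N}$, which is exactly your diagonal/off-diagonal bookkeeping summed over the $K$ rows. One small point: the final summation over rows uses only linearity of expectation, so independence of the rows is not needed there.
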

The proofs of Lemmas~\ref{lemma:psi-two-upper-bound}, \ref{lemma:centering-a-sub-exponential-random-variable}, and \ref{lemma:expectation-of-two-norm-of-X} are provided in Appendices~\ref{proof of lemma:psi-two-upper-bound}, \ref{proof of lemma:centering-a-sub-exponential-random-variable}, and \ref{proof of lemma:expectation-of-two-norm-of-X}, respectively.

\subsection{Proof of Theorem \ref{thm:SG-GC-error-performance}}
\begin{proof}
\label{proof of thm:SG-GC-error-performance}
Suppose an $(N, K, L, R, \lambda, \gamma)$-SG gradient code exists and
% Let $\GSG_\F$ be its encoding matrix with a non-straggling set $\F\subset[N]$ and $|\F| = N-S$. For $i\in [K]$, let
% \begin{align}
% \label{eq:definitons-of-X_i-and-Y_i}
% X_i = \rho \textstyle\sum\limits_{j\in \F}^{N}{\GSG_{ij}} -1\quad \text{and}\quad Y_i = X_i^2 - \ex{X_i^2}.    
% \end{align}
denote its encoding matrix as $\GSG$. For any non-straggling set $\F\subset[N]$ with $|\F|=N-S,$ $ i\in[K]$, we define
\begin{align}
X_i = \rho \textstyle\sum_{j\in \F} \GSG_{ij} -1,\quad 
Y_i = X_i^2 - \ex{X_i^2}.
\end{align}

Since rows in $\GSG$ are sampled from the same multivariate Gaussian distribution, i.e. $\gaussian{\muu}{\Sigmaa}$, independently (cf. Sec.\ref{sec:SG-GC-construction}), then $Y_1, \dots, Y_K$ is i.i.d. with mean zero. 
Under the assumptions on $\rho = \frac{L}{L+\lambda(N-S-1)}>0$, $\rho^2(N-S)(b+(N-S-1)c)<2$ and $c\geq 0$, Lemma~\ref{lemma:psi-two-upper-bound} gives
\begin{equation*}
    \psione{X_i^2}\le \max\bigg\{2\sigmamax^2 + \tfrac{\betamax^2}{\ln{(\tfrac{\sqrt{2}}{\sigmamax})}}, 4\sigmamax^2\bigg\},
\end{equation*}
where $2\sigmamax^2 + \betamax^2\frac{1}{\ln{(\frac{\sqrt{2}}{\sigmamax})}} = \Theta (1)$ and $4\sigmamax^2 = \Theta(1)$.
So, 
% $\psione{X_i^2}$ is finite, and thus
$X_i^2$ is a sub-exponential random variable by Definition~\ref{def:sub-exponential}.
% The proof of Lemma~\ref{lemma:centering-a-sub-exponential-random-variable} can be found in Appendix \ref{proof of lemma:centering-a-sub-exponential-random-variable}.

From Lemma~\ref{lemma:centering-a-sub-exponential-random-variable}, $\psione{Y_i}\le C_1\psione{X_i^2}$ for a constant $C_1>0$, and thus $Y_i$ is also a sub-exponential random variable according to Lemma~\ref{lemma:psi-two-upper-bound}.
Similarly, according to Lemma~\ref{lemma:psi-two-upper-bound}, $Y_i$ is also a sub-exponential random variable and thus
\begin{align*}
    \psione{Y_i}&\le C_1 \psione{X_i^2}\le C_1\max\bigg\{2\sigmamax^2 + \tfrac{\betamax^2}{\ln{(\tfrac{\sqrt{2}}{\sigmamax})}}, 4\sigmamax^2\bigg\}.
\end{align*}
Since rows of $\GSG$ are i.i.d.\ Gaussian~(cf. Sec.\ref{sec:SG-GC-construction}), $\{Y_i\}_{i=1}^K$ are i.i.d.\ mean-zero. We apply Lemma \ref{lem:bernstein's-inequality} on  $\{Y_i\}_{i=1}^K$ and choose $t = K^{1/2+\delta}$, where $0<\delta<1/2$.
So, for any $\F\subset[N]$ with $|\F| = N-S$, we have 
\begin{align}
    &\prob{\left|\textstyle\sum_{i=1}^{K}Y_i\right|\ge K^{1/2+\delta}}\notag\\
    &\qquad \le 2\exp\bigg[
        -C_2 \min\Big(
            \tfrac{K^{1+2\delta}}{K\cdot C_1\cdot \psione{X_i^2}^2},
            \tfrac{K^{1/2+\delta}}{\psione{X_i^2}}
        \Big)
        \bigg] \label{eq:Y-concentration},
\end{align}
where $C_2>0$. Thus,
for any $\F\subset[N]$ with $|\F| = N-S$,
\begin{align}
    &\prob{\big|\err{\GSG} - \mathbb{E}[{\err{\GSG}}]\big| \geq K^{1/2+\delta}}\notag\\
        &\qquad\overset{\mathrm{(a)}}{=}\prob{\left|\textstyle\sum_{i=1}^{K}X_i - \sum_{i=1}^{K}\ex{X_i}\right|\geq K^{1/2+\delta}}\notag\\
        &\qquad=\prob{\left|\textstyle\sum_{i=1}^{K}\left(X_i - \ex{X_i}\right)\right|\geq K^{1/2+\delta}}\notag\\
        &\qquad= \prob{\left|\textstyle\sum_{i=1}^{K}Y_i\right|\geq K^{1/2+\delta}}
        \notag\\
        &\qquad\overset{(b)}{\le}2\exp\bigg[
        -C_2 \min\Big(
            \tfrac{K^{2\delta}}{ C_1\cdot\psione{X_i^2}^2},
            \tfrac{K^{1/2+\delta}}{\psione{X_i^2}}
        \Big)
        \bigg]
        \\
        &\qquad=2\exp\bigg[
        -C_2 \min\left(
            \tfrac{K^{2\delta}}{C_1\psione{X_i^2}^2},
            \tfrac{K^{1/2+\delta}}{\psione{X_i^2}}
        \right)
        \bigg], 
        \label{eqn:prob-existence}
\end{align}
where an absolute constant $C_2>0$. Above, (a) and (b) follow from Lemma~\ref{lemma:expectation-of-two-norm-of-X} and Eq.~\eqref{eq:Y-concentration}, respectively. Therefore, we have
\begin{align}
    &\prob{\err{\GSG}\leq \err{\GB} + K^{\delta-1/2}} \notag\\
    &= \prob{K\cdot \err{\GSG}\leq K\err{\GB} + K^{1/2+\delta}}\notag\\
    &\overset{\mathrm{(c)}}{\ge} \mathbb{P}\bigg[\max_{\substack{\F\subset[N]\\|\F| = N-S}} \err{\GSG}  \leq K\cdot\err{\GB} + K^{1/2+\delta}\bigg]\notag\\
    &\overset{\mathrm{(d)}}{=} 
    \prob{\substack{
    \forall \F\subset [N], |\F| = N-S,\\
    \err{\GSG}  \le \ex{\err{\GSG}} + K^{1/2+\delta}}}\notag\\
    &\ge\prob{\substack{
    \forall \F\subset [N], |\F| = N-S,\\
    \big|\err{\GSG} - \ex{\err{\GSG} }\big| < K^{1/2+\delta}}}\notag\\
    &= 1-\prob{\substack{
    \exists \F, |\F| = N-S, \text{ s.t. }\\
    \big|\err{\GSG}  - \ex{\err{\GSG} }\big| \ge K^{1/2+\delta}}}\notag\\
    &\overset{\mathrm{(e)}}{\ge} 1- 2\tbinom{N}{S}\prob{\left|\err{\GSG} - \ex{\err{\GSG} }\right| \geq K^{1/2+\delta}}
    \label{eq:error-for-all}
\end{align}
Above, (c) follows since 
\begin{align*}
    & K\cdot\err{\GSG} \triangleq \max_{\substack{\F\subset[N]\\|\F| = N-S}}
\|\GSG_{\F}\vopt - \ones{K}\|_2^2 \\
&\le \max_{\substack{\F\subset[N]\\|\F| = N-S}}
\|\rho\GSG_{\F}\ones{N-S}-\ones{K}\|_2^2 =\max_{\substack{\F\subset[N]\\|\F| = N-S}} \errc.
\end{align*}
(d), (e) 
% and (f) 
follow from Lemma~\ref{lemma:expectation-of-two-norm-of-X}, the union bound, 
% and Eq. \eqref{eqn:prob-existence}, 
respectively.

% Recall that
% $
%     a = \frac{L}{K\gamma},
%     b = (\frac{L\gamma}{K}-\frac{L^2}{K^2})\frac{1}{\gamma},
%     c = (\frac{\lambda}{K}-\frac{L^2}{K^2})\frac{1}{\gamma^2},
%     \rho = \frac{L}{L+\lambda(N-S-1)},
%     {\sigmamax}^2 = \rho^2((N-S)b+((N-S)^2-(N-S))c)$ and $\betamax = \rho(N-S)a-1$.
Recall system parameters satisfy Eq.\eqref{eq:sys-para}. With the assumptions of $\gamma = \Theta(1), K = \Theta(N), L = \Theta(N)$ and $\lambda = \Theta(N)$, we have
$ a = \Theta(1),
    b = \Theta(1),
    c = \Theta(1).$
Moreover, since $S = O(N^{\alpha})$ for $0<\alpha<2\delta$, we obtain
 $\rho = \Theta(1/N),
    \sigmamax^2 = \Theta(1),
    \betamax = \Theta(1).$
Thus, $2\sigmamax^2 + \frac{\betamax^2}{\ln{(\frac{\sqrt{2}}{\sigmamax})}} = \Theta(1)$
and $4\sigmamax^2 = \Theta(1).$
Furthermore, $\psione{X_i^2} = \Theta(1).$
% \begin{equation*}
%     \psione{X_i^2} = \Theta(1).
% \end{equation*}
 It follows that $K^{2\delta}/\psione{X_i^2}^2 =\Theta(N^{2\delta})$ and $K^{1/2+\delta}/\psione{X_i^2} =\Theta(N^{1/2+\delta})$. Since when $0<\delta<1/2$,  $N^{1/2+\delta}$ grows faster than $N^{2\delta}$, we obtain 
\begin{equation}
\min\Big(
            \tfrac{K^{2\delta}}{\psione{X_i^2}^2},
            \tfrac{K^{1/2+\delta}}{\psione{X_i^2}}
        \Big) = \Theta(N^{2\delta}).
\label{eq:min-order}
\end{equation}
Thus, combined with Eqs.\eqref{eq:Y-concentration}, \eqref{eqn:prob-existence}, \eqref{eq:error-for-all} and \eqref{eq:min-order}, we have
\begin{align*}
    &\mathbb{P}\big[\err{\GSG}\leq \err{\GB} + K^{\delta-1/2}\big]\\
    &\qquad\ge 1-2\tbinom{N}{S}\exp[
        -\Theta(N^{2\delta})
    ].
\end{align*}
Note that
$\binom{N}{S} \le N^S = \exp(S\ln N)$. When $S = O(N^\alpha)$ and $0 < \alpha < 2\delta$, we have
\begin{align*}
\tbinom{N}{S}\exp(-\Theta(N^{2\delta})) 
&= \exp(-\Theta(N^{2\delta})+O(N^{\alpha}\ln N))\\
&= \exp(-\Theta(N^{2\delta})) = o(1),
\end{align*}
which completes the proof of Theorem~\ref{thm:SG-GC-error-performance}.
\end{proof}

\section{Proof for Lemmas}
% We supplement proof details for all the Lemmas.

\subsection{Proof of Lemma \ref{lemma:diff-decoded-result-bound}}
\label{sec:proof-of-lemma-diff-decoded-result-bound}
\begin{proof}
\label{proof of lemma:diff-decoded-result-bound}
    % We aim to upper bound term (a) and (b) in Eq.\eqref{eq:error-upperbound-decomposition}, separately.
    With the choice of decoding vector being defined by Eq.\eqref{eq:decoding_vec}, we first bound $\norm{\boldsymbol{\Delta}_1}_2$ in Eq.\eqref{eq:error-upperbound-decomposition} by arranging it as
    \begin{align}
    \|\boldsymbol{\Delta}_1\|_2 &=\|\GEP_{\varepsilon}\v_\F-\GEP\v_\F \|_2\notag\\ 
    &= \frac{1}{d} \|(\GEP_{\varepsilon} - \GEP)(\ones{N}+\a_\F)\|_2\notag\\
    \label{eq:first-term}
    &\overset{(a)}{\leq} \frac{1}{d}\|\GEP_{\varepsilon} - \GEP\|_2 \norm{\ones{N} + \a_\F}_2,
\end{align}
where (a) follows from the sub-multiplicative property of a matrix norm.
To complete the bounding for Eq.\eqref{eq:first-term}, we first finish the calculation for $\norm{\ones{N} + \a_\F}_2$ first as follows:
\begin{align}
    \norm{\ones{N} + \a_\F}_2 &= \sqrt{\textstyle\sum_{i\not\in \F} (\ones{N} + \a_\F)_i^2 +  \sum_{i\in \F} (\ones{N} + \a_\F)_i^2}\notag\\
    &= \sqrt{\textstyle\sum_{i\not \in \F} (1-1)^2 + \sum_{i\in \F} \left(1+\frac{S}{N-S}\right)^2}\notag\\
    \label{eq:decodingvec-l2norm}
    &= \sqrt{N-S}\left(1+\tfrac{S}{N-S}\right).
\end{align}
Then, we bound $\norm{\GEP_{\varepsilon} - \GEP}_2$. Let $\adj$ and $\adj_\varepsilon$ be the adjacency matrices of $G$ and $G_\varepsilon$ respectively. We have
\begin{align*}
    \adj - \adj_\varepsilon 
    &= \begin{bmatrix}
        \allzeros{} & \GEP \notag\\
        {\GEP}^\top & \allzeros{} 
    \end{bmatrix} - \begin{bmatrix}
        \allzeros{} & \GEP_\varepsilon\\
        {\GEP_{\varepsilon}}^\top & \allzeros{}
    \end{bmatrix}     \notag\\
    &= \begin{bmatrix}
        \allzeros{} & \GEP - \GEP_\varepsilon \\
        {\GEP}^\top-{\GEP_\varepsilon}^\top &\allzeros{}
    \end{bmatrix} \notag\\
    &= \begin{bmatrix}
        \allzeros{} & \GEP - \GEP_\varepsilon \\
        (\GEP-\GEP_\varepsilon)^\top &\allzeros{}
    \end{bmatrix}.
\end{align*}
To bound $\norm{\GEP_{\varepsilon} - \GEP}_2$, it suffices to bound $\norm{\adj - \adj_\varepsilon}_2$. By the following lemma, the singular values of $\GEP$ coincide with the absolute eigenvalues of the associated bipartite adjacency matrix, so any spectral-norm bound for $\GEP_{\varepsilon} - \GEP$ transfers directly to $\norm{\adj-\adj_{\varepsilon}}_2$.
\begin{lemma} [Modified{\cite[Theorem 7.3.3]{Horn_Johnson_1985}}]
\label{lem: eig-adj-sig-biadj} 
    Let $\boldsymbol{B}$ be a real square matrix with singular values $\sigma_1(\boldsymbol{B}) \geq \sigma_2(\boldsymbol{B}) \geq \cdots \geq \sigma_n(\boldsymbol{B}) \geq 0$. For a block matrix $\boldsymbol{P}~\triangleq~\begin{bmatrix}
        \allzeros{} & \boldsymbol{B}\\
        \boldsymbol{B}^\top & \allzeros{}
    \end{bmatrix}$ with eigenvalues $\lambda_1(\boldsymbol{P})~\geq~\lambda_2(\boldsymbol{P})~\geq~\dots~\geq~\lambda_{2n}(\boldsymbol{P})$, we have
    \[
        \{\lambda_i(\boldsymbol{P})\}_{i=1}^{2n} = \{\sigma_i(\boldsymbol{B})\}_{i=1}^n \cup \{-\sigma_i(\boldsymbol{B})\}_{i=1}^n.
    \]
\end{lemma}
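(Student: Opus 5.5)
We will prove the statement (the modified version of \cite[Theorem 7.3.3]{Horn_Johnson_1985}) directly from the singular value decomposition of $\boldsymbol{B}$, by exhibiting a full orthonormal eigenbasis of $\boldsymbol{P}$.

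\textbf{Step 1 (SVD).} Write $\boldsymbol{B} = \boldsymbol{U}\boldsymbol{\Sigma}\boldsymbol{V}^\top$, where $\boldsymbol{U},\boldsymbol{V}\in\RR^{n\times n}$ are orthogonal and $\boldsymbol{\Sigma}=\mathrm{diag}(\sigma_1(\boldsymbol{B}),\dots,\sigma_n(\boldsymbol{B}))$. Denote the columns by $\u_i$ and $\v_i$. Then
\[
\boldsymbol{B}\v_i=\sigma_i\u_i,\qquad \boldsymbol{B}^\top\u_i=\sigma_i\v_i .
\]

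\textbf{Step 2 (candidate eigenvectors).} For each $i\in[n]$ define
\[
\boldsymbol{w}_i^{\pm}=\tfrac{1}{\sqrt2}\begin{bmatrix}\u_i\\ \pm\v_i\end{bmatrix}.
\]
A direct block multiplication gives
\[
\boldsymbol{P}\boldsymbol{w}_i^{\pm}=\tfrac{1}{\sqrt2}\begin{bmatrix}\pm\boldsymbol{B}\v_i\\ \boldsymbol{B}^\top\u_i\end{bmatrix}=\pm\sigma_i\boldsymbol{w}_i^{\pm}.
\]
So each $\sigma_i$ and each $-\sigma_i$ is an eigenvalue of $\boldsymbol{P}$.

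\textbf{Step 3 (completeness and multiplicities).} This is the only point needing care: repeated values, including zero singular values, must be counted with the correct multiplicity. Orthonormality of the $\u_i$ and of the $\v_i$ gives
\[
\langle\boldsymbol{w}_i^{s},\boldsymbol{w}_j^{t}\rangle=\tfrac12\bigl(\langle\u_i,\u_j\rangle+st\langle\v_i,\v_j\rangle\bigr)=\delta_{ij}\,\delta_{st}.
\]
Hence the $2n$ vectors $\boldsymbol{w}_i^{\pm}$ form an orthonormal basis of $\RR^{2n}$. Consequently
\[
\boldsymbol{P}=\boldsymbol{W}\,\mathrm{diag}(\sigma_1,\dots,\sigma_n,-\sigma_1,\dots,-\sigma_n)\,\boldsymbol{W}^\top
\]
with $\boldsymbol{W}$ orthogonal. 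The spectrum of $\boldsymbol{P}$, as a multiset, is therefore exactly $\{\sigma_i(\boldsymbol{B})\}\cup\{-\sigma_i(\boldsymbol{B})\}$, which is the claimed equality. Zero singular values contribute $0$ twice, consistent with the union being read as a multiset.

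\textbf{Consequence used afterwards.} Since $\boldsymbol{P}$ is symmetric, $\norm{\boldsymbol{P}}_2=\max_i|\lambda_i(\boldsymbol{P})|=\sigma_1(\boldsymbol{B})=\norm{\boldsymbol{B}}_2$. Applying this with $\boldsymbol{B}=\GEP-\GEP_\varepsilon$ gives $\norm{\GEP-\GEP_\varepsilon}_2=\norm{\adj-\adj_\varepsilon}_2$.
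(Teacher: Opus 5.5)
Your proof is correct, including the multiplicity bookkeeping: you read the union as a multiset, and you account for zero singular values. The paper does not prove this lemma itself; it imports it from Horn--Johnson. Your argument, which amounts to conjugating $\boldsymbol{P}$ by the orthogonal matrix $\tfrac{1}{\sqrt2}\begin{bmatrix}\boldsymbol{U} & \boldsymbol{U}\\ \boldsymbol{V} & -\boldsymbol{V}\end{bmatrix}$ to obtain $\mathrm{diag}(\boldsymbol{\Sigma},-\boldsymbol{\Sigma})$, is the standard proof of that cited theorem.

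The only related reasoning in the paper, in the proof of Lemma~\ref{lem:biadj-matrix}(ii), runs the other way. It starts from an eigenpair of the block matrix and derives $\boldsymbol{B}^\top\boldsymbol{B}\boldsymbol{v}=\lambda^2\boldsymbol{v}$, and similarly for $\boldsymbol{u}$. That shows $|\lambda|$ is a singular value, but on its own it neither shows that both $\pm\sigma_i$ occur nor matches multiplicities. Your explicit orthonormal eigenbasis settles both points at once.
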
%
We then obtain that
\begin{align}
\label{eq:same-matrix-norm}
    \norm{\GEP - \GEP_\varepsilon}_2 
    &\overset{(a)}{=} \sigma_1(\GEP - \GEP_\varepsilon) \overset{(b)}{=} \lambda_1(\adj - \adj_\varepsilon)\notag\\
    &\overset{(c)}{=} \sigma_1(\adj - \adj_\varepsilon) \overset{(a)}{=}\norm{\adj - \adj_\varepsilon}_2,
\end{align}
where (a) is because of the definition of matrix norm; (b) is because of Lemma \ref{lem: eig-adj-sig-biadj}; (c) is because of $\adj - \adj_\varepsilon$ is symmetric.

Now, we try to bound $\norm{\adj - \adj_\varepsilon}_2$. We will need to leverage the definition of Laplacian matrix.
\begin{definition} [Laplacian matrix]
\label{def:laplacian-matrix}
    For a weighted graph $G$ with $n$ nodes, the Laplacian matrix $\lap_{n\times n}$ is defined as 
    $
    \lap = \deg - \adj,
    $
    where $\deg$ is the degree matrix, and $\adj$ is the graph's weighted adjacency matrix.
\end{definition}

We use the Definition \ref{def:laplacian-matrix} and triangle inequality, and obtain 
\begin{align}
    \norm{\adj - \adj_\varepsilon}_2 &= \norm{(\deg-\lap) - (\deg_\varepsilon - \lap_\varepsilon)}_2\notag\\
    &= \norm{(\deg - \deg_\varepsilon) - (\lap - \lap_\varepsilon)}_2 \notag\\
    &\leq  \norm{\deg - \deg_\varepsilon}_2 + \norm{\lap -\lap_\varepsilon}_2.
    \label{eq:bound-on-adjacency-matrix-diff}
\end{align}
% \begin{align}
%         \norm{\GEP_{\varepsilon}\v_\F-\GEP\v_\F }_2 &\leq 2\varepsilon \sqrt{N-S}\left(1+\frac{S}{N-S}\right),\\
%         \norm{\GEP\v_\F - \ones{N}}_2 &\leq \frac{\lambda_2(\GEP)}{d} \sqrt{\frac{NS}{N-S}}.
% \end{align}

According to the construction of $\GEP$ (cf. Sec.\ref{sec:sparse-graph-gradient-code-encoding-matrix-construction}), we apply the following Lemma on the right-hand-side terms of Eq.\eqref{eq:bound-on-adjacency-matrix-diff}.

\begin{lemma}
% [Modified{\cite[Theorem 3.3]{degree-preserving-sparsifier}}]
\label{lemma:degree-preserving-sparsifier}
Let $\GEP$ be the encoding matrix of an $(N,d,\varepsilon)$ expansion-preserving gradient code. If the degree parameter obeys $|d| \le N^\alpha$ for some fixed constant $\alpha>0$, then all entries of $\GEP$ satisfy 
$\GEP_{ij} \le N^{O(1)}$. 
Let $G$ be the graph whose biadjacency matrix is $\GEP$, and let $G_\varepsilon=\textsc{DegreePreservingSparsify}(G,\varepsilon)$ for $ \varepsilon\in(0,1]$.
Denote by $\lap$ and $\lap_\varepsilon$ the Laplacians of $G$ and $G_\varepsilon$, respectively.  
Since $G_\varepsilon$ is a degree-preserving $\varepsilon$-sparsifier of $G$, their Laplacians satisfy
$    \norm{\lap - \lap_\varepsilon}_2 \le (e^\varepsilon-1) \norm{\lap}_2.$
\end{lemma}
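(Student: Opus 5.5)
The lemma has two claims: a polynomial bound on the entries of $\GEP$, and the Laplacian bound. The Laplacian bound is essentially the defining guarantee of \textsc{DegreePreservingSparsify} from~\cite{degree-preserving-sparsifier}. So the plan is to verify that $G$ meets the hypotheses of that theorem and then import its conclusion. The hypotheses are a weighted undirected graph with nonnegative edge weights that are polynomially bounded in the number of vertices $2N$.

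\emph{Step 1: polynomial entry bound.} Each entry of $\GEP$ is in one of three blocks, and I would treat them separately.
\begin{itemize}
\item Entries of $\GEP_0$ have the form $|X_{ij}|+c$. Only the Gaussians are random, and $\max_{i,j}|X_{ij}|\le \sqrt{2\ln N^{O(1)}}$ holds with overwhelming probability. Failing that, I would condition on the realized matrix.
\item The appended column has $\boldsymbol{m}_i = d-\sum_j(\GEP_0)_{ij}$. The feasibility condition \eqref{eq:d-bound} gives $0<\boldsymbol{m}_i<d\le N^\alpha$.
\item The corner entry satisfies $\alpha = \sum_{i}\boldsymbol{m}_i - d$, which is just row $N$'s sum condition rearranged. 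Hence $0<\alpha\le (N-1)N^\alpha$.
\end{itemize}
These also give $(\GEP_0)_{ij}\le d_{\mathrm{l}}<d\le N^\alpha$, since entries are positive. So every entry lies in $(0,N^{\alpha+1}]$, which is $N^{O(1)}$.

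\emph{Step 2: graph structure.} Let $\adj$ be the bipartite adjacency matrix built from $\GEP$. It is symmetric with nonnegative weights, and by construction every vertex has weighted degree $d$, so $\deg=d\ident{2N}$. The number of vertices is $2N$, and the weights are polynomial in $2N$, with a ratio of largest to smallest weight bounded polynomially if the sparsifier requires it. This is exactly the input class for the degree-preserving sparsifier.

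\emph{Step 3: invoke the guarantee.} The degree-preserving $\varepsilon$-sparsifier property gives the spectral sandwich
\[
e^{-\varepsilon}\lap\preceq\lap_\varepsilon\preceq e^{\varepsilon}\lap .
\]
This implies
\[
-(1-e^{-\varepsilon})\lap\preceq \lap_\varepsilon-\lap\preceq (e^\varepsilon-1)\lap .
\]
Since $1-e^{-\varepsilon}\le e^\varepsilon-1$ and $\lap\succeq 0$, the symmetric matrix $\lap-\lap_\varepsilon$ has every Rayleigh quotient bounded in absolute value by $(e^\varepsilon-1)\,\boldsymbol{x}^\top\lap\boldsymbol{x}/\boldsymbol{x}^\top\boldsymbol{x}$. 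Therefore $\norm{\lap-\lap_\varepsilon}_2\le(e^\varepsilon-1)\norm{\lap}_2$. The same guarantee also yields $\deg_\varepsilon=\deg$, which is \eqref{eq:degree-preserved}.

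\emph{Main obstacle.} The hard part is Step 1, specifically the lower side. The cited sparsifier may require a weight ratio bounded by $N^{O(1)}$, not just an upper bound, and the entries $\boldsymbol{m}_i$ or $\alpha$ could be arbitrarily close to $0$. If that condition is needed, I would handle it in one of two ways:
\begin{itemize}
\item Argue that, with high probability over the Gaussians, the slack in \eqref{eq:d-bound} is at least $N^{-O(1)}$.
\item Or round and drop negligibly small edges, absorbing the change into the $e^\varepsilon$ factor.
\end{itemize}
Otherwise the lemma is a direct transcription of the cited theorem.
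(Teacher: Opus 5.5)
Your Step 3 matches the paper's argument. Both start from the multiplicative sandwich $e^{-\varepsilon}\lap\preceq\lap_\varepsilon\preceq e^{\varepsilon}\lap$, subtract $\lap$, use $1-e^{-\varepsilon}\le e^{\varepsilon}-1$ and $\lap\succeq 0$, and take the supremum of the Rayleigh quotient of the symmetric matrix $\lap-\lap_\varepsilon$.

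The genuine difference is in the polynomial entry bound. The paper argues probabilistically:
\begin{itemize}
\item a Gaussian tail bound plus a union bound gives $\max_{i,j}(\GEP_0)_{ij}=O(\sqrt{\ln N})$;
\item sub-Gaussian concentration gives row sums of $\GEP_0$ of order $O(N)$;
\item then $|\boldsymbol{m}_i|\le |d|+O(N)$ and a crude estimate of the corner entry give $N^{O(1)}$.
\end{itemize}
All of this holds only with high probability. You instead use the feasibility condition \eqref{eq:d-bound}, which is needed anyway to make every weight positive. A matrix with positive entries and all row sums equal to $d$ has every entry strictly below $d$, so the polynomial bound holds deterministically. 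That is shorter and stronger. The paper's route only adds typical-size information about $\GEP_0$ that the lemma does not need.

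It also sidesteps a weak step in the paper. The paper justifies positivity of the corner entry by saying every term of $(2-N)d+\sum_{i,j}(\GEP_0)_{ij}$ is positive, which is false for $N\ge 3$. Positivity really comes from $d<d_{\mathrm{u}}$.

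Fix one slip in your Step 1. Row $N$'s sum condition is $\sum_i\boldsymbol{m}_i+\alpha=d$, so the corner entry is $d-\sum_i\boldsymbol{m}_i$, not $\sum_i\boldsymbol{m}_i-d$. With the correct sign it lies in $(0,d)$ directly, and your stated upper bound still holds.

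Your worry about a lower bound on the weights is legitimate. The sparsifier's binary-decomposition step effectively needs a polynomially bounded weight ratio. The paper's proof does not handle this either. It only gestures at it through the fixed-point rounding in its algorithm appendix, which is essentially your second option and is also left unanalyzed there. Both you and the paper also take for granted that the entries $|X_{ij}|+c$ of $\GEP_0$ are positive, which requires $c$ not to be too negative.
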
 
The proof of \ref{lemma:degree-preserving-sparsifier} is provided in Sec.\ref{sec:proof-of-lemma-degree-preserving-sparsifier}.

By Lemma \ref{lemma:degree-preserving-sparsifier}, $G_\varepsilon$ is a degree-preserving $\varepsilon$-sparsifier of $G$, thus Eq.\eqref{eq:degree-preserved} and $\norm{\lap - \lap_\varepsilon}_2~\le~(e^\varepsilon-1)\norm{\lap}_2$ hold. So, We have
$
    \norm{\adj - \adj_\varepsilon}_2 \leq \norm{\lap - \lap_\varepsilon}_2 \le (e^\varepsilon-1) \norm{\lap}_2.
$
Since $\norm{\lap}_2 = \norm{\deg - \adj}_2 = d-\lambda_n$, then $\norm{\lap}_2 \le d-(-d) = 2d$.
Therefore, Eq.\eqref{eq:same-matrix-norm} can be bounded as
\begin{align}
%\label{eq:same-matrix-norm-upperbound}
    \norm{\GEP - \GEP_\varepsilon}_2 \leq 2(e^\varepsilon-1) d.
\end{align}
Combining this bound with Eqs.\eqref{eq:first-term} and \eqref{eq:decodingvec-l2norm}, it follows that
\begin{align*}
    \|\boldsymbol{\Delta}_1\|_2&=\|\GEP_\varepsilon\v_{\F} - \GEP\v_\F\|_2 \\
    &\leq 2(e^\varepsilon-1) \sqrt{N-S}\left(1+\tfrac{S}{N-S}\right),
\end{align*}
which gives the bound for $\boldsymbol{\Delta}_1$.

Now, we turn to bound $\boldsymbol{\Delta}_2$ in Eq.\eqref{eq:error-upperbound-decomposition}.
Given the following lemma, we may assume without loss of generality that $\u_1(\GEP) = \v_1(\GEP) = \ones{N}$ and 
\begin{align}
    \|\u_i(\GEP)\|_2 = \|\v_i(\GEP)\|_2 =1, i\in \{2, \dots, n\}.
    \label{eq:othornormal-singular-basis}
\end{align}
\begin{lemma}
\label{lem:biadj-matrix}
For the encoding matrix $\GEP_\varepsilon$ of an $(N, c, d, \varepsilon)$ expansion-preserving gradient code, we conclude that
\begin{enumerate}
\renewcommand{\labelenumi}{\roman{enumi})}
    \item $(\ones{N}, \ones{N}, d)\in \{(\u_i(\GEP_\varepsilon), \v_i(\GEP_\varepsilon), \sigma_i(\GEP_\varepsilon)\}_{i=1}^n$.
    \item $\sigma_2(\GEP_\varepsilon)< d$.
\end{enumerate}
\end{lemma}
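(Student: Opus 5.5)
The plan is to read both claims off the weighted bipartite graph $G_\varepsilon$, using three facts: the degree-preservation property \eqref{eq:degree-preserved}, nonnegativity of the weights, and Lemma~\ref{lem: eig-adj-sig-biadj}, which relates the singular values of $\GEP_\varepsilon$ to the spectrum of $\adj_\varepsilon$.

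\textbf{Step 1: nonnegative entries with row and column sums $d$.} By the feasibility condition \eqref{eq:d-bound}, every entry of $\GEP$ is strictly positive, so $G$ is a complete bipartite graph with positive weights. Since $G_\varepsilon$ is a subgraph of $G$, the matrix $\GEP_\varepsilon$ has nonnegative entries. Degree preservation $\deg_\varepsilon=\deg=d\ident{2N}$ says every row sum and every column sum of $\GEP_\varepsilon$ equals $d$. Hence $\GEP_\varepsilon\ones{N}=d\ones{N}$ and $(\GEP_\varepsilon)^\top\ones{N}=d\ones{N}$, so $(\ones{N},\ones{N},d)$ is a singular triple, up to normalising the vectors.

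\textbf{Step 2: $d$ is the largest singular value.} For any matrix, $\norm{\GEP_\varepsilon}_2\le\sqrt{\norm{\GEP_\varepsilon}_1\norm{\GEP_\varepsilon}_\infty}$. Because the entries are nonnegative, the maximum absolute column sum and the maximum absolute row sum are both $d$, so $\sigma_1(\GEP_\varepsilon)\le d$. Together with Step 1 this gives $\sigma_1(\GEP_\varepsilon)=d$ with singular vectors $\ones{N}$. This proves i), and the triple can be taken as the $i=1$ element.

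\textbf{Step 3: $\sigma_2(\GEP_\varepsilon)<d$.} By Lemma~\ref{lem: eig-adj-sig-biadj}, the eigenvalues of $\adj_\varepsilon$ are $\pm\sigma_i(\GEP_\varepsilon)$, so $\lambda_2(\adj_\varepsilon)=\sigma_2(\GEP_\varepsilon)$. Since $\lap_\varepsilon=d\ident{2N}-\adj_\varepsilon$, the second-smallest Laplacian eigenvalue is $\lambda_{2N-1}(\lap_\varepsilon)=d-\sigma_2(\GEP_\varepsilon)$. It therefore suffices to show that $G_\varepsilon$ is connected, because for a connected graph with nonnegative weights the kernel of the Laplacian is one-dimensional and this eigenvalue is positive.

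\textbf{Step 4: connectivity of $G_\varepsilon$ (main obstacle).} The graph $G$ is connected, so $\ker\lap=\mathrm{span}(\ones{2N})$. The additive bound \eqref{eq:lap-diff-bound} alone does not prevent $\lap_\varepsilon$ from acquiring extra kernel directions unless $e^\varepsilon-1$ is small relative to the spectral gap of $\lap$. I would therefore first invoke the actual guarantee of \textsc{DegreePreservingSparsify}~\cite{degree-preserving-sparsifier}: it is an $\varepsilon$-spectral sparsifier in the multiplicative sense $e^{-\varepsilon}\lap\preceq\lap_\varepsilon\preceq e^{\varepsilon}\lap$, which is also how \eqref{eq:lap-diff-bound} is derived. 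The lower inequality gives $x^\top\lap_\varepsilon x\ge e^{-\varepsilon}x^\top\lap x>0$ for every $x\perp\ones{2N}$. Hence $\ker\lap_\varepsilon=\mathrm{span}(\ones{2N})$, so $G_\varepsilon$ is connected and $\sigma_2(\GEP_\varepsilon)<d$.

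If only \eqref{eq:lap-diff-bound} were available, the fallback is the same conclusion with a restriction. Using \eqref{eq:lap-diff-bound} and $\norm{\lap}_2\le 2d$, Weyl's inequality gives $\lambda_{2N-1}(\lap_\varepsilon)\ge\lambda_{2N-1}(\lap)-2(e^\varepsilon-1)d$. This is positive, and the claim holds, whenever $e^\varepsilon-1<\lambda_{2N-1}(\lap)/(2d)=(d-\sigma_2(\GEP))/(2d)$.
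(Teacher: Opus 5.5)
Your proof is correct. It follows the paper's skeleton: row and column sums $d$ give the singular triple, the spectrum of $\adj_\varepsilon$ is $\{\pm\sigma_i(\GEP_\varepsilon)\}$, and connectivity of $G_\varepsilon$ forces the gap. However, you close the two key steps with different tools.

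The paper gets $\sigma_1(\GEP_\varepsilon)=d$ and $\lambda_2(\adj_\varepsilon)<d$ together from Perron--Frobenius applied to the nonnegative matrix $\adj_\varepsilon$. It simply asserts that \textsc{DegreePreservingSparsify} keeps the graph connected.

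You do three things differently:
\begin{enumerate}
\item You get $\sigma_1(\GEP_\varepsilon)=d$ from the elementary bound $\norm{\GEP_\varepsilon}_2\le\sqrt{\norm{\GEP_\varepsilon}_1\norm{\GEP_\varepsilon}_\infty}$.
\item You recast the gap as positivity of the second-smallest eigenvalue of $\lap_\varepsilon=d\ident{2N}-\adj_\varepsilon$.
\item You \emph{prove} connectivity from the lower half of the multiplicative guarantee $e^{-\varepsilon}\lap\preceq\lap_\varepsilon$. This is exactly the assumption the paper itself makes in its proof of Lemma~\ref{lemma:degree-preserving-sparsifier}.
\end{enumerate}

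Your route supplies a justification for the one step the paper leaves unargued. You also correctly observe that the additive property \eqref{eq:lap-diff-bound} would not suffice on its own, except under a restriction such as $e^{\varepsilon}-1<(d-\sigma_2(\GEP))/(2d)$. The paper's route is shorter. Once connectivity is granted, the Perron--Frobenius argument and your Laplacian-kernel argument are interchangeable here.
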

The proof of Lemma \ref{lem:biadj-matrix} is deferred to Appendix \ref{proof of lem:biadj-matrix}.

We replace $\v_\F$ with Eq.\eqref{eq:decoding_vec} and obtain
\begin{align}
    \|\boldsymbol{\Delta}_2\|_2 &=\|\GEP \v_\F - \ones{N}\|_2\notag\\
    &= \|\tfrac{1}{d}\GEP(\ones{N}+ \a_\F) - \ones{N}\|_2
    = \|\tfrac{1}{d} \GEP \a_\F\|_2.
    \label{eq:l2norm_of_decoding}
\end{align}

% \begin{corollary}
% \label{coro:l2norm_of_aF}
%     For any non-empty set $\F\subset[N]$, there exists $\alpha_2, \dots, \alpha_N\in \RR$ such that $\a_\F = \alpha_2\v_2 + \dots \alpha_N \v_N(\GEP)$, and $\norm{\a_\F}_2 = \sqrt{\sum_{i=2}^N}\alpha_i^2 = \sqrt{\frac{NS}{N-S}}$.
% \end{corollary}

Due to the particular choice of $\a_\F$ (see Eq.\eqref{eq:decoding_vec}), we have the following lemma.
\begin{lemma}
\label{lem:l2norm_of_aF}
    For any non-empty set $\F\subset[N]$, there exists $\alpha_2, \dots, \alpha_N\in \RR$ such that $\a_\F= \alpha_2\v_2(\GEP) + \dots+ \alpha_N \v_N(\GEP)$, and $\norm{\a_\F}_2 = \sqrt{\textstyle\sum_{i=2}^N}\alpha_i^2 = \sqrt{\tfrac{NS}{N-S}}$.
\end{lemma}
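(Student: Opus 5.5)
The plan has two parts. First show that $\a_\F$ is orthogonal to $\ones{N}$. Then expand it in an orthonormal singular basis of $\GEP$ whose first vector is the normalized all-ones vector.

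\emph{Orthogonality.} From Eq.~\eqref{eq:decoding_vec},
\[
\langle \a_\F,\ones{N}\rangle = -S + (N-S)\cdot\tfrac{S}{N-S} = 0 .
\]

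\emph{Choice of basis.} $\GEP$ is real and symmetric, so it has an orthonormal eigenbasis. By Lemma~\ref{lem:biadj-matrix}(i), applied to $\GEP$ itself since its row and column sums equal $d$, the vector $\ones{N}/\sqrt{N}$ is a right singular vector with singular value $d$. We therefore fix right singular vectors $\v_1(\GEP)=\ones{N}/\sqrt N$ and $\v_2(\GEP),\dots,\v_N(\GEP)$ forming an orthonormal basis of $\RR^N$, consistent with Eq.~\eqref{eq:othornormal-singular-basis}. The paper writes $\v_1=\ones{N}$; only its direction matters here, so the normalization is harmless.

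If $d$ is a repeated singular value, the singular subspace for $d$ is not one-dimensional. In that case we simply choose $\ones{N}/\sqrt N$ as one member of an orthonormal basis of that subspace, and complete it arbitrarily. This is possible because $\ones{N}$ lies in that subspace. I expect this choice of basis to be the only delicate point.

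\emph{Expansion.} Write
\[
\a_\F=\sum_{i=1}^N \alpha_i \v_i(\GEP), \qquad \alpha_i=\langle \a_\F,\v_i(\GEP)\rangle .
\]
Then $\alpha_1=\langle \a_\F,\ones{N}\rangle/\sqrt N=0$, which gives the claimed representation using only $\v_2(\GEP),\dots,\v_N(\GEP)$.

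\emph{Norm.} By Parseval, $\norm{\a_\F}_2^2=\sum_{i=2}^N\alpha_i^2$. Computing directly from Eq.~\eqref{eq:decoding_vec},
\[
\norm{\a_\F}_2^2 = S\cdot 1 + (N-S)\tfrac{S^2}{(N-S)^2} = S + \tfrac{S^2}{N-S} = \tfrac{NS}{N-S}.
\]
Taking square roots completes the proof.
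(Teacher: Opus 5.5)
Your proof is correct and follows essentially the same route as the paper. Both show $\ones{N}^\top\a_\F=0$, use Lemma~\ref{lem:biadj-matrix} to take $\ones{N}$ as the direction of $\v_1(\GEP)$ so that $\a_\F$ lies in the span of $\v_2(\GEP),\dots,\v_N(\GEP)$, and then compute $\norm{\a_\F}_2^2=\tfrac{NS}{N-S}$ directly and equate it with $\sum_{i\ge 2}\alpha_i^2$ by Parseval; your normalization $\v_1(\GEP)=\ones{N}/\sqrt{N}$ and your handling of a possibly repeated singular value $d$ are small refinements that tighten a basis choice the paper treats only informally.
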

The proof of Lemma \ref{lem:l2norm_of_aF} is provided in Appendix \ref{proof of lem:l2norm_of_aF}.
Therefore, Eq.\eqref{eq:l2norm_of_decoding} can be rewritten as
\begin{align}
\|\tfrac{1}{d} \GEP \a_\F\|_2 &= \|\tfrac{1}{d}\GEP \left(\textstyle\sum_{i=2}^N \alpha_i\v_i(\GEP)\right)\|_2 \notag\\
    &\overset{(d)}{=} \|\tfrac{1}{d} \GEP \textstyle\sum_{i=2}^N \alpha_i \sigma_i(\GEP) \v_i(\GEP)\|_2\notag\\
    &=\tfrac{1}{d} \sqrt{\textstyle\sum_{i=2}^N \alpha_i^2\sigma_i^2(\GEP)}\overset{(e)}{\leq} \tfrac{\sigma_2(\GEP)}{d} \sqrt{\tfrac{NS}{N-S}}\notag,
\end{align}
where (d) is because of Eq.\eqref{eq:othornormal-singular-basis} and (e) follows from Lemma~\ref{lem:l2norm_of_aF}, which finishes the bound for $\norm{\boldsymbol{\Delta}_2}_2$. 
\end{proof}

\subsection{Proof of Lemma \ref{lem:norm-of-matrices-with-standard-gaussian-entries}}
We begin with introducing sub-Gaussian random variables and sub-Gaussian norm.
\begin{definition}[{\cite[Definition 2.5.6 \& Proposition 2.5.2]{vershynin2020high}}]
\label{def:sub-gaussian}
    The sub-Gaussian norm of a random variable $X$ is defined as
    \[
        \psitwo{X} = \inf\{ t > 0 : \ex{\exp(X^2 / t^2)} \leq 2 \}.
    \]
    If $\psitwo{X}$ is finite, we say $X$ is sub-Gaussian.
\end{definition}
Then, we prove Lemma \ref{lem:norm-of-matrices-with-standard-gaussian-entries} as follows.

\label{proof of lem:norm-of-matrices-with-standard-gaussian-entries}
\begin{proof}
Let $\mu \triangleq \mathbb E|G|=\sqrt{\frac{2}{\pi}},$
where $G\sim\gaussian{0}{1}$. Define a symmetric random matrix $\boldsymbol W$ by
\[
\GEP_0= (\mu+c)\mathbf 1\mathbf 1^\top + \boldsymbol W,
\]
where $\ones{}\in \RR^{N-1}$. Equivalently, for $1\le i\le j\le N-1$,
$\boldsymbol{W}_{ij}=|G_{ij}|-\mu,$
and $\boldsymbol{W}_{ji}=\boldsymbol{W}_{ij}$ for $i>j$.

Here are properties of $\boldsymbol W$. Since the random variables $\{G_{ij}:1\le i\le j\le N-1\}$ are i.i.d., the collection $\{\boldsymbol{W}_{ij}:1\le i\le j\le N-1\}$
is independent. Moreover,
$\mathbb E[\boldsymbol{W}_{ij}]
=
\mathbb E|G_{ij}|-\mu
=
0.$
Note that $|G|$ is sub-Gaussian because $\Pr(|G|>t)\le 2e^{-t^2/2}\text{ for } t\ge 0$.
\begin{lemma}[{\cite[Lemma 2.6.8]{vershynin2020high}}]
\label{lem:centering-sug-gaussian}
    If $X$ is a sub-Gaussian random variable then $X - \ex{X}$ is a sub-gaussian, too, and
    \[
        X -\ex{X} \leq C\psitwo{X},
    \]
    where $C$ is an absolute constant.
\end{lemma}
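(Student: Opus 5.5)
We read the displayed inequality as the norm bound $\psitwo{X-\ex{X}}\le C\psitwo{X}$; the statement as printed compares a random variable with a constant, which we take to be a typo. The plan is to treat $\psitwo{\cdot}$ from Definition~\ref{def:sub-gaussian} as a genuine norm (the Luxemburg norm of the Orlicz space for $\psi(x)=e^{x^2}-1$). Then split $X-\ex{X}$ into the random part $X$ and the deterministic part $-\ex{X}$, and bound each separately with the triangle inequality:
\[
\psitwo{X-\ex{X}}\le \psitwo{X}+\psitwo{\ex{X}}.
\]

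First, I would check that the triangle inequality holds for $\psitwo{\cdot}$. Take $t_1>\psitwo{X}$ and $t_2>\psitwo{Y}$, and set $\theta=t_1/(t_1+t_2)$. Write
\[
\frac{X+Y}{t_1+t_2}=\theta\,\frac{X}{t_1}+(1-\theta)\,\frac{Y}{t_2}.
\]
Since $u\mapsto e^{u^2}$ is convex and increasing in $|u|$, pointwise convexity gives
\[
\ex{\exp\big((X+Y)^2/(t_1+t_2)^2\big)}\le \theta\,\ex{e^{X^2/t_1^2}}+(1-\theta)\,\ex{e^{Y^2/t_2^2}}\le 2 .
\]
Letting $t_1\downarrow\psitwo{X}$ and $t_2\downarrow\psitwo{Y}$ yields the claim.

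Second, I would compute the norm of a constant directly: for $a\in\RR$, $\psitwo{a}=|a|/\sqrt{\ln 2}$. Third, I would bound $|\ex{X}|$. Put $t=\psitwo{X}$. By Jensen's inequality applied to the convex function $u\mapsto e^{u^2/t^2}$, and then $|\ex{X}|\le\ex{|X|}$,
\[
\exp\big((\ex{|X|})^2/t^2\big)\le \ex{e^{X^2/t^2}}\le 2 .
\]
Hence $|\ex{X}|\le\sqrt{\ln 2}\,\psitwo{X}$, and so $\psitwo{\ex{X}}\le\psitwo{X}$. Combining the three steps gives $\psitwo{X-\ex{X}}\le 2\psitwo{X}$, so $C=2$ works, and in particular $X-\ex{X}$ is sub-Gaussian.

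The only delicate point is the triangle inequality. Because the infimum in Definition~\ref{def:sub-gaussian} need not be attained, the argument works with $t$ slightly above each norm and then passes to the limit. Monotone convergence shows that the condition $\ex{\exp(X^2/t^2)}\le 2$ is also satisfied at $t=\psitwo{X}$ itself, which is what the Jensen step uses. Everything else is routine.
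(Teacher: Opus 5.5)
Your proof is correct and follows the same route as the cited source (Vershynin's Lemma 2.6.8). The paper does not reprove this lemma, but it uses the same argument in its proof of the sub-exponential centering lemma: the triangle inequality for the Orlicz norm, $\psitwo{a}=|a|/\sqrt{\ln 2}$ for constants, and $|\ex{X}|\le \ex{|X|}\lesssim\psitwo{X}$. The only differences are that you verify the triangle inequality yourself and bound $\ex{|X|}$ by Jensen on $u\mapsto e^{u^2/t^2}$ rather than through the moment characterization, which gives the explicit constant $C=2$; your reading of the displayed inequality as a norm bound is the intended one.
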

Hence, by Lemma~\ref{lem:centering-sug-gaussian}, $\boldsymbol{W}_{ij}$ is i.i.d. sub-Gaussian with mean zero and
% with $\psi_2$-norm bounded by an absolute constant
there exists an absolute constant $K>0$ such that
$\|W_{ij}\|_{\psi_2}\le K\text{, for all }1\le i\le j\le N-1.$
Now, we apply the following Lemma on $\boldsymbol{W}$.
\begin{lemma}[Modified {\cite[Corollary 4.4.8]{vershynin2020high}}]
    Let $\boldsymbol{A}$ be an $n\times n$ symmetric random matrix whose entries $\boldsymbol{A}_{ij}$ on and above the diagonal are i.i.d., mean zero, sub-gaussian random variables. Then, for any $t> 0$ we have
    \[
    \prob{\|\boldsymbol{A}\|\geq CK(\sqrt{n} + t)}< 4\exp(-t^2),
    \]
    where $C$ is an absolute constant and $K = \psitwo{\boldsymbol{A}_{ij}}$.
\end{lemma}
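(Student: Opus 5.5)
The plan is to reduce the symmetric case to the non-symmetric one by splitting $\boldsymbol{A}$ into two triangular pieces. Each piece has fully independent entries, so the standard bound for rectangular sub-gaussian matrices applies to it (\cite[Theorem 4.4.5]{vershynin2020high}). That result says: if $\boldsymbol{M}$ is $m\times n$ with independent, mean-zero, sub-gaussian entries and $\max_{ij}\psitwo{\boldsymbol{M}_{ij}}\le K$, then for every $t>0$,
\[
\|\boldsymbol{M}\|_2\le C_0K(\sqrt m+\sqrt n+t)
\]
with probability at least $1-2e^{-t^2}$, where $C_0$ is an absolute constant. The only difficulty with applying it to $\boldsymbol{A}$ directly is that symmetry makes $\boldsymbol{A}_{ij}$ and $\boldsymbol{A}_{ji}$ identical, so the entries are not independent. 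The splitting removes exactly this dependence.

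\textbf{Step 1 (decomposition).} Write $\boldsymbol{A}=\boldsymbol{A}^{+}+\boldsymbol{A}^{-}$.
\begin{itemize}
\item $\boldsymbol{A}^{+}$ keeps the entries of $\boldsymbol{A}$ on and above the diagonal and is zero below it.
\item $\boldsymbol{A}^{-}$ keeps the strictly-below-diagonal entries and is zero elsewhere.
\end{itemize}
By symmetry, $\boldsymbol{A}^{-}=(\boldsymbol{A}^{+}-\mathrm{diag}(\boldsymbol{A}))^\top$. The nonzero entries of $\boldsymbol{A}^{+}$ are exactly the i.i.d.\ family $\{\boldsymbol{A}_{ij}\}_{i\le j}$, and the remaining entries are deterministic zeros. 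Hence all entries of $\boldsymbol{A}^{+}$ are independent and mean zero, with $\psitwo{\cdot}\le K$, since a zero variable has sub-gaussian norm $0$. The same holds for $\boldsymbol{A}^{-}$, because transposing and zeroing the diagonal preserve independence of entries.

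\textbf{Step 2 (apply the rectangular bound and a union bound).} Apply \cite[Theorem 4.4.5]{vershynin2020high} with $m=n$ to $\boldsymbol{A}^{+}$ and to $\boldsymbol{A}^{-}$. Each satisfies $\|\cdot\|_2\le C_0K(2\sqrt n+t)$ except with probability at most $2e^{-t^2}$. The two events are not independent of each other, but a union bound does not need that. So with probability at least $1-4e^{-t^2}$ both bounds hold, and the triangle inequality gives
\[
\|\boldsymbol{A}\|_2\le\|\boldsymbol{A}^{+}\|_2+\|\boldsymbol{A}^{-}\|_2\le 2C_0K(2\sqrt n+t)\le CK(\sqrt n+t)
\]
with $C=4C_0$. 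The strict inequality in the statement is harmless: the complementary event is contained in the union of the two failure events.

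\textbf{Main obstacle.} The substantive step is justifying that the symmetric dependence is fully removed, i.e.\ checking that \cite[Theorem 4.4.5]{vershynin2020high} needs only independence and uniform $\psi_2$ control of the entries, not identical distribution or nonzero entries. If one prefers not to rely on this, the fallback is to rerun its $\varepsilon$-net proof directly for $\boldsymbol{A}$:
\begin{itemize}
\item Take a $1/4$-net $\mathcal N$ of the sphere with $|\mathcal N|\le 9^n$, and use $\|\boldsymbol{A}\|_2\le 2\max_{x\in\mathcal N}|x^\top\boldsymbol{A}x|$, which holds for symmetric $\boldsymbol{A}$.
\item Write $x^\top\boldsymbol{A}x=\sum_{i\le j}c_{ij}\boldsymbol{A}_{ij}$ with $c_{ii}=x_i^2$, $c_{ij}=2x_ix_j$, so $\sum c_{ij}^2\le 2$. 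This is a sum of independent sub-gaussians, so Hoeffding gives a tail of order $2e^{-cu^2/K^2}$.
\item Take $u=C'K(\sqrt n+t)$ so that a union bound over the $9^n$ net points is absorbed.
\end{itemize}
Either route yields the claimed bound; tracking the constant $4$ in the probability is the only bookkeeping needed.
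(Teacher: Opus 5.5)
Your proposal is correct and matches the proof the statement relies on: the paper gives no argument of its own and simply invokes \cite[Corollary 4.4.8]{vershynin2020high}, whose proof is exactly your split into an upper-triangular part (carrying the diagonal) and a strictly lower-triangular part, the rectangular bound \cite[Theorem 4.4.5]{vershynin2020high} on each piece, a union bound, and the triangle inequality. One small bookkeeping point: the union bound gives probability $\le 4e^{-t^2}$, not the strict $<4e^{-t^2}$ in the statement, but if you take $C=8C_0$ and apply Theorem 4.4.5 with $2t$ in place of $t$, the failure probability is at most $4e^{-4t^2}<4e^{-t^2}$ whenever $K>0$ (for $K=0$ the strict form as stated is trivially false, which is a defect of the paper's rephrasing, not of your argument).
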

So, there exist absolute constants $C_1>0$ such that for any $t\ge 0$, 
$\prob{\|\boldsymbol W\|_2 \leq C_1(\sqrt{N-1} + t)}
\geq 1- 4\exp(-t^2).$ 
% Set $t=\varepsilon\sqrt n.$ Then
% $\prob{\|\boldsymbol W\|_2 \ge C_1\sqrt{n}(1+\delta)}
% \le 4\exp(-\delta^2 n).$

On the other hand, since $\mathbf 1\mathbf 1^\top$ is rank one, $\|\mathbf 1\mathbf 1^\top\|_2=\|\mathbf 1\|_2^2=N-1.$ Therefore, $\|(\mu+c)\mathbf 1\mathbf 1^\top\|_2 = |\mu+c|(N-1).$ 
% If $c\ge -\mu$, then $\mu+c\ge 0$, and so
% $\|(\mu+c)\mathbf 1\mathbf 1^\top\|_2 = (\mu+c)(N-1)$. 
Hence, by the triangle inequality,
\[
\|\GEP_0\|_2
\le
\|(\mu+c)\mathbf 1\mathbf 1^\top\|_2 + \|\boldsymbol W\|_2
=
|\mu+c|(N-1)+\|\boldsymbol W\|_2.
\]
Combining this with the above tail bound for $\|\boldsymbol W\|_2$, we obtain
\[
\mathbb{P}[
\|\GEP_0\|_2
\geq
|\mu+c|(N-1) + C_1(\sqrt{N-1}+t)
]
< 4e^{-t^2}.\qedhere
\]
\end{proof}

\subsection{Proof of Lemma \ref{lemma:psi-two-upper-bound}}
\begin{proof}
\label{proof of lemma:psi-two-upper-bound}
    By Definition \ref{def:sub-gaussian}, in order to obtain $\psitwo{X_i}$, we need to calculate $\mathbb{E}\big[{\exp{\left(X_i^2/C^2\right)\big]}}$ first. Plug $X_i =\rho \sum_{j=1}^{N-S} G_{ij}B_{ij}-1$, where $G_{ij} \overset{\text{i.i.d.}}{\sim}~\gaussian{a}{b}$ and $B_{ij}\overset{\text{i.i.d.}}{\sim}\text{Bern}(\gamma)$, into $\mathbb{E}\big[{\exp{\left(X_i^2/C^2\right)\big]}}$, we have
    \begin{equation*}
        \ex{\exp{\left(\tfrac{X_i^2}{C^2}\right)}} =\ex{\exp{\left[\left(\textstyle{\rho\sum_{j=1}^{N-S}G_{ij}B_{ij}-1}\right)^2/C^2\right]}}.
    \end{equation*}
    We use the law of total expectation and obtain
    \begin{align*}&\ex{\exp{\bigg(\Big(\textstyle{\rho\sum\limits_{j=1}^{N-S}G_{ij}B_{ij}-1}\Big)^2/C^2\bigg)}}\\
        &=\ex{\ex{e^{\left({\rho\sum_{j=1}^{N-S}G_{ij}B_{ij}-1}\right)^2/C^2}\Big|\{B_{ij}:j \in [N-S]\}}}\\
        &\overset{(a)}{=} \textstyle\sum\limits_{p=0}^{N-S}\tbinom{N-S}{p}\gamma^p(1-\gamma)^{N-S-p}\mathbb{E}\Big[{e^{\tfrac{\big(\rho\sum_{j=1}^{p}G_{ij}-1\big)^2}{C^2}}}\Big]\\
        &
        = \textstyle\sum\limits_{p=0}^{N-S}\tbinom{N-S}{p}\gamma^p(1-\gamma)^{N-S-p} \mathbb{E}\Big[e^{\tfrac{\big(\rho\sum_{j=1}^{p}G_{ij}-\rho pa+\rho pa-1\big)^2}{C^2}}\Big]\\
        &
        = \textstyle\sum\limits_{p=0}^{N-S}\tbinom{N-S}{p}\gamma^p(1-\gamma)^{N-S-p} \mathbb{E}\Big[\exp\Big({\tfrac{\big(Y+\beta\big)^2}{C^2}}\Big)\Big],
    \end{align*}
    where $(a)$ comes from $B_{ij}\overset{\text{i.i.d.}}{\sim}\text{Bern}(\gamma)$. Let $Y \triangleq \rho\sum\limits_{j=1}^{p}G_{ij}-\rho pa$ and $\beta \triangleq \rho pa-1$. 
    % Since $G_{ij} \sim~\gaussian{a}{b}$, then
    Therefore, $Y\sim \gaussian{0}{\sigma^2}$ where $\sigma^2 = \rho^2(pb+(p^2-p)c)$.

\begin{lemma}
% [Expectation of exponential of a parabola of a Gaussian random variable]
    \label{lemma:expectation-of-exponential-of-the-linear-of-a-Gaussian-random-variable}
If a random variable $G\sim \gaussian{\mu}{\sigma^2}$, then for $\theta \in \RR$ and $\alpha\leq \frac{1}{2\sigma^2}$,
\begin{align*}
    &\ex{\exp{(\alpha G^2 + \theta G)}}= \tfrac{1}{\sqrt{\frac{1}{\sigma^2} - 2\alpha}}\exp{\Big(- \tfrac{\sigma^2(\theta+2\mu)^2}{4\alpha \sigma^2-2} - \tfrac{\mu^2}{2\sigma^2}\Big)}.
\end{align*}
\end{lemma}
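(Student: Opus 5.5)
The plan is a direct computation: write the expectation as a Gaussian integral and complete the square in the exponent. With $G\sim\gaussian{\mu}{\sigma^2}$,
\[
\ex{\exp(\alpha G^2+\theta G)}=\frac{1}{\sqrt{2\pi\sigma^2}}\int_{\RR}\exp\Big(\alpha x^2+\theta x-\tfrac{(x-\mu)^2}{2\sigma^2}\Big)\,dx .
\]
First I expand the exponent as $-A x^2 + Bx - \tfrac{\mu^2}{2\sigma^2}$, where
\[
A \triangleq \tfrac{1}{2\sigma^2}-\alpha, \qquad B \triangleq \theta+\tfrac{\mu}{\sigma^2}.
\]
The hypothesis $\alpha\le \tfrac{1}{2\sigma^2}$ must be used in its strict form $\alpha<\tfrac{1}{2\sigma^2}$, so that $A>0$. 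At $\alpha=\tfrac{1}{2\sigma^2}$ the integrand is not integrable unless $B=0$, and the right-hand side is undefined anyway. I will note this boundary case explicitly.

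Next I complete the square, writing $-A x^2+Bx=-A\big(x-\tfrac{B}{2A}\big)^2+\tfrac{B^2}{4A}$. I then apply the standard identity $\int_{\RR} e^{-A(x-x_0)^2}\,dx=\sqrt{\pi/A}$. This gives
\[
\ex{\exp(\alpha G^2+\theta G)}=\frac{1}{\sqrt{2\sigma^2 A}}\exp\Big(\tfrac{B^2}{4A}-\tfrac{\mu^2}{2\sigma^2}\Big)
=\frac{1}{\sigma\sqrt{\frac{1}{\sigma^2}-2\alpha}}\exp\Big(-\tfrac{\sigma^2(\theta+\mu/\sigma^2)^2}{4\alpha\sigma^2-2}-\tfrac{\mu^2}{2\sigma^2}\Big),
\]
where I used $4A=\tfrac{2-4\alpha\sigma^2}{\sigma^2}$.

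The main obstacle is only bookkeeping: matching this expression to the displayed form in the lemma. The normalization by $\sigma$ in the prefactor and the precise shift term ($\mu/\sigma^2$ versus the printed $2\mu$) must be reconciled carefully. In the application $\mu$ and $\sigma^2$ are explicit, so I will check the final formula against two special cases. For $\alpha=0$ it must reduce to the Gaussian MGF $e^{\theta\mu+\sigma^2\theta^2/2}$. For $\theta=\mu=0$ it must reduce to $(1-2\alpha\sigma^2)^{-1/2}$. Where the printed constants differ from the derived ones, I will use the derived expression in the subsequent bound on $\psione{X_i^2}$.
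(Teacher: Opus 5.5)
Your route is the same as the paper's proof: integrate against the $\gaussian{\mu}{\sigma^2}$ density, complete the square, and evaluate a Gaussian integral. Your computation is correct. What needs changing is how you frame the last step. There is nothing to ``reconcile'': the displayed identity is false, and your expression is the correct replacement.

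The paper's proof reaches the printed form through two slips.
\begin{enumerate}
\item Its completed square uses $\theta+2\mu$, but the linear coefficient of $\alpha x^2+\theta x-\frac{1}{2\sigma^2}(x^2-2\mu x+\mu^2)$ is $\theta+\mu/\sigma^2$; the factor $\frac{1}{2\sigma^2}$ was never applied to $2\mu x$.
\item In its final line it evaluates $\frac{1}{\sigma\sqrt{2\pi}}\int_{\RR}\exp\bigl(-\frac{1}{2}(\frac{1}{\sigma^2}-2\alpha)x^2\bigr)\,dx$ as $(\frac{1}{\sigma^2}-2\alpha)^{-1/2}$, which drops the factor $1/\sigma$.
\end{enumerate}
Both slips vanish when $\sigma=1$ and $\mu=0$. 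That is why the standard-normal special case at the end of the paper's proof comes out right.

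Your own sanity checks refute the printed statement. At $\theta=\mu=0$ its right-hand side is $\sigma(1-2\alpha\sigma^2)^{-1/2}$. At $\alpha=\theta=0$, $\sigma=1$, it is $e^{3\mu^2/2}$, which is not $1$ for $\mu\neq 0$. So present the result as a corrected lemma rather than a reconciliation. After simplification your formula reads
\[
\mathbb{E}\bigl[e^{\alpha G^2+\theta G}\bigr]=\frac{1}{\sqrt{1-2\alpha\sigma^2}}\exp\Bigl(\frac{\sigma^2\theta^2+2\mu\theta+2\alpha\mu^2}{2-4\alpha\sigma^2}\Bigr),\qquad \alpha<\frac{1}{2\sigma^2},
\]
and in this form both of your checks are immediate.

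Two smaller points. First, at $\alpha=\frac{1}{2\sigma^2}$ the integrand is $e^{Bx-\mu^2/(2\sigma^2)}$. This is not integrable for any $B$: when $B=0$ it is a positive constant, not something integrable as your parenthetical suggests. So the expectation is $+\infty$ there and the hypothesis must simply be strict.

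Second, the paper uses the lemma only once, in the proof of Lemma~\ref{lemma:psi-two-upper-bound}, and there the Gaussian is centered. So only the missing $1/\sigma$ propagates. With your prefactor, the quantity $\sqrt{2}\,\sigma\exp\bigl(\frac{\beta^2}{\sigma^2(q-2)}\bigr)$ in that proof becomes $\sqrt{2}\exp\bigl(\frac{\beta^2}{\sigma^2(q-2)}\bigr)$. Consequently $\ln(\sqrt{2}/\sigma)$ becomes $\ln\sqrt{2}$, and the restriction $\sigma<\sqrt{2}$ is no longer needed. Your plan to carry the derived formula forward therefore genuinely changes that bound.
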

The proof of Lemma \ref{lemma:expectation-of-exponential-of-the-linear-of-a-Gaussian-random-variable} is deferred to Appendix \ref{proof of lemma:expectation-of-exponential-of-the-linear-of-a-Gaussian-random-variable}.

    We apply Lemma \ref{lemma:expectation-of-exponential-of-the-linear-of-a-Gaussian-random-variable} and have
    \begin{align}\label{eq:expectation-of-exp}  &\mathbb{E}\bigg[\exp{\Big(\tfrac{1}{C^2}\big(\rho\textstyle\sum_{j=1}^{p}G_{ij}-\rho pa+\rho pa-1\big)^2\Big)}\bigg] \notag\\ 
    & \quad=\mathbb{E}\Big[\exp{\big(\tfrac{Y^2+2\beta Y+\beta^2}{C^2}\big)}\Big]\notag\\
    &\quad =\tfrac{1}{\sqrt{\tfrac{1}{\sigma^2}-\tfrac{2}{C^2}}}\exp{\Big(-\tfrac{4\sigma^2 (\tfrac{\beta}{C^2})^2}{\tfrac{4\sigma^2}{C^2}-2} + \tfrac{\beta^2}{C^2} \Big)}\notag\\
     &\quad=\tfrac{1}{\sqrt{\tfrac{1}{\sigma^2}-\tfrac{2}{C^2}}}\exp{\Big(
    \tfrac{\beta^2}{C^2}\big(1-\tfrac{2\sigma^2}{2\sigma^2-C^2}\big)
     \Big)}.
    \end{align}%
    Note that one essential condition for $C$ is 
    % \begin{equation*}
        $2\sigma^2 <C^2,$
    % \end{equation*}
     which follows from $\frac{1}{\sigma^2}-\frac{2}{C^2} >0$. 
    Let $\Cpsitwo$ be an upper bound of $\psitwo{X_i}$, i.e., 
    % \begin{equation*}
    $\Cpsitwo \geq \psitwo{X_i}.$
    % \end{equation*}
    By Definition \ref{def:sub-gaussian}, it follows that
         % \label{cond-1}
        $2\sigma^2<\Cpsitwo^2.$
     We assume that $C^2 = q \sigma^2$ where $q\in \RR$. Therefore, the right-hand-side of Eq. \eqref{eq:expectation-of-exp} can be simplified as
   \begin{align}
    % &\tfrac{1}{\sqrt{\tfrac{1}{\sigma^2}-\tfrac{2}{C^2}}}\exp{\left(
    % \tfrac{\beta^2}{C^2}\left(1-\tfrac{2\sigma^2}{2\sigma^2-C^2}\right)
    %  \right)}
     \label{eq:simplified-expectation-of-exp}
     % \\
     %    &=\tfrac{1}{\sqrt{\tfrac{1}{\sigma^2}(1-\tfrac{2}{q})}}
     %    \exp{\left({
     %        \tfrac{\beta^2}{\sigma^2}\left(\tfrac{1}{q}-\tfrac{2}{2q-q^2}\right)}
     %    \right)}\notag
        % &=\tfrac{\sigma}{\sqrt{(1-\tfrac{2}{q})}}
        % \exp{\left(
        %     \tfrac{\beta^2}{\sigma^2}\left(\tfrac{1}{q}-\tfrac{2}{q(2-q)}\right)
        % \right)}
        % &\quad=\tfrac{\sigma}{\sqrt{(1-\tfrac{2}{q})}}
        % \exp{\left(
        %     \tfrac{\beta^2}{\sigma^2}\left(\tfrac{2-q-2}{q(2-q)}\right)
        % \right)}
        % =
        \tfrac{\sigma}{\sqrt{(1-\tfrac{2}{q})}}
        \exp{\Big(
            \tfrac{\beta^2}{\sigma^2}\big(\tfrac{1}{q-2}\big)
        \Big)}.
    \end{align}
    Since $1-\tfrac{2}{q}>0$ is required, we must have $q>2.$
    % \begin{equation}
    % \label{ineq:condition-of-q}
    %     q>2.
    % \end{equation}
    To avoid 
    % violating Ineq. \eqref{ineq:condition-of-q}, 
    this constraint,
    we can assume $q\geq 4$ which implies that $\sqrt{1-2/q}\geq \sqrt{1/2}$. Then, Eq. \eqref{eq:simplified-expectation-of-exp} can be upper bounded by $\sqrt{2}\sigma \exp{\left(
        \frac{\beta^2}{\sigma^2}\left(\frac{1}{q-2}
        \right)\right)}.$
    Now, we further assume that $\sqrt{2}\sigma \exp{\left(
            \frac{\beta^2}{\sigma^2}\left(\frac{1}{q-2}\right)
        \right)}\overset{\mathrm{set}}{\leq} 
        2$ and obtain
     \begin{align}
        % \exp{\left(
        %     \tfrac{\beta^2}{\sigma^2}
        %     \left(
        %         \tfrac{1}{q-2}
        %     \right)
        % \right)} 
        % &\leq 
        % \tfrac{\sqrt{2}}{\sigma}\notag\\
        \tfrac{\beta^2}{\sigma^2}
        \left(
            \tfrac{1}{q-2}
        \right) 
        \leq 
        \ln{(\tfrac{\sqrt{2}}{\sigma})}\label{eqn:sigma_bound}
    \end{align}
    We must have $\sigma < \sqrt{2}$ to be satisfied since $\frac{\beta^2}{\sigma^2}(\frac{1}{q-2})>0$ when $q\ge4$. Therefore, Inequality \eqref{eqn:sigma_bound} can be rewritten as 
        \begin{equation}
        \label{eqn:q_bound}
            2+\tfrac{\beta^2}{\sigma^2} \tfrac{1}{\ln{(\sqrt{2}/\sigma)}} \leq q,
        \end{equation}
    Therefore, conditioned on $C^2 = q\sigma^2$ with $q\geq4$, $\sigma<\sqrt{2}$, and Eq. \eqref{eqn:q_bound}, 
    % we obtain
     $C^2$ can be bounded as follows:
    \begin{align*}
        C^2 &\ge \max \left\{ \left(2+\tfrac{\beta^2}{\sigma^2}\tfrac{1}{\ln{(\sqrt{2}/\sigma)}}\right)\sigma^2, 4\sigma^2\right\}\\
        &= \max\left\{2\sigma^2 + \tfrac{\beta^2}{\ln{(\sqrt{2}/\sigma)}}, 4\sigma^2 \right\}.
    \end{align*}
    Recall that $\sigma^2 = \rho^2(pb+(p^2-p)c)$ and $\beta = \rho pa-1$.
    Let $f(p) = \rho^2(p^2c+p(b-c)) = \rho^2p(pc+(b-c))$ and $g(p)=\rho p a-1$, for $p = 0,1,\ldots,N-S$.
  Since the covariance matrix $\Sigmaa$ is a positive semi-definite matrix, $b-c\geq 0$ is required. So, if $c>0$, $f(p)$ increases with the increment of $p$; if $c=0$, then $f(p)=0$ for $p = 0,1,\ldots,N-S$. Furthermore, $g(p) = \rho pa-1$ doesn't decrease with the increment of $p$ if $\rho a\geq 0$. So, we claim that if $c\geq 0$ and $\rho a\geq 0$, then $2\sigma^2 + \frac{\beta^2}{\ln{(\sqrt{2}/\sigma)}}$ doesn't decrease with the increment of $p$. Denote $f(N-S)$ and $g(N-S)$ as $\sigmamax$ and $\betamax$ respectively, then
     \begin{equation}
         {\Cpsitwo}^2 = \max\left\{2\sigmamax^2 + \tfrac{\betamax^2}{\ln{(\frac{\sqrt{2}}{\sigmamax})}}, 4\sigmamax^2 \right\},
     \end{equation}
     where ${\sigmamax}^2 = \rho^2((N-S)b+((N-S)^2-(N-S))c)$ and $\betamax = \rho(N-S)a-1$.
    By Definition \ref{def:sub-gaussian} and the assumption of $\Cpsitwo$ being an upper bound of the sub-Gaussian norm of $X_i$, $\psitwo{X_i}<\infty$ can be obtained. So, Lemma \ref{lemma:psi-two-upper-bound} follows.
\end{proof}

\subsection{Proof of Lemma \ref{lemma:centering-a-sub-exponential-random-variable}}
\begin{proof}
\label{proof of lemma:centering-a-sub-exponential-random-variable}
By the triangle inequality of a norm, we have
% $\psione{\cdot}$ is a norm, we can use triangle inequality and get
\begin{align}
\label{eq:bound-for-centered-sub-exponential-rv}
    \psione{X-\ex{X}}\leq \psione{X} - \psione{\ex{X}}.
\end{align}
Thus it suffices to bound the second term. According to definition \ref{def:sub-exponential}, for a random variable $Z \triangleq \ex{X}$, we trivially have $\psione{\ex{X}}\lesssim|\ex{X}|$. By Jensen’s inequality, we then have
% \begin{align*}
     $|\ex{X}|\leq \ex{|X|} = \norm{X}_1.$
% \end{align*}
\begin{lemma}[Modified {\cite[proposition 2.7.1]{vershynin2020high}}] Let $X$ be a sub-exponential random variable, the moments of $X$ satisfy that for $K>0$,
\[
    \norm{X}_{L^{p}} = \ex{|X|^{p}} ^{1/p}\leq Kp,\quad\text{for all }p\geq1.
\]
\end{lemma}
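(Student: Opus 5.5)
The statement is the moment characterization of sub-exponential variables: if $\psione{X}<\infty$ then $\norm{X}_{L^p}\le Kp$ for every $p\ge1$, with $K$ an absolute constant multiple of $\psione{X}$. I would prove it by normalization followed by a tail-integration argument.

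\textbf{Normalization.} Set $s\triangleq\psione{X}$ and assume $s>0$; if $s=0$ then $X=0$ a.s. and there is nothing to prove. By Definition~\ref{def:sub-exponential} and monotone convergence, letting $t\downarrow s$ gives $\ex{\exp(|X|/s)}\le2$. Replacing $X$ by $X/s$, it suffices to show that $\ex{e^{|X|}}\le2$ implies $\ex{|X|^p}^{1/p}\le C p$ for an absolute constant $C$. Scaling back then gives $K=Cs$.

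\textbf{Tail bound.} From Markov's inequality applied to $e^{|X|}$, for every $u\ge0$,
\[
\prob{|X|\ge u}\le 2e^{-u}.
\]

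\textbf{Moment computation.} Using the layer-cake formula,
\[
\ex{|X|^p}=\int_0^\infty p u^{p-1}\prob{|X|\ge u}\,du\le 2p\int_0^\infty u^{p-1}e^{-u}\,du=2p\,\Gamma(p)=2\Gamma(p+1).
\]
Since $\Gamma(p+1)\le p^p$ for $p\ge1$, taking $p$-th roots yields
\[
\norm{X}_{L^p}\le 2^{1/p}p\le 2p.
\]
So $C=2$ works, and the lemma holds with $K=2\psione{X}$.

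\textbf{Alternative route.} One could skip the tail bound and use the pointwise inequality $x^p/p^p\le e^{x-p}\le e^{x}$, valid since $x^p e^{-x}$ is maximized at $x=p$. This gives $\ex{|X|^p}\le p^p\,\ex{e^{|X|}}\le 2p^p$ directly.

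\textbf{Main obstacle.} The only delicate point is that the infimum in Definition~\ref{def:sub-exponential} must actually be attained (or approximated) so that $\ex{e^{|X|/s}}\le2$ holds at $t=s$. I would handle this either by the monotone-convergence limit above, or by working with $t=s+\eta$ and letting $\eta\to0$ at the end. Everything else is routine calculus.
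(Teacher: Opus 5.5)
Your proof is correct and follows the standard argument behind the cited Proposition~2.7.1 of Vershynin, which the paper invokes without reproducing: normalize so that $\mathbb{E}[e^{|X|}]\le 2$, get an exponential tail from Markov's inequality, and integrate the tail to reach a Gamma-function bound. You also read the loosely worded statement as $K=C\psione{X}$ with an absolute $C$ (you obtain $C=2$), which is exactly what the paper needs when it applies the lemma at $p=1$ to conclude $\norm{X}_1\lesssim\psione{X}$.
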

Apply this lemma and let $p=1$, we obtain $\norm{X}_1\lesssim \psione{X}.$
% \begin{align}
%     \norm{X}_1\lesssim \psione{X}.
% \end{align}
Substitute this into Eq.\eqref{eq:bound-for-centered-sub-exponential-rv}, we complete the proof.
\end{proof}

\subsection{Proof of Lemma \ref{lemma:expectation-of-two-norm-of-X}}
\begin{proof}
% [Proof of Lemma~\ref{lemma:expectation-of-two-norm-of-X}]
\label{proof of lemma:expectation-of-two-norm-of-X}
Assume an $(N, K, L, R, \lambda, \gamma)$-SG gradient code exists, and let $\GSG \in \RR^{K\times N}$ denote its encoding matrix. 
Let $\F \subset [N]$ denote the non-straggler set, with $|\F| = N-S$. 
% Define the decoding vector $\vopt(\F) \in \RR^{N}$ such that 
% \[
% \v_i(\F) =
% \begin{cases}
% \rho, & \text{if } i \in \F,\\
% 0, & \text{otherwise},
% \end{cases}
% \]
% where $\rho$ is the scalar decoding coefficient. 
Define the decoding vector $\vopt(\F) \in \RR^{N}$ such that, for each $i \in [N]$, the $i$-th entry satisfies $\v_i(\F) = \rho$ if $i \in \F$, and $\v_i(\F) = 0$ otherwise, where $\rho$ is the scalar decoding coefficient.
Then for each row $i \in [K]$, define
\[
X_i = (\GSG\v(\F))_i - 1 = \rho \textstyle\sum_{j\in\F}\GSG_{ij} - 1.
\]
The expectation of the sum of squared deviations is
\begin{align}
&\ex{\textstyle\sum_{i=1}^{K}X_i^2} \notag = \mathbb{E}[\|\GSG\v(\F) - \ones{K}\|_2^2]\notag\\
& = \mathbb{E}[\ones{K}^\top\ones{K} - 2\rho\ones{N}^\top \ones{\F}^\top {\GSG}^\top \ones{K} \rho^2\, \ones{\F}^\top {\GSG}^\top \GSG \ones{\F}]\notag\\
& = \ones{K}^\top\ones{K}
    - 2\rho\,\ones{N}^\top\mathbb{E}[{\GSG}^\top\ones{K}]\ones{\F}\notag+ \rho^2\, \ones{\F}^\top \mathbb{E}[{\GSG}^\top\GSG] \ones{\F},
    \label{eq:expectation-of-the-sum-of-squared-deviations}
\end{align}
where $\ones{\F}\in\{0,1\}^N$ is the indicator vector of $\F$ (i.e., $[\ones{\F}]_i = 1$ if $i\in\F$, $0$ otherwise). 
% Since the expectation is taken over the randomness of $\GSG$, we have
% \begin{align*}
% \ex{\sum_{i=1}^{K}X_i^2}
% &= \ones{K}^\top\ones{K}
%     - 2\rho\,\ones{N}^\top\ex{\GSG^\top\ones{K}}\!\odot\!\mathbbm{1}_{\F}
%     + \rho^2\, \mathbbm{1}_{\F}^\top \ex{\GSG^\top\GSG} \mathbbm{1}_{\F},
% \end{align*}
% where $\odot$ denotes elementwise multiplication. 
Because $\GSG$ is row-wise i.i.d. Gaussian with expected value satisfying Eq.~\eqref{row-exp}, we obtain $\ex{{\GSG}^\top\ones{K}} = L\ones{N}.$
% \[
% \ex{{\GSG}^\top\ones{K}} = L\ones{N}.
% \]
Hence,
\[
\ones{N}^\top\ex{{\GSG}^\top\ones{K}}\ones{\F} = L(N-S).
\]

Next, note that $\ex{{\GSG}^\top\GSG}$ is an $N\times N$ matrix with diagonal entries $\ex{\boldsymbol{e}_i^\top\boldsymbol{e}_i}=L$ and off-diagonal entries $\ex{\boldsymbol{e}_i^\top\boldsymbol{e}_j}=\lambda$ for $i\neq j$, where $\boldsymbol{e}_i$ denotes the $i$-th column of $\GSG$. 
Thus, $\ex{{\GSG}^\top\GSG} = (L-\lambda)\ident{N} + \lambda\allones{N\times N}.$
% \[
% \ex{{\GSG}^\top\GSG} = (L-\lambda)\ident{N} + \lambda\allones{N\times N}.
% \]
Therefore,
\begin{align*}
\ones{\F}^\top\ex{{\GSG}^\top\GSG}\ones{\F}&= (L-\lambda)\ones{\F}^\top\ones{\F}
    + \lambda(\ones{\F}^\top\ones{N})(\ones{N}^\top\ones{\F})\\
&= (L-\lambda)(N-S) + \lambda(N-S)^2.
\end{align*}

% Substituting these into the expectation expression gives
% \begin{align*}
% &\ex{\textstyle\sum_{i=1}^{K}X_i^2}\\
% &= K - 2\rho L (N-S)+ \rho^2\big((L-\lambda)(N-S) + \lambda(N-S)^2\big).
% \end{align*}
Hence, $\ex{\textstyle\sum_{i=1}^{K}X_i^2}$ equals the following expression
\begin{align*}
K - 2\rho L (N-S)+ \rho^2\big((L-\lambda)(N-S) + \lambda(N-S)^2\big).
\end{align*}

Setting $\rho = \frac{L}{L+\lambda(N-S-1)}$, this expectation yields
\begin{align*}
&\ex{\textstyle\sum_{i=1}^{K}X_i^2}
% &\quad= K - \tfrac{2L^2(N-S)}{L+\lambda(N-S-1)}+ \tfrac{L^2\big((L-\lambda)(N-S) + \lambda(N-S)^2\big)}{(L+\lambda(N-S-1))^2}\\
= K\left(1 - \tfrac{1}{K}\tfrac{L^2(N-S)}{L + \lambda (N-S-1)}\right) \quad
\end{align*}
By Eq.\eqref{eq:bibd-error}, we have $\mathbb{E}[\textstyle\sum_{i=1}^{K}X_i^2] = K\cdot\err{\GB},$
% \begin{align*}
%     \ex{\textstyle\sum_{i=1}^{K}X_i^2} = K\cdot\err{\GB},
% \end{align*}
where $\GB$ is the encoding matrix of an $(N, K, L, R, \lambda)$ BIBD-GC.
\end{proof}

% \subsection{Proof of Lemma \ref{rmk:parameter-selection}}
% % TODO: complete this proof.
% \begin{proof}
%     \label{proof of rmk:parameter-selection}
% \end{proof}

\subsection{Proof of Lemma \ref{lemma:degree-preserving-sparsifier}}
\label{sec:proof-of-lemma-degree-preserving-sparsifier}
In this proof, we retain the notation introduced in Sec. \ref{sec:sparse-graph-gradient-code-encoding-matrix-construction}. We begin by recalling a key guarantee of the algorithm \textsc{DegreePreservingSparsify}.

\begin{lemma}[modified from {\cite[Theorem 3.3]{degree-preserving-sparsifier}}]
For any $\varepsilon \in (0,1]$, every undirected graph $G$ whose edge weights are
(i) strictly positive and
(ii) polynomially bounded, meaning that each weight satisfies $w_{ij} \le N^{c}$ for some absolute constant $c > 0$,
admits a degree-preserving $\varepsilon$-sparsifier. The algorithm \textsc{DegreePreservingSparsify} takes $G$ as input and returns such a degree-preserving $\varepsilon$-sparsifier $G_\varepsilon$.
\end{lemma}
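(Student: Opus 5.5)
The statement is cited from \cite[Theorem 3.3]{degree-preserving-sparsifier}, so the plan is to reduce it to that theorem by checking its hypotheses rather than reprove the sparsification machinery. The lemma is the black-box guarantee we need: any positive, polynomially bounded weighted graph has a degree-preserving $\varepsilon$-sparsifier, and the algorithm outputs one.

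\textbf{Step 1: recall the cited guarantee.} I will state precisely the form of \cite[Theorem 3.3]{degree-preserving-sparsifier}. For a weighted undirected graph with positive, polynomially bounded weights, and any $\varepsilon\in(0,1]$, the algorithm returns a reweighted subgraph $G_\varepsilon$ such that:
\begin{enumerate}
\item every vertex has the same weighted degree as in $G$;
\item the Laplacians satisfy the multiplicative spectral approximation $e^{-\varepsilon}\lap\preceq\lap_\varepsilon\preceq e^{\varepsilon}\lap$.
\end{enumerate}

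\textbf{Step 2: translate the cited conclusion into the properties listed in Sec.~\ref{sec:sparse-graph-gradient-code-encoding-matrix-construction}.}

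\emph{Subgraph and bipartiteness.} These are immediate, since the output only reweights existing edges, and a subgraph of a bipartite graph is bipartite.

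\emph{Degree preservation.} This is exactly Eq.~\eqref{eq:degree-preserved}.

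\emph{Spectral bound.} The multiplicative approximation gives
\[
-(1-e^{-\varepsilon})\lap\preceq\lap_\varepsilon-\lap\preceq(e^\varepsilon-1)\lap .
\]
Since $1-e^{-\varepsilon}\le e^\varepsilon-1$ and $\lap\succeq 0$, both sides are controlled by $(e^\varepsilon-1)\lap$. The difference $\lap_\varepsilon-\lap$ is symmetric, and its eigenvalues lie in $[-(e^\varepsilon-1)\norm{\lap}_2,(e^\varepsilon-1)\norm{\lap}_2]$. This yields Eq.~\eqref{eq:lap-diff-bound}.

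\textbf{Step 3: the main obstacle, hypothesis (ii).} Polynomially bounded weights need real care for the specific graphs of the construction. Positivity comes from Eq.~\eqref{eq:d-bound}. For the bound, the entries $|X_{ij}|+c$ are Gaussian, hence only bounded with high probability: $\max|X_{ij}|\le\sqrt{6\ln N}$ with probability $1-o(1)$ by a union bound over $N^2$ entries. The appended entries $\boldsymbol m_i$ and $\alpha$ are then bounded by $O(N(d+c+\sqrt{\ln N}))=N^{O(1)}$ when $|d|\le N^{\alpha}$ and $c$ is polynomial. I would rescale all weights by a common factor if the cited theorem also requires a lower bound or integrality, since scaling commutes with all three properties.
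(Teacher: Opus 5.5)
Your proposal is correct and follows the paper's treatment: the statement is taken as a black box from the cited Theorem~3.3, and your Steps~2--3 reproduce the paper's argument for Lemma~\ref{lemma:degree-preserving-sparsifier} (turning $e^{-\varepsilon}\lap\preceq\lap_\varepsilon\preceq e^{\varepsilon}\lap$ into $\norm{\lap-\lap_\varepsilon}_2\le(e^{\varepsilon}-1)\norm{\lap}_2$ via $1-e^{-\varepsilon}\le e^{\varepsilon}-1$, positivity from Eq.~\eqref{eq:d-bound}, and polynomial boundedness by a union bound), in places more carefully than the paper, e.g.\ in using $d<d_{\mathrm{u}}$ to get $\alpha>0$ and in keeping track of the shift $c$. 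Two small remarks: first, (ii) in fact holds deterministically, since once all entries of $\GEP$ are positive and every row sums to $d$, each entry lies in $(0,d)$; second, your fallback of a common rescaling would not supply integrality, because Gaussian-derived weights are a.s.\ irrational, which is why the paper instead rounds to a dyadic grid in Appendix~\ref{apx:short-cycle-decomposition} at the cost of exact degree preservation, and it would not supply a polynomial lower bound on the weights either, because $\boldsymbol{m}_i$ or $\alpha$ can be arbitrarily close to $0$ when $d$ approaches $d_{\mathrm{l}}$ or $d_{\mathrm{u}}$.
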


Before applying the algorithm, we verify that the weighted adjacency matrix $\GEP$ of the resulting sparsifier satisfies the assumptions required for our analysis:
\begin{enumerate}
\item each entry $\GEP_{ij}$ is positive;
\label{condition1}
\item each entry is polynomially bounded, i.e., \(\GEP_{ij} \le N^{C}\) for some constant $C > 0$.
\label{condition2}
\end{enumerate}

% Let $\boldsymbol{\Delta} = \lap_\varepsilon - \lap$.
Once Conditions \ref{condition1}–\ref{condition2} hold, the returned graph $G_\varepsilon$ is indeed a degree-preserving $\varepsilon$-sparsifier of $G$, enabling us to bound the operator norm of the difference between their Laplacian matrices.
\begin{proof}
We first validate the condition \ref{condition1}. Since the entries of $\GEP_0$ are i.i.d. drawn from standard half-normal distribution, the entries are positive. According to the constraints of $d$ summarized in Sec.\ref{sec:sparse-graph-feasible-parameter-regime}, $d> \max\limits_{1\leq i\leq N-1}\sum_{j}^{N-1}(\GEP_0)_{ij}$, which implies that $\boldsymbol{m}_i = d-\sum_{j}^{N-1}(\GEP_0)_{ij}>0$. Recall that
\begin{align*}
    \alpha = (2-N)d + \textstyle\sum_{i=1}^{N-1}\sum_{j=1}^{N-1} (\GEP_0)_{ij},
\end{align*}
where every term is positive, and thus $\alpha>0$. Therefore, the entries of $\GEP_{ij}>0$ for all $1\leq i,j\leq N$.

We then validate the condition \ref{condition2}.
According to the construction described in Sec.\ref{sec:sparse-graph-gradient-code-encoding-matrix-construction}, $(\GEP_0)_{ij}$ follows i.i.d. standard half-normal distribution, that is, for $1\leq i, j\leq N-1$, $\GEP_{ij}$ follows i.i.d. standard half-normal distribution. Therefore, we obtain that, for $1\leq i, j\leq N-1$ and $t_1\geq 0$,
\begin{align}
    \prob{\GEP_{ij}\geq t_1}  &= \prob{(|\gaussian{0}{1}|\geq t_1)} \notag\\
    &= \prob{\gaussian{0}{1}\geq 1} + \prob{\gaussian{0}{1}\leq -1}\notag\\
    &\overset{(a)}{=} 2\prob{\gaussian{0}{1}\geq 1}\notag\\
    &\overset{(b)}{\leq} 2\exp(-t_1^2/2),\notag
\end{align}
where (a) is because of the symmetry of standard Gaussian distribution and (b) follows by applying Chernoff bound on a standard Gaussian random variable.

Let $M_0 = \max_{1\leq i, j\leq N-1}\GEP_{ij}$. By union bound, we have
\begin{align}
\label{eq:max-E_0-tail-bound}
    &\prob{M_0\geq t_1} = \prob{\cup^{(N-1)}_{i, j=1}\{\GEP_{ij}\geq t_2\}}\notag\\
    &\leq \textstyle\sum_{i}^{N-1} \sum_{j}^{N-1} \prob{\GEP_{ij}\geq t_2}\notag\\
    &\leq 2(N-1)^2\exp (-t_2^2/2).
\end{align}

Choose $t_2 = \sqrt{2(1+\delta)\ln(N-1)^2}$, the right-hand-side of Eq.\eqref{eq:max-E_0-tail-bound} can be simplified to $2(N-1)^{2\delta}$.
% \begin{align*}
%     &2(N-1)^2\exp (-t^2/2) \\
%     &\quad= 2(N-1)^2 \exp(-(1+\delta)\log(N-1)^2)\\
%     &\quad= 2(N-1)^2(N-1)^{-2(1+\delta)} = 2(N-1)^{2\delta}.
% \end{align*}
Therefore, with probability $1-2(N-1)^{2\delta}$,
\begin{align*}
    M_0\leq \sqrt{2(1+\delta)\ln(N-1)^2} = \sqrt{4(1+\delta)\ln(N-1)}.
\end{align*}
Since $\sqrt{\ln(N)}  = o(N^{C_1})$ for some fixed $C_1>0$ and $1\leq i, j\leq N-1$, we have
\begin{align}
\label{eq:scheduling-matrix-entries-poly-bounded}
    \GEP_{ij} \leq O(\sqrt{\ln(N)}) = o(N^{C_1}) \leq O(N^{C_1}).
\end{align}
Then we analyze entries of $\boldsymbol{m}$ and $\alpha$, i.e. $\{\GEP_{ij}\}_{i=N}\cup \{\GEP_{ij}\}_{j=N}$. Let $S_i:= \sum_{j=1}^{N-1}(\GEP_0)_{ij}$ for $1\leq i\leq N-1$. So, $\boldsymbol{m}_i = d-S_i$ for $1\leq i\leq N-1$.
Because $(\GEP_0)_{ij}$ follow standard half-normal distribution i.i.d. for $1\leq i\leq j\leq N-1$. Let $\mu = \ex{|\gaussian{0}{1}|} = \sqrt{2/\pi}$ and thus $\ex{S_i} = \mu(N-1)$ by linearity of expectations.

Since a standard half-normal distribution has the same tail as a standard Gaussian distribution, then a standard half-normal distribution is also a sub-Gaussian random variable. According to Definition \ref{def:sub-gaussian}, for some constant $C_2>0$, we have
\begin{align*}
    \prob{|S_i - \mu(N-1)|\geq t_3} \leq 2\exp(-C_2 t_3^2/N).
\end{align*}
Choose $t_3 = C_3\sqrt{N\ln N}$ with $C_3>0$, then
\begin{align*}
    \prob{|S_i - \mu(N-1)|\geq C_3\sqrt{N\ln N}}  
    % &\leq 2\exp(-C_2C_3^2\ln N) \\
    &\leq 2N^{-C_2C_3^2}.
\end{align*}
Take union bound over all $i\leq N-1$, then we have
\begin{align*}
    \mathbb{P}\Big[\max_{1\leq i\leq N-1} |S_i - \mu(N-1)|\geq C_3\sqrt{N\ln N}\Big]&\leq 2N^{1-C_2C_3^2}.
\end{align*}
By triangle inequality, we have
\begin{align*}
    \max_{1\leq i\leq N-1} |S_i| + |\mu(N-1)| \geq \max_{1\leq i\leq N-1} |S_i - \mu(N-1)|.
\end{align*}
Thus, for constant $C_2, C_3 >0$, we have
\begin{align*}
    \max_{1\leq i\leq N-1} |S_i| \leq C_3 \sqrt{N\ln N} - \mu(N-1) = O(N),
\end{align*}
with probability of $1-2N^{1-C_2C_3^2}$. When $C_2C_3^2$ is large enough, $1-2N^{1-C_2C_3^2}\rightarrow 0$.
With assumption of $|d|\leq N^{C}$ with some fixed $C>0$, we obtain that, for $1\leq i\leq N-1$,
\begin{align*}
    |\boldsymbol{m}_i| = |d-S_i| \leq |d|+|S_i| \overset{(c)}{\leq} N^{C} + O(N) \leq C_4N^{\max(C, 1)},
\end{align*}
where (c) follows from triangle inequality.
So every entry in $\boldsymbol{m}$ is bounded by $N^{C_4}$ for some constant $C_4 = \max(C, 1)$.

We now turn to analyze $\alpha$ as follows:
\begin{align*}
    \alpha &= (2-N)d + \textstyle\sum_{i=1}^{N-1}\sum_{j=1}^{N-1} (\GEP_0)_{ij}\\
    &\overset{(d)}{=}(2-N)N^{C} + \textstyle\sum_{i=1}^{N-1} O(N^{C_1})\\
    &= (2-N)N^{C} + O(N^{2C_1}) \leq C_5 N^{\max(2C_1, C+1)},
    % &\overset{(d)}{=} (2-N)N^{C} + O(N^{2C_1}) 
    % \leq C_5 N^{\max(2C_1, C+1)},
\end{align*}
where (d) comes from Eq.\eqref{eq:scheduling-matrix-entries-poly-bounded}.
Similarly, $\alpha$ is also bounded by $N^{C_6}$, for some constant $C_6 = \max(2C_1, C+1)$. Therefore, we complete the validation that the entries of $\GEP$ are polynomially bounded.

% Hence, $G_\varepsilon$ is a degree-preserving $\varepsilon$-sparsifier of $G$. So, for every $x \in \mathbb{R}^{2N}$, Eq.\eqref{eq:epsilon-sparisifer-quadratic-form} holds. Subtracting $\xx{}^\top \lap \xx{}$ from all sides of Eq.\eqref{eq:epsilon-sparisifer-quadratic-form} gives
% \[
%     -\varepsilon \xx{}^\top \lap \xx{} \le \xx{}^\top (\lap_\varepsilon - \lap) \xx{} \le \varepsilon \xx{}^\top \lap \xx{}.
% \]
% Equivalently,
% \[
%     |\xx{}^\top \boldsymbol{\Delta} \xx{}| \le \varepsilon \xx{}^\top \lap \xx{}.
% \]
% For any unit vector $\norm{\xx{}}_2 = 1$, we further have
% \[
%     |\xx{}^\top \boldsymbol{\Delta} \xx{}| \le \varepsilon\, \xx{}^\top \lap \xx{} 
%     \le \varepsilon\, \lambda_{1}(\lap) \norm{\xx{}}_2^2 = \varepsilon\, \norm{\lap}_2.
% \]
% Taking the supremum over all $\norm{\xx{}}_2 = 1$ and using the variational characterization of the spectral norm for symmetric matrices, we obtain
% \[
%     \norm{\boldsymbol{\Delta}}_2 = \sup_{\norm{\xx{}}_2 = 1} |\xx{}^\top \boldsymbol{\Delta} \xx{}| \le \varepsilon \norm{\lap}_2.
% \]
% Thus,
% \[
%     \norm{\lap - \lap_\varepsilon}_2 \le \varepsilon \norm{\lap}_2.
% \]

Assume that $G_\varepsilon$ is a degree-preserving $\varepsilon$-sparsifier of $G$ in the
multiplicative sense, namely, for all $\xx{}\in\mathbb{R}^{2N}$,
\begin{equation}
\label{eq:epsilon-sparsifier-multiplicative}
    e^{-\varepsilon}\,\xx{}^\top \lap \xx{}
    \le
    \xx{}^\top \lap_\varepsilon \xx{}
    \le
    e^{\varepsilon}\,\xx{}^\top \lap \xx{}.
\end{equation}
Subtracting $\xx{}^\top \lap \xx{}$ from all sides of
Eq.~\eqref{eq:epsilon-sparsifier-multiplicative} yields
\[
    \bigl(e^{-\varepsilon}-1\bigr)\,\xx{}^\top \lap \xx{}
    \le
    \xx{}^\top (\lap_\varepsilon-\lap) \xx{}
    \le
    \bigl(e^{\varepsilon}-1\bigr)\,\xx{}^\top \lap \xx{}.
\]
% Starting from the multiplicative sparsifier assumption, for all $\xx{}\in\mathbb{R}^{2N}$,
% \begin{equation*}
% \label{eq:lap-multiplicative-assumption}
%     e^{-\varepsilon}\,\xx{}^\top \lap \xx{}
%     \le
%     \xx{}^\top \lap_\varepsilon \xx{}
%     \le
%     e^{\varepsilon}\,\xx{}^\top \lap \xx{}.
% \end{equation*}
% Subtracting $\xx{}^\top \lap \xx{}$ from all sides of
% Eq.~\eqref{eq:lap-multiplicative-assumption} yields
% \[
%     \bigl(e^{-\varepsilon}-1\bigr)\,\xx{}^\top \lap \xx{}
%     \le
%     \xx{}^\top (\lap_\varepsilon-\lap) \xx{}
%     \le
%     \bigl(e^{\varepsilon}-1\bigr)\,\xx{}^\top \lap \xx{}.
% \]
Since $\lap\succeq 0$, we have $\xx{}^\top \lap \xx{}\ge 0$. Taking absolute values,
\[
    \bigl|\xx{}^\top (\lap_\varepsilon-\lap) \xx{}\bigr|
    \le
    \max\{\,e^{\varepsilon}-1,\;1-e^{-\varepsilon}\,\}\,
    \xx{}^\top \lap \xx{}.
\]
For $\varepsilon\ge 0$, note that
   $ e^{\varepsilon}-1 \;\ge\; 1-e^{-\varepsilon},$
since $e^{\varepsilon}+e^{-\varepsilon}\ge 2$. Therefore, $\bigl|\xx{}^\top (\lap_\varepsilon-\lap) \xx{}\bigr|
    \le
    \bigl(e^{\varepsilon}-1\bigr)\,\xx{}^\top \lap \xx{}.$
% \begin{equation*}
% \label{eq:lap-diff-quadratic-form}
%     \bigl|\xx{}^\top (\lap_\varepsilon-\lap) \xx{}\bigr|
%     \le
%     \bigl(e^{\varepsilon}-1\bigr)\,\xx{}^\top \lap \xx{}.
% \end{equation*}
% Taking absolute values, we obtain
% \begin{equation*}
% \label{eq:lap-diff-quadratic-form}
%     \bigl|\xx{}^\top (\lap_\varepsilon-\lap) \xx{}\bigr|
%     \le
%     \bigl(e^{\varepsilon}-1\bigr)\,\xx{}^\top \lap \xx{}.
% \end{equation*}
For any unit vector $\|\xx{}\|_2=1$, we further have
\[
    \bigl|\xx{}^\top (\lap_\varepsilon-\lap) \xx{}\bigr|
    \le
    \bigl(e^{\varepsilon}-1\bigr)\,\lambda_1(\lap)\,\|\xx{}\|_2^2
    =
    \bigl(e^{\varepsilon}-1\bigr)\,\|\lap\|_2.
\]
Taking the supremum over all $\|\xx{}\|_2=1$ and using the variational
characterization of the spectral norm for symmetric matrices, it follows that
\begin{equation*}
\label{eq:lap-diff-norm-bound}
    \|\lap-\lap_\varepsilon\|_2
    =
    \sup_{\|\xx{}\|_2=1}
    \bigl|\xx{}^\top (\lap_\varepsilon-\lap) \xx{}\bigr|
    \le
    \bigl(e^{\varepsilon}-1\bigr)\,\|\lap\|_2.\qedhere
\end{equation*}
\end{proof}

\subsection{Proof of Lemma \ref{lem:biadj-matrix}}
\label{proof of lem:biadj-matrix}
\begin{proof}
By the construction method described in Sec.\ref{sec:sparse-graph-gradient-code-encoding-matrix-construction}, the adjacency matrix of $G_\varepsilon$ is 
\begin{align*}
    \adj_\varepsilon =\begin{bmatrix} 0 & \GEP_\varepsilon \\[2pt] {\GEP_\varepsilon}^\top & 0 \end{bmatrix}\in\mathbb{R}^{2N\times 2N}.
\end{align*}

In Step 2, every row and every column of the pre–sparsified matrix (i.e. $\GEP$) sums to $d$, and Step~3 preserves weighted degrees at every vertex. So, we have $ \GEP_\varepsilon\ones{N} = d\,\ones{N}$,
% \text{ and } {\GEP_\varepsilon}^\top \ones{N} = d\ones{N},$
% % \begin{align*}
% %     \GEP_\varepsilon\ones{N} = d\,\ones{N} \qquad\text{and}\qquad {\GEP_\varepsilon}^\top \ones{N} = d\ones{N},
% % \end{align*}
% Therefore,
% \begin{align*}
%     \adj_\varepsilon\binom{\ones{N}}{\ones{N}}
% =\binom{\GEP_\varepsilon\ones{N}}{{\GEP_\varepsilon}^\top\ones{N}}
% =\binom{d\ones{N}}{d\ones{N}}
% =d\binom{\ones{N}}{\ones{N}}.
% \end{align*}
which implies $(\ones{N},\ones{N})$ is an eigenvector of $\adj_\varepsilon$ with eigenvalue $d$. It follows that $(\ones{N},\ones{N},d)$ is a singular triplet of $\GEP_\varepsilon$, since
% \begin{align*}
$    {\GEP_\varepsilon}^\top \GEP_\varepsilon\,\ones{N} = {\GEP_\varepsilon}^\top(d\ones{N}) = d\,{\GEP_\varepsilon}^\top \ones{N} = d^2 \ones{N},$
% \end{align*}
i.e., $\ones{N}$ is a right singular vector of $\GEP_\varepsilon$ with singular value $d$. This proves item (i).

For (ii), let $(\u_i(\adj_\varepsilon),\v_i(\adj_\varepsilon), \lambda_i(\adj_\varepsilon))$ be any eigenpair of $\adj_\varepsilon$ with $s,t\in\mathbb{R}^N$. Therefore,
\begin{align*}
   \adj
\begin{bmatrix}
\u_i(\adj_\varepsilon) \\
\v_i(\adj_\varepsilon)
\end{bmatrix}
=
\lambda
\begin{bmatrix}
\u_i(\adj_\varepsilon) \\
\v_i(\adj_\varepsilon)
\end{bmatrix}
\Longleftrightarrow
\begin{cases}
\GEP_\varepsilon\v_i(\adj_\varepsilon)=\lambda \u_i(\adj_\varepsilon),\\
{\GEP_\varepsilon}^\top \u_i(\adj_\varepsilon)=\lambda \v_i(\adj_\varepsilon).
\end{cases}
\end{align*}
Thus, we have
$
{\GEP_\varepsilon}^\top \GEP_\varepsilon \v_i(\adj_\varepsilon)=\lambda_i(\adj_\varepsilon)^2 \v_i(\adj_\varepsilon)
$ and $
\GEP_\varepsilon{\GEP_\varepsilon}^\top \u_i(\adj_\varepsilon)=\lambda_i(\adj_\varepsilon)^2 \u_i(\adj_\varepsilon)
$,
so the eigenvalues of $\adj$ are $\{\pm\sigma_i(\GEP_\varepsilon)\}_{i=1}^N$. Since the underlying bipartite graph is connected (the dense construction in Step~2 yields a connected graph and the degree-preserving sparsifier keeps it connected), the Perron–Frobenius theorem gives $\lambda_1(\adj_\varepsilon)=d$ and $\lambda_2(\adj_\varepsilon)<d$. Hence, $\sigma_1(\GEP_\varepsilon)=d$ and $\sigma_2(\GEP_\varepsilon)<d,$
% \begin{align*}
%     \sigma_1(\GEP_\varepsilon)=d
% \qquad\text{and}\qquad
% \sigma_2(\GEP_\varepsilon)<d,
% \end{align*}
which is item (ii).
\end{proof}

\subsection{Proof of Lemma \ref{lem:l2norm_of_aF}}
\label{proof of lem:l2norm_of_aF}
\begin{proof}
By construction,
\[
\ones{N}^\top \a_\F
= (N-S)\left(-\tfrac{S}{N-S}\right)+S\cdot 1
= -S+S
=0.
\]
Hence $\a_\F\perp\ones{N}$. By Lemma \ref{lem:biadj-matrix}, we have $\a_\F\perp\v_1(\GEP)$. Because
$\{\v_1(\GEP),\dots,\v_N(\GEP)\}$ is an orthonormal basis, this implies
\[
\a_\F\in \mathrm{span}\{\v_2(\GEP),\dots,\v_N(\GEP)\},
\]
so there exist $\alpha_2,\dots,\alpha_N$ with
$\a_\F=\sum_{i=2}^N \alpha_i\v_i(\GEP)$.

From the definition of $\a_\F$ (see Eq.\eqref{eq:decoding_vec}), we have
\[
\norm{\a_\F}_2^2
= (N-S)\left(\tfrac{S}{N-S}\right)^{\!2}
+ S\cdot 1^2
= \tfrac{S^2}{N-S}+S
= \tfrac{NS}{N-S}.
\]

Since $\{\v_i(\GEP)\}_{i=1}^N$ is orthonormal and $\a_\F\perp\v_1(\GEP)$, we have
$\norm{\a_\F}_2^2
= \textstyle\sum_{i=2}^N \bigl|\alpha_i\bigr|^2.$
Therefore, it follows that
$\sum_{i=2}^N \alpha_i^2=\tfrac{NS}{N-S}$. Thus,
$\norm{\a_\F}_2=\sqrt{\tfrac{NS}{N-S}}$.
\end{proof}

% \subsection{Proof of Lemma \ref{lem:l2norm_of_aF}}
% \label{proof of lem:l2norm_of_aF}
% \begin{proof}
% By construction,
% % \[
% $\ones{N}^\top \a_\F
% = (N-S)\left(-\tfrac{S}{N-S}\right)+S\cdot 1
% = -S+S
% =0.$
% % \]
% Hence $\a_\F\perp\ones{N}$. By Lemma \ref{lem:biadj-matrix}, we have $\a_\F\perp\v_1(\GEP)$. Because
% $\{\v_1(\GEP),\dots,\v_N(\GEP)\}$ is an orthonormal basis, this implies
% % \[
% $\a_\F\in \mathrm{span}\{\v_2(\GEP),\dots,\v_N(\GEP)\},$
% % \]
% so there exist $\alpha_2,\dots,\alpha_N$ with
% $\a_\F=\sum_{i=2}^N \alpha_i\v_i(\GEP)$.

% From the definition of $\a_\F$ (see Eq.\eqref{eq:decoding_vec}), we have
% \[
% \norm{\a_\F}_2^2
% = (N-S)\left(\tfrac{S}{N-S}\right)^{\!2}
% + S\cdot 1^2
% = \tfrac{S^2}{N-S}+S
% = \tfrac{NS}{N-S}.
% \]

% Since $\{\v_i(\GEP)\}_{i=1}^N$ is orthonormal and $\a_\F\perp\v_1(\GEP)$, we have
% $\norm{\a_\F}_2^2
% = \textstyle\sum_{i=2}^N \bigl|\alpha_i\bigr|^2.$
% Therefore, it follows that
% $\sum_{i=2}^N \alpha_i^2=\tfrac{NS}{N-S}$. Thus,
% $\norm{\a_\F}_2=\sqrt{\tfrac{NS}{N-S}}$.
% \end{proof}

\subsection{Proof of Lemma \ref{lemma:expectation-of-exponential-of-the-linear-of-a-Gaussian-random-variable}}
\label{proof of lemma:expectation-of-exponential-of-the-linear-of-a-Gaussian-random-variable}
\begin{proof} Let $G\sim \gaussian{\mu}{\sigma^2}$. The corresponding PDF is $\frac{1}{\sigma\sqrt{2\pi}}\int_{-\infty}^{\infty} \exp{(-\frac{1}{2}(\frac{x-\mu}{\sigma})^2)}$.
Therefore,
\begin{align*}
    &\ex{\exp{(\alpha G^2 + \theta G)}} \\
    &\quad= \tfrac{1}{\sigma \sqrt{2\pi}} \int_{-\infty}^{\infty} \exp{(\alpha x^2 + \theta x)}\exp{(-\tfrac{1}{2} (\tfrac{x-\mu}{\sigma})^2)} dx\\
    &\quad= \tfrac{1}{\sigma\sqrt{2\pi}}\int_{-\infty}^{\infty}\exp{\left(\alpha x^2 + \theta x - \tfrac{1}{2} (\tfrac{x-\mu}{\sigma})^2\right)} dx.
\end{align*}
Noting the identity
\begin{align*}
    &\alpha x^2 + \theta x - \tfrac{1}{2} (\tfrac{x-\mu}{\sigma})^2 = \alpha x^2 + \theta x - \tfrac{1}{2\sigma^2}(x^2 -2\mu x + \mu^2)\\
    % &\quad = (\alpha - \tfrac{1}{2\sigma^2}) x^2 + (\theta + 2\mu) x - \tfrac{\mu^2}{2\sigma^2}\\
    &\quad= \tfrac{2\alpha \sigma^2 - 1}{2\sigma^2} (x + \tfrac{\sigma^2(\theta + 2\mu)^2}{2\alpha \sigma^2 -1})^2- \tfrac{\sigma^2(\theta+2\mu)^2}{4\alpha \sigma^2-2} - 
    \tfrac{\mu^2}{2\sigma^2}
\end{align*}
and applying a change of variable, we obtain
\begin{align*}
    &\ex{\exp{(\alpha G^2 + \theta G)}} \\
    &= \exp{\big(- \tfrac{\sigma^2(\theta+2\mu)^2}{4\alpha \sigma^2-2} - \tfrac{\mu^2}{2\sigma^2}\big)} \tfrac{1}{\sigma\sqrt{2\pi}}\cdot \int_{-\infty}^{\infty}\exp{\big(-\tfrac{1/\sigma^2 - 2\alpha}{2} x^2\big)} dx\\
    &= \exp{\big(- \tfrac{\sigma^2(\theta+2\mu)^2}{4\alpha \sigma^2-2} - \tfrac{\mu^2}{2\sigma^2}\big)}\tfrac{1}{\sqrt{1/\sigma^2 - 2\alpha}}.
\end{align*}
If $G$ is a standard normal random variable, i.e., $G\sim \gaussian{0}{1}$, 
\begin{equation*}
    \ex{\exp{(\alpha G^2 + \theta G)}} = \tfrac{1}{\sqrt{1-2\alpha}} \exp{(\tfrac{\theta^2}{2- 4\alpha})}.\qedhere
\end{equation*}
\end{proof}

% \vspace{-1.5em}

\section{Algorithms for Degree Preserving Sparsifier}
\label{apx:short-cycle-decomposition}
We cite \textsc{DegreePreservingSparsify} and related Algorithms from \cite{degree-preserving-sparsifier} in this appendix.
It is worth noting that our construction permits positive real-valued edge weights. For practical implementation, we approximate the positive real edge weights using a fixed-point representation. Specifically, for a chosen precision parameter $\kappa = 2^{k}$, where $k\in \mathbb{N}$, we define
\[
\tilde{w}_e = \left\lfloor \kappa w_e \right\rceil \in \mathbb{N},
\]
where $\lfloor \cdot \rceil$ denotes rounding to the nearest integer. We then apply \textsc{DegreePreservingSparsify} to the integer weights $\tilde{w}_e$, and rescale the output by $1/\kappa$. 

This procedure introduces a quantization error of at most $1/\kappa$ per edge, which can be made arbitrarily small by choosing $k$ sufficiently large. In practice, this provides a tractable way to obtain integer weights while preserving the structure of the problem up to a controlled approximation.

% \begin{algorithm}[hp]
%     \caption{\textsc{DegreePreservingSparsify}$(G,\varepsilon)$}
%     \label{alg:degree-preserving-sparsification}
%     \KwIn{Graph $G$ with polynomially bounded edge weights.}
%     Choose a scaling factor $\kappa = 2^k$ (for some integer $k$) such that
%     $\kappa w_e \in \mathbb{N}$ for every edge weight $w_e$ of $G$.
    
%     Construct $G^{(\mathrm{int})}$ by scaling all edge weights:
%     for each edge $e$, set $w_e^{(\mathrm{int})} \gets \kappa w_e$.
    
%     Decompose each edge of $G$ by its binary representation. 
%     Now the edge weights of $G$ are powers of $2$ and are at most $m \log n$ in number.
    
%     Compute $r$, a $1.5$-approximate estimate of effective resistances in $G$.
    
%     \While{$|E(G)| \ge \Omega(\hat m \log n + nL\varepsilon^{-2}\log n)$}{
%        $G \gets \textsc{SparsifyOnce}(G, r, \textsc{ShortCycleDecomp})$
%     }
%     Obtain $G_\varepsilon$ by rescaling all edge weights of $G^{(\mathrm{int})}$ back:
%     for each edge $e$, set $w_e(G_\varepsilon) \gets w_e^{(\mathrm{int})}/\kappa$.
%     \Return $G$
% \end{algorithm}

\begin{algorithm}[!htbp]
\caption{\textsc{DegreePreservingSparsify}$(G,\varepsilon)$}
\label{alg:degree-preserving-sparsification}
\KwIn{Weighted graph $G$}
Choose $\kappa = 2^k$ such that $\kappa w_e \in \mathbb{N}$ for all edges $e$.

Scale weights: $w_e \gets \kappa w_e$ for all $e$.

Decompose edges so that all weights are powers of $2$.

Compute $r$, a $1.5$-approximation of effective resistances in $G$.

\While{$|E(G)| \ge \Omega(\hat m \log n + nL\varepsilon^{-2}\log n)$}{
    $G \gets \textsc{SparsifyOnce}(G, r, \textsc{ShortCycleDecomp})$
}
Rescale weights: $w_e \gets w_e / \kappa$ for all $e$.
\Return $G$
\end{algorithm}

\begin{algorithm}[!htbp]
\caption{\textsc{NaiveCycleDecomp}$(G)$}
\label{alg:naive-cycle-decomp}
\KwIn{Graph $G(V,E)$}
\KwOut{Cycle decomposition $\mathcal{C}$}

\While{$\exists\, u \in V$ with $\deg(u) \leq 2$}{
    Remove $u$ and its incident edges from $G$
}
\Return $\mathcal{C}$
\end{algorithm}

\begin{algorithm}[!htbp]
\caption{\textsc{ExtractBdDegGraph}$(G,\Delta)$}
\label{alg:extract-bounded-degree}
\KwIn{Graph $G(V,E)$, threshold $\Delta$}
\KwOut{Graph $H$ with degree $O(\Delta)$}

Initialize $H \gets \emptyset$.

\ForEach{vertex $u \in V$}{
   Add up to $\Delta$ incident edges of $u$ to $H$.
}

\ForEach{vertex $v$ in $H$ with $\deg(v) > 2\Delta$}{
   Split $v$ into $\left\lfloor \tfrac{\deg(v)}{\Delta} \right\rfloor$ vertices with balanced degrees.
}

\Return $H$
\end{algorithm}

% \begin{algorithm}[hp]
% \caption{\textsc{ShortCycleDecomp}$(G,l,k)$}
% \label{alg:short-cycle-decomp}
% \KwIn{Graph $G(V,E)$ with $n$ vertices and $m$ edges, number of recursion layers $l$, reduction factor $k$.}
% \KwOut{A decomposition of $E$ into a union of cycles $\mathcal{C}$ and a set $E_{\text{extra}}$ of extra edges.}

% \If{$l = 0$ \textbf{ or } $|V(G)| < k$}{
%    \Return \textsc{NaiveCycleDecomp}$(G)$\;
% }

% Set $\Delta \gets \left(64\cdot10^6\gamma_{\text{NS}}(n)^4 \log^2(2n)\right)^l k$, where $\gamma_{\text{NS}}(n) = \exp(O(\sqrt{\log n\log\log n}))$ for graphs with $2n$ vertices.

% Initialize $\mathcal{C}$ and $E_{\text{extra}}$ as empty.

% \While{$G$ has vertices remaining}{
%    (a) Repeatedly remove any vertex of $G$ with degree $< \Delta$, and add its incident edges to $E_{\text{extra}}$.
%    (b) $H \gets \textsc{ExtractBdDegGraph}(G,\Delta)$.
%    (c) $(S,\mathcal{C}_{\text{partial}}) \gets \textsc{MoveEdges}(H,k)$.
%    (d) Create $H_S$ from $S$ and $\mathcal{C}_{\text{partial}}$ (as described in Lemma 8.2).
%    (e) $\mathcal{C}_{H_S} \gets \textsc{ShortCycleDecomp}(H_S,l-1,k)$.
%    (f) Extend each cycle in $\mathcal{C}_{H_S}$ to a circuit in $H$ and in turn $G$ via $\mathcal{C}_{\text{partial}}$ (according to Lemma 8.2). Split these circuits into cycles, add them to $\mathcal{C}$, and remove them from $G$.
% }
% \Return{$\mathcal{C},E_{\text{extra}}$}
% \end{algorithm}

\begin{algorithm}[!htbp]
\caption{\textsc{ShortCycleDecomp}$(G,l,k)$}
\label{alg:short-cycle-decomp}
\KwIn{Graph $G(V,E)$, recursion depth $l$, parameter $k$}
\KwOut{Cycles $\mathcal{C}$ and extra edges $E_{\text{extra}}$}

\If{$l = 0$ \textbf{or} $|V| < k$}{
   \Return \textsc{NaiveCycleDecomp}$(G)$
}

Set $\Delta \gets \left(64\cdot10^6\,\gamma_{\text{NS}}(n)^4 \log^2 n\right)^l k$.

Initialize $\mathcal{C} \gets \emptyset$, $E_{\text{extra}} \gets \emptyset$.

\While{$G$ is non-empty}{
   Remove vertices of degree $< \Delta$ and add their incident edges to $E_{\text{extra}}$.

   $H \gets \textsc{ExtractBdDegGraph}(G,\Delta)$

   $(S,\mathcal{C}_{\text{partial}}) \gets \textsc{MoveEdges}(H,k)$

   Construct $H_S$ from $(S,\mathcal{C}_{\text{partial}})$

   $\mathcal{C}_{H_S} \gets \textsc{ShortCycleDecomp}(H_S,l-1,k)$

   Lift $\mathcal{C}_{H_S}$ to cycles in $G$, add to $\mathcal{C}$, remove from $G$
}

\Return{$(\mathcal{C}, E_{\text{extra}})$}
\end{algorithm}

\begin{algorithm}[!htbp]
\caption{\textsc{MoveEdges}$(G,k)$}
\label{alg:move-edges}
\KwIn{Graph $G=(V,E)$, parameter $k$}
\KwOut{Vertex set $S$ and a partial cycle decomposition onto $S$}

Set $\alpha \gets d_{\min}/(4\gamma_{\text{NS}}(n))$.

$(E^s,E^d) \gets \textsc{NSExpanderDecompose}(G,\alpha)$.

Initialize $S \gets \emptyset$.

\ForEach{connected component $H$ of $E^s$}{
   \eIf{$|V(H)| \le k$}{
      Apply \textsc{NaiveCycleDecomp}$(H)$ and record the resulting cycles.
   }{
      Apply \textsc{MoveEdgesExpander}$(H,\alpha/d_{\max},k)$ and record the output.
   }
}

\Return $S$ and the union of the recorded partial decompositions
\end{algorithm}

% \begin{algorithm}
% \caption{\textsc{MoveEdgesExpander}$(G, \phi, k)$}
% \label{alg:move-edges-expander}
% \KwIn{Undirected unweighted graph $G=(V,E)$ with $n \geq k$ vertices and $m \geq n$ edges; 
%       $\phi$, a lower bound on the conductance of $G$; 
%       reduction factor $k$.}
% \KwOut{A set $S$ of $\lceil n/k \rceil$ vertices and a partial cycle decomposition of $G$ onto $S$.}

% Pair up multiple edges to form cycles.

% Pick $S$ consisting of the $\lceil n/k \rceil$ vertices of maximum degree in $G$.

% \ForEach{edge $e=(u,v)\in E$}{
%    (a) Generate $4k$ lazy random walks each from $u$ and $v$, of length $10\phi^{-2}\log n$. 
%    A lazy random walk stays at the current vertex with probability $1/2$ at each step.
%    (b) Discard walks that use $e$.
%    (c) Pick any one walk each from $u$ and $v$ (if there is one) that terminates in $S$ and convert them into simple paths. 
%    Add the corresponding cycle in $G/S$ to consideration
% }

% Greedily pick a set of edges whose corresponding cycles are edge-disjoint.

% \If{fewer than $\tfrac{\phi^4 m}{2\cdot 10^3 k \log^2 n}$ cycles are formed}{
%    Go to Step 3
% }
% \Return $S$ and the partial cycle decomposition
% \end{algorithm}

\begin{algorithm}[!htbp]
\caption{\textsc{MoveEdgesExpander}$(G,\phi,k)$}
\label{alg:move-edges-expander}
\KwIn{Graph $G=(V,E)$, conductance lower bound $\phi$, parameter $k$}
\KwOut{Vertex set $S$ and a partial cycle decomposition onto $S$}

Pair parallel edges to form cycles.

Set $S$ as the $\lceil n/k \rceil$ highest-degree vertices.

\ForEach{edge $e=(u,v) \in E$}{
   Generate $4k$ lazy random walks\footnote{A lazy random walk stays at the current vertex with probability $1/2$ at each step.} from each of $u$ and $v$ of length $O(\phi^{-2}\log n)$.

   Discard walks traversing $e$.

   If both endpoints yield walks terminating in $S$, form a cycle via the induced paths in $G/S$.
}

Select a maximal set of edge-disjoint cycles.

\If{fewer than $\Omega\!\left(\frac{\phi^4 m}{k\log^2 n}\right)$ cycles are obtained}{
   repeat the process
}

\Return $S$ and the selected cycles
\end{algorithm}

\FloatBarrier
% \section*{Acknowledgment}
% \jossie{TODO}

% \newpage
\bibliographystyle{IEEEtran}
\bibliography{bibliofile.bib}

% \begin{thebibliography}{1}
% \bibliographystyle{IEEEtran}

% \bibitem{ref1}
% {\it{Mathematics Into Type}}. American Mathematical Society. [Online]. Available: https://www.ams.org/arc/styleguide/mit-2.pdf

% \bibitem{ref2}
% T. W. Chaundy, P. R. Barrett and C. Batey, {\it{The Printing of Mathematics}}. London, U.K., Oxford Univ. Press, 1954.

% \bibitem{ref3}
% F. Mittelbach and M. Goossens, {\it{The \LaTeX Companion}}, 2nd ed. Boston, MA, USA: Pearson, 2004.

% \bibitem{ref4}
% G. Gr\"atzer, {\it{More Math Into LaTeX}}, New York, NY, USA: Springer, 2007.

% \bibitem{ref5}M. Letourneau and J. W. Sharp, {\it{AMS-StyleGuide-online.pdf,}} American Mathematical Society, Providence, RI, USA, [Online]. Available: http://www.ams.org/arc/styleguide/index.html

% \bibitem{ref6}
% H. Sira-Ramirez, ``On the sliding mode control of nonlinear systems,'' \textit{Syst. Control Lett.}, vol. 19, pp. 303--312, 1992.

% \bibitem{ref7}
% A. Levant, ``Exact differentiation of signals with unbounded higher derivatives,''  in \textit{Proc. 45th IEEE Conf. Decis.
% Control}, San Diego, CA, USA, 2006, pp. 5585--5590. DOI: 10.1109/CDC.2006.377165.

% \bibitem{ref8}
% M. Fliess, C. Join, and H. Sira-Ramirez, ``Non-linear estimation is easy,'' \textit{Int. J. Model., Ident. Control}, vol. 4, no. 1, pp. 12--27, 2008.

% \bibitem{ref9}
% R. Ortega, A. Astolfi, G. Bastin, and H. Rodriguez, ``Stabilization of food-chain systems using a port-controlled Hamiltonian description,'' in \textit{Proc. Amer. Control Conf.}, Chicago, IL, USA,
% 2000, pp. 2245--2249.

% \end{thebibliography}

\newpage

% \section{Biography Section}
% If you have an EPS/PDF photo (graphicx package needed), extra braces are
%  needed around the contents of the optional argument to biography to prevent
%  the LaTeX parser from getting confused when it sees the complicated
%  $\backslash${\tt{includegraphics}} command within an optional argument. (You can create
%  your own custom macro containing the $\backslash${\tt{includegraphics}} command to make things
%  simpler here.)
 
% \vspace{11pt}

% \bf{If you include a photo:}\vspace{-33pt}
% \begin{IEEEbiography}[{\includegraphics[width=1in,height=1.25in,clip,keepaspectratio]{fig1}}]{Michael Shell}
% Use $\backslash${\tt{begin\{IEEEbiography\}}} and then for the 1st argument use $\backslash${\tt{includegraphics}} to declare and link the author photo.
% Use the author name as the 3rd argument followed by the biography text.
% \end{IEEEbiography}

% \vspace{11pt}

% \bf{If you will not include a photo:}\vspace{-33pt}
% \begin{IEEEbiographynophoto}{John Doe}
% Use $\backslash${\tt{begin\{IEEEbiographynophoto\}}} and the author name as the argument followed by the biography text.
% \end{IEEEbiographynophoto}

\vfill

\end{document}